\theoremstyle{plain}
\newtheorem{theorem}{Theorem}[section]
\newtheorem{lemma}[theorem]{Lemma}
\theoremstyle{remark}
\newtheorem{definition}[theorem]{Definition}
\newtheorem*{example}{Example}
\definecolor{mygreen}{rgb}{0.1,0.75,0.2}
\DeclareSymbolFont{bbold}{U}{bbold}{m}{n}
\DeclareSymbolFontAlphabet{\mathbbold}{bbold}
\newcommand{\spt}{\textup{spt}}
\newcommand{\R}{\mathbb{R}}
\numberwithin{equation}{section}
\newtheorem{proposition}[theorem]{Proposition}
\theoremstyle{remark}
\newtheorem{remark}[theorem]{Remark}
\begin{document}

\begin{frontmatter}
\title{Statistical inference of convex order by Wasserstein projection}
%\title{A sample article title with some additional note\thanksref{t1}}
\runtitle{Statistical inference of convex order}
%\thankstext{T1}{A sample additional note to the title.}

\begin{aug}
%%%%%%%%%%%%%%%%%%%%%%%%%%%%%%%%%%%%%%%%%%%%%%%
%% Only one address is permitted per author. %%
%% Only division, organization and e-mail is %%
%% included in the address.                  %%
%% Additional information can be included in %%
%% the Acknowledgments section if necessary. %%
%% ORCID can be inserted by command:         %%
%% \orcid{0000-0000-0000-0000}               %%
%%%%%%%%%%%%%%%%%%%%%%%%%%%%%%%%%%%%%%%%%%%%%%%
\author[A]{\fnms{Jakwang}~\snm{Kim}\ead[label=e1]{jakwang.kim@math.ubc.ca}},
\author[A]{\fnms{Young-Heon}~\snm{Kim}\ead[label=e2]{yhkim@math.ubc.ca}}
\author[C]{\fnms{Yuanlong}~\snm{Ruan}\ead[label=e3]{ruanyl@buaa.edu.cn}}
\and
\author[A]{\fnms{Andrew}~\snm{Warren}\ead[label=e4]{awarren@math.ubc.ca}}
%%%%%%%%%%%%%%%%%%%%%%%%%%%%%%%%%%%%%%%%%%%%%%
%% Addresses                                %%
%%%%%%%%%%%%%%%%%%%%%%%%%%%%%%%%%%%%%%%%%%%%%%
\address[A]{Department of Mathematics, University of British Columbia, Vancouver, British Columbia, Canada \printead[presep={,\ }]{e1,e2,e4}}

%\address[D]{The Pacific Institute for the Mathematical Sciences}

\address[C]{School of Mathematical Sciences, Beihang University, Beijing, China \printead[presep={,\ }]{e3}}
\end{aug}

\begin{abstract}
Ranking distributions according to a stochastic order has wide applications in diverse areas. Although stochastic dominance has received much attention, convex order, particularly in general dimensions, has yet to be investigated from a statistical point of view. This article addresses this gap by introducing a simple statistical test for convex order based on the Wasserstein projection distance. This projection distance not only encodes whether two distributions are indeed in convex order, but also quantifies the deviation from the desired convex order and produces an optimal convex order approximation. Lipschitz stability of the backward and forward Wasserstein projection distance is proved, which leads to elegant consistency and concentration results of the estimator we employ as our test statistic. Combining these with state of the art results regarding the convergence rate of empirical distributions, we also derive upper bounds for the $p$-value and type I error of our test statistic, as well as upper bounds on the type II error for an appropriate class of strict alternatives. With proper choices of families of distributions, we further attain that the power of the proposed test increases to one as the number of samples grows to infinity. Lastly, we provide an efficient numerical scheme for our test statistic, by way of an entropic Frank-Wolfe algorithm. Experiments based on synthetic data sets illuminate the success of our approach.

\end{abstract}

\begin{keyword}[class=MSC]
\kwd[Primary ]{62G99}
\kwd{49Q22}
\kwd[; secondary ]{62D99}
\end{keyword}

\begin{keyword}
\kwd{statistical inference}
\kwd{convex order}
\kwd{optimal transport}
\kwd{Wasserstein projection}
\kwd{Lipschitz stability}
\kwd{Entropic Frank-Wolfe algorithm}
\end{keyword}

\end{frontmatter}
%%%%%%%%%%%%%%%%%%%%%%%%%%%%%%%%%%%%%%%%%%%%%%
%% Please use \tableofcontents for articles %%
%% with 50 pages and more                   %%
%%%%%%%%%%%%%%%%%%%%%%%%%%%%%%%%%%%%%%%%%%%%%%
%\tableofcontents

\section{Introduction}
Decision makers constantly face the question of making optimal choices under uncertainty. As a fundamental step towards such decision-making, determining certain stochastic order or ranking of distributions in question arises in a number of scenarios. Typically, a stochastic order is a reflection of preference, for example in economics. In other scenarios, stochastic ordering is not part of a decision-making procedure, but instead serves as a basic assumption under which subsequent investigations are performed. We refer the readers to \cite{shaked2007stochastic, stochasticorder2018} and references therein for more details. In either situation, it is essential to have statistical tools to find out the order of distributions or test whether they have the anticipated order.

Precisely, a stochastic order between two distributions $\mu$ and $\nu$ on
$\mathbb{R}^{d}$ is defined through their actions on a given set $\mathcal{A}%
$\ of admissible test functions on $\mathbb{R}^{d}$,%
\begin{equation}
\int\varphi   d\mu\leq\int\varphi 
d\nu\text{ for all }\varphi\in\mathcal{A}. \label{Inq:StochDomClass}
\end{equation}
Then $\mu$ is said to dominate $\nu$ in the stochastic order determined by
$\mathcal{A}$, written $\mu\preceq_{\mathcal{A}}\nu$.

In economics, finance, actuarial science and other fields relating to the theory of risk, an admissible set $\mathcal{A}$ consists of test functions that reflect certain preference, which are termed \emph{utilities}. Consider, for example, on $\mathbb{R}$, 
\[
\mathcal{A}_{k}\mathcal{=}\left\{  \varphi: \mathbb{R\mapsto R}:\left(
-1\right)  ^{i+1}\varphi^{\left(  i\right)  }\geq0,i=1,...,k\right\}
,\text{ }k\geq1.
\]
Each $\mathcal{A}_{k}$ produces a stochastic order, usually called the $k$-th
order \emph{stochastic dominance}.
The first order stochastic dominance which compares mean values of two distributions is frequently used by economists to study welfare, poverty and inequality \cite{davidson2000statistical}. In the context of investment, $\mu$ and $\nu$ could be distributions of returns from two investment strategies or portfolios. Then the first order stochastic dominance\ gives investors partial information to decide which strategy to choose depending on their own preferences for return. Higher order stochastic dominance serves similar purposes of providing partial information to investors. The preferences of investors are encoded in the utility class $\mathcal{A}_{k}$. The application of stochastic order in the context of economics and finance can be found in several papers: see \cite{whitmore1970third, whitmore1978stochastic, menezes1980increasing, ekern1980increasing, shorrocks1987transfer}.

Owing to its importance, the study of stochastic dominance including statistical testing of any pre-specified order has received much attention and is now studied well. We refer to \cite{mcfadden1989testing,  barrett2003consistent, linton2005consistent} 
 and references
therein for discussions in this respect. Comparatively, \emph{convex order}, which is still widely used in the literature (see \cite{shaked2007stochastic, levy2015stochastic}), has not been well-studied from a statistical standpoint, and we will focus on it in this paper. A distribution $\mu$ on $\mathbb{R}^{d}$ is said to dominate another distribution $\nu$ on $\mathbb{R}^{d}$ in convex order, written $\mu\preceq\nu,$ if
\eqref{Inq:StochDomClass} holds for $\mathcal{A}$ being the set of all convex functions on $\mathbb{R}^{d}$, i.e.,
\begin{equation}\label{Inq:CvxClass}
    \int\varphi  d\mu\leq\int\varphi d\nu\text{ for all }\varphi\text{ convex on }\mathbb{R}^{d}.
\end{equation}
Note that convex order is related to, but different from the above $k$-th order stochastic
dominance. However, when comparing distributions of equal mean, convex order will serve the same purpose as the second order stochastic dominance, e.g., in selecting portfolio according to risk preference. Our main question in this paper is the following:
\begin{align*}
    &\hbox{How do we statistically and quantitatively test the convex order relation}\\ 
   & \hbox{between two distributions, in an effective way?}
\end{align*}

To give some motivations for this question, we first note that convex order has a natural connection with martingales. A necessary and
sufficient condition for two distributions to be in convex order is that there exists a martingale coupling between them \cite{strassen1965existence}. On top
of this fact, several research directions have emerged. Among them are model-independent bounds for derivative pricing \cite{beiglbock2013model, galichon2014stochastic}, model-free version of the fundamental theorem of asset pricing \cite{acciaio2016model} and martingale optimal transport: see \cite{hobson2012robust, MR3127880, MR3161649, dolinsky2014martingale, Beiglb_Juillet16, henry2016explicit, campi2017change, completedualityMOT17, hou2018robust, structureMOT19,  IrreducibleMOT19,  computationalMOT19, MR4152642, cheridito2021martingale}.

In all these studies, the convex order is unanimously used as a basic assumption for further queries, however, the study on the convex order itself is rare. Clearly, this raises the question of determining whether such an assumption is indeed satisfied. This question becomes even more prominent when attempting to apply these upstream theories to practical problems. Therefore, be it in theory or application, it is an important initial step to test whether two distributions are in convex order. It is one
of the purposes of the article to fill this gap, i.e. to design a simple yet powerful statistical tool for convex order in general dimensions.

Another purpose of designing a statistical tool for convex order is driven by applications in decision making. One such example is testing of arbitrage. The Fundamental Theorem of Asset Pricing shows that the absence of arbitrage is equivalent to the existence of a risk-neutral martingale probability (that is equivalent to the physical market measure). The risk-neutral probability is unique if the market is complete. In practice, a risk-neutral martingale probability is not directly observable but instead marginally obtained via calibrations to option prices trading on the market. Specifically, under the market completeness assumption, given option prices with a continuum of strike prices at two maturities $T_{1}<T_{2},$ the classical procedure \cite{breeden1978prices, banz1978prices} then extracts the marginal distributions, say $\mu_{T_{1}}$ and $\mu_{T_{2}},$ of the risk-neutral martingale probability implied by the market. Note that the procedure
only gives us marginal distributions, not the whole risk-neutral martingale probability. In reality, the procedure can be hampered by practical reasons, e.g., the liquidity of the option contracts and that option prices are only available for a short range of strike prices \cite{figlewski2008estimating, figlewski2018risk}. Thus, a good extracting procedure should address these difficulties and importantly be consistent across maturities, i.e., $\mu_{T_{1}}\preceq\mu_{T_{2}}.$ Clearly being able to statistically test the convex order relationship between $\mu_{T_{1}},$ $\mu_{T_{2}}$ would be useful for the task of tuning an effective extracting procedure. Additionally, a detectable deviation from the expected convex order that $\mu_{T_{1}}\preceq\mu_{T_{2}}$ would imply the existence of arbitrage. Although sometimes arbitrage opportunities can be beneficial for financial markets, they can also impair market liquidity \cite{foucault2017toxic}. Detection of arbitrage, equivalently, violation of the convex order between the marginals, will enable policy makers to enforce regulatory constraints so as to contain harmful arbitrage opportunities.

Convex order has equally intriguing development in economics. A classical question in labor economics is the production relationship between workers and firms. The central question is how workers are matched to firms in such a way that a country's production output is maximized. A worker may possess different or multidimensional skills, while a firm needs a combination of selected skills to complete a task. A firm may be interested in one of the various skills of a worker, but it can only hire the worker as a whole, together with other less interested skills. The fact that a worker's skills cannot be \emph{physically unbundled} has made the labor economics problem difficult to solve.  An elegant idea, which is recently introduced in \cite{chone2021matching, nordstrom2022workers} is to
\emph{mathematically unbundle} a worker's skills: the ideal (unrealistic) situation in which a firm has access to all skill components a person is endowed with. Of course, it is impossible for firms to access each individual skill of a person. Instead, the authors show that firms choose the closest wage distribution among those which are dominated by the distribution of workers' aggregated skills in convex order to minimize the production cost, and a pair of that closest wage distribution and workers' skills forms a market equilibrium. In this context, testing of convex order between wage distribution and distribution of aggregated skills can be leveraged to test whether a labor market is at an equilibrium, and would clearly be valuable to governments when policies are being made.

%{\color{red} Lastly, we mention an application of convex order in the theory of mechanism design. Very briefly, consider the situation of a monopolist selling a variety of goods to a pool of agents, who seeks to maximize revenue given knowledge of the agents' preferences/utilities. Here, a \emph{mechanism} denotes a function which takes an agent and returns a good offered to the agent at a given price (which the agent can either purchase or decline); the resulting optimization problem is to find the revenue-maximizing mechanism which is compatible with the agents' incentives. One approach to determining whether a given mechanism is optimal is to formulate a dual problem, show that there is zero duality gap, and proceed via primal-dual certification. In \cite{rochet1998ironing} it is shown that, for a certain two-dimensional mechanism design problem, the dual problem involves maximizing an optimal transport type functional over pairs of measures in convex order. (A more recent extension of this work to higher dimensions is considered in \cite{daskalakis2017strong}.) Accordingly, in order to numerically check the primal-dual certification of optimality in the mechanism design problem considered in \cite{rochet1998ironing}, in practice a method for checking convex order is required.}

With these domain applications in mind, we intend to design a statistical tool for convex order. A
good convex order test statistic should be naturally consistent, efficient, and easy to compute. Moreover, it should provide more information than just telling us to accept or not the hypothesis. Expanding our goals, we will address the following.

\begin{enumerate}
\item[$\left(  G_{1}\right)  $] Given unknown probability measures $\mu$, $\nu \in
\mathcal{P}_2\left(  \mathbb{R}^{d}\right)$, consider the \emph{hypothesis testing of convex order},
\begin{equation}\label{def : convex order test}
     \boldsymbol{H_0} :  \mu \preceq \nu \quad \text{vs} \quad \boldsymbol{H_A} : \mu  \npreceq \nu
\end{equation}
by constructing a test statistic depending on $X_1, \dots, X_n \sim \mu$ and $Y_1, \dots, Y_m \sim \nu$ which are drawn identically and independently (but, no particular dependence relation between $X_i$'s and $Y_j$'s is assumed).

\item[$\left(  G_{2}\right)  $] If they are not in convex order, how much do
they deviate from the expected convex order relationship? This feature will be
useful for decision-maker when decisions are made under pre-specified tolerance.

\item[$\left(  G_{3}\right)  $] If they are not in convex order and one of
them is allowed to change to form a convex order relationship, what is the
closest modification? This is very useful in numerical computation involving
measures in convex order, providing an indirect way to sample measures in
convex order.
\end{enumerate}
We emphasize that the unknown $\mu$ and $\nu$ which generate $X_i$'s and $Y_j$'s respectively are not assumed to be coupled by a certain martingale. Rather, the goal of \eqref{def : convex order test} is to test the \emph{possibility} of martingale coupling between the two.

At this point, it is worthwhile distinguishing the hypothesis testing $\left(
\ref{def : convex order test}\right)  $ from the following test recently studied in
\cite{blanchet2024empirical}. Let $\left(  X,Y\right)  \sim\mathbb{P}$ with $\mathbb{P}$ having marginals $\mu$ and $\nu$. Consider the \emph{hypothesis testing of martingale pair},
\begin{equation}\label{eq:martingale_pair_test}
    \boldsymbol{H_0'}:\left(  X,Y\right)  \text{ is a marginale pair under }
\mathbb{P}\text{.}
\end{equation}

Testing of convex order $\boldsymbol{H_0}$ is related to but different from the martingale test $\boldsymbol{H_0'}$. In view of the relationship between convex order and existence of martingale coupling, if one rejects $\boldsymbol{H_0}$, then one can also reject the martingale test $\boldsymbol{H_0'}$ because rejecting $\boldsymbol{H_0}$ implies the impossibility of a martingale coupling. On the other hand, as a contraposition of the previous statement, if one cannot reject $\boldsymbol{H_0'}$, so as $\boldsymbol{H_0}$. While even if the martingale test $\boldsymbol{H_0'}$ fails, it does not necessarily exclude $\boldsymbol{H_0}$ either. Convex order testing is useful in situations where only marginal distributions are available while the joint distribution is not easy to obtain, both the above-mentioned examples of arbitrage detection and labor economics well fit into this situation. If the joint distribution is obtainable, for example modeled by external methods \cite{gierjatowicz2023robust}, then the hypothesis testing \eqref{eq:martingale_pair_test} of martingale pair would be easier.

A first thought about the testing of convex order may prompt one to propose a test directly based on the definition \eqref{Inq:CvxClass},
however this simply turns out to not work, let alone satisfy the goals $\left(  G_{1}\right)  ,\left(  G_{2}\right)$ and $\left(  G_{3}\right).$ This is demonstrated by the example below.

\begin{example}\label{ex: example}
For two probability measures, $\mu$ and $\nu$,
if $\mu$ and $\nu$ are not in convex order, by the definition of convex order it holds that
\[  
    \sup_{\varphi \in \mathcal{A}} \left\{ \int_{\mathbb{R}^d} \varphi d\mu - \int_{\mathbb{R}^d} \varphi d\nu \right\} = \infty.
\]
Let $\mu_n$ and $\nu_n$ be empirical measures of $\mu$ and $\nu$, respectively.
The definition seemingly provides a naive test statistic for convex order, \eqref{def : convex order test}, which is
\[
    T_n: = \sup_{\varphi\in\mathcal{A}}\left\{  \int\varphi d\mu_n -\int\varphi d\nu_n\right\}.
\]
In contrary to its natural derivation, it is not robust and does not work as expected.

Consider probability measures in two dimension, $\mu=\delta_{\left(  0,0\right)  }$ and $\nu=\frac{1}{2}\delta_{\left(  -1,0\right)} + \frac{1}{2}\delta_{\left(  1,0\right)  }$. It is easy to check that $\mu \preceq \nu.$ Now, we approximate $\nu$ by the sequence
\[
    \nu_n:=\frac{1}{2}\delta_{\left(  -1,0\right)  }+\frac{1}{2}\delta_{\left(1,\frac{1}{n}\right)  } \longrightarrow \nu \text{ weakly as $n \to \infty$.}%
\]
Clearly $\mu  \npreceq \nu_n$ for all $n \geq 1$. In fact, taking the set of convex functions which are zero on the line connecting $\left(-1,0\right)$ and $\left(  1,\frac{1}{n}\right)$, and arbitrarily large at $\left(0,0\right)$ yields $T_n = \infty$. Therefore, $T_n \not \to 0$ for any $n$. This shows that $T_n$ does not provide the expected information for determining the convex order relation between $\mu$ and $\nu.$ 
\end{example}

To achieve our goals, we need a statistic which can not only perform the test, but also quantify the deviation from an expected test result. The key components in achieving our goals are the backward and forward Wasserstein projections:
\[
    \mathcal{W}_{2}\left(\mu,\mathscr{P}^{cx}_{\preceq \nu}\right):=\inf_{\xi \in \mathscr{P}^{cx}_{\preceq \nu}}\mathcal{W}_{2}\left(\mu,\xi\right),\quad \mathcal{W}_{2}\left(\mathscr{P}_{\mu \preceq}^{cx},\nu\right) :=\inf_{\eta \in \mathscr{P}_{\mu \preceq}^{cx}}\mathcal{W}_{2}\left(\eta,\nu\right).
\]
Here $\mathscr{P}^{cx}_{\preceq \nu}$ and $\mathscr{P}_{\mu \preceq}^{cx}$ are backward and forward convex order cones defined as
\[
    \mathscr{P}^{cx}_{\preceq \nu}:=\left\{  \xi\in \mathcal{P}_2(\mathbb{R}^{d}) : \xi\preceq\nu\right\},\quad \mathscr{P}_{\mu \preceq}^{cx}:=\left\{  \eta\in \mathcal{P}_2(\mathbb{R}^{d})  :\mu \preceq\eta \right\}  .
\]
Wasserstein projections for $\mathcal{W}_p$, $p \geq 1$ are introduced and studied in \cite{Alfonsi_2020}, and further investigated in \cite{Gozlan_Juillet20}. The dualities for the backward and forward Wasserstein projections are established in \cite{yh_yl_stochastic_order}. It is also shown that backward and forward Wasserstein projections are inherently connected \cite{Alfonsi_2020, yh_yl_stochastic_order}. These properties enable us to design a test statistic for convex order which we present below.

To the best of our knowledge, no statistical tool for convex order as required by $\left(  G_{1}\right),\left(  G_{2}\right)$ and $\left(  G_{3}\right)$ has been developed in the literature. A major difficulty of the achievement of such desirable statistic comes from the convex order itself. In particular, stability results are required to obtain the consistency of test statistic. This involves more or less the stability of the (backward and forward) convex order cones, $\mathscr{P}^{cx}_{\preceq \nu}$ and $\mathscr{P}_{\mu \preceq}^{cx}$, w.r.t. marginals $\mu$ and $\nu$ in some sense. This, however, turns out to be not obvious. The sense in which the convex order cones are stable is not clear as there are examples showing that $\mathscr{P}^{cx}_{\preceq \nu_m}$ could be very different from $\mathscr{P}^{cx}_{\preceq \nu}$ even if $\nu_m$ converges weakly to $\nu$ \cite{bruckerhoff2022instability}. It should be noted that some of the stability issues have been appreciated in the study of martingale pair testing \cite{blanchet2024empirical}, where the authors propose a vanilla version of the martingale pair test of \eqref{eq:martingale_pair_test} but find that it is not consistent, therefore we turn to study a parametrically smoothed statistic. The choice of smoothing kernel parameters may affect the statistical test result.

\section{Main results}\label{sec: main results}

%~\ref{thm:stable}

Our first main result, presented in Theorem~\ref{thm: consistency of projected Wasserstein-1}, is the quantitative stability of the backward and forward Wasserstein projection
distances (note calling them distances need justification), i.e. how $\mathcal{W}_{2}\left(  \mu,\mathscr{P}^{cx}_{\preceq \nu}\right)$ and $\mathcal{W}_{2}\left(\mathscr{P}_{\mu \preceq}^{cx},\nu\right)$ change w.r.t. the perturbations of marginals. It is shown that they are not only stable w.r.t. perturbation of marginals, but also has stronger stability. In fact, they are Lipschitz w.r.t. the martingals. It is worth mentioning that there is a connection between the \emph{backward} Wasserstein projection $\mathcal{W}_{2}\left(  \mu,\mathscr{P}^{cx}_{\preceq \nu}\right)$ and martingale optimal transport
\cite{Gozlan_Juillet20}. A qualitative stability of the weak optimal transport is given in \cite{backhoff2022stability}. But in order to design a good test statistic, qualitative stability is not enough, particularly in terms of obtaining subsequent statistical bounds. In view of the above connection, our results on backward Wasserstein projection thus provide the first quantitative stability for the weak optimal transport with barycentric cost, and it may also have
applications in problems related to weak optimal transport: see \cite{backhoff2022applications}.

Theorem~\ref{thm: consistency of projected Wasserstein-1} gives a hint for a good candidate statistic for \eqref{def : convex order test}. Given unknown $\mu$ and $\nu$, we consider their empirical measures $\mu_n$ and $\nu_m$, and propose the empirical Wasserstein projection distance $\mathcal{W}_2(\mu_n, \mathscr{P}^{cx}_{\preceq \nu_m})$ as a test statistic. The proposed test for \eqref{def : convex order test} is
\begin{align}\label{eqn:test-new-t}
    \text{Reject $\boldsymbol{H_0}$ if $\mathcal{W}_2(\mu_n, \mathscr{P}^{cx}_{\preceq \nu_m}) \geq t(\alpha)$}; \text{ Accept otherwise},
\end{align}
where $\alpha \in (0,1)$ is the significance level. The quantity $t(\alpha)$ is chosen so that the type I error, the probability of false rejection, cannot be larger than $\alpha$. From the stability of $\mathcal{W}_2(\mu, \mathscr{P}^{cx}_{\preceq \nu})$, it is deduced that 
$\mathcal{W}_{2}\left(  \mu_n,\mathscr{P}_{\preceq \nu_m}^{cx}\right)$ converges to the true one as $n,m \to \infty$, hence it is a consistent estimator of  $\mathcal{W}_2(\mu, \mathscr{P}^{cx}_{\preceq \nu})$. Furthermore, we derive its convergence rate under mild assumptions. There are two possible scenarios: log-Sobolev inequality and bounded supports. Notice that the log-Sobolev case has the rate $O\left( n^{-1/d} \right)$ while the bounded supports case has $O\left( n^{-1/k} \right)$ where $k$ is in some sense the intrinsic dimension of the supports of probability measures. So, if one knows that the support of a measure is bounded and has a lower intrinsic dimension, the sample complexity is lower than the usual order. On the other hand, there are many measures which do not have bounded support, e.g. Gaussian convoluted measures. Then, the log-Sobolev case covers a class of such measures, albeit paying the larger rate $O\left( n^{-1/d} \right)$. 

%$O\left( n^{-\frac{1}{d}} \right)$
% $O\left( n^{-\frac{1}{k}} \right)$

%Then, the rest of the present paper will be devoted to understanding its properties and usefulness in practice.

%First, we prove the $\mathcal{W}_2$-Lipschitz stability of $\mathcal{W}_2(\mu, \mathscr{P}^{cx}_{\preceq \nu})$ in $\mu$ and $\nu$.
%
%
%
%
%\begin{theorem}[Stability of $\mathcal{W}_2(\mu, \mathscr{P}^{cx}_{\preceq \nu})$]\label{thm:stable}
%For any probability measures $\mu, \mu'$ and $\nu, \nu'$, 
%\[
%    \left|\mathcal{W}_{2}\left(  \mu',\mathscr{P}_{\preceq \nu'}^{cx}\right) - \mathcal{W}_{2}\left(\mu,\mathscr{P}^{cx}_{\preceq \nu}\right)\right| \leq    \mathcal{W}_{2}\left( \mu, \mu' \right)  + \mathcal{W}_{2}\left(  \nu,\nu'\right).
%\]
%\end{theorem}
%
%\begin{proof}
%    See Theorem~\ref{thm: consistency of projected Wasserstein-1}.
%\end{proof}

%~\ref{thm:Convergence rate}
%\ref{thm:tail_probability}
%\ref{thm: consistency of the test prelim}
%~\ref{thm:computation}

To obtain the rate of convergence, we study the tail probability of the discrepancy between $\mathcal{W}_{2}\left(  \mu_n,\mathscr{P}_{\preceq \nu_m}^{cx}\right)$ and $\mathcal{W}_{2}\left(  \mu,\mathscr{P}_{\preceq \nu}^{cx}\right)$ in Proposition \ref{prop: tail_probability}, which relies on a new concentration
result for the Wasserstein projection distances in Lemma~\ref{lem: concentration}. From this, the rate of convergence of the test statistic will be given in Theorem \ref{thm : Convergence rate}.

It provides guidance on how to control type I/II errors, therefore a way to attain the consistency of the proposed test \eqref{eqn:test-new-t}. In Section \ref{sec: p-value, errors} we will obtain estimates on the $p$-value and type I/II errors of $\mathcal{W}_{2}\left(  \mu_n,\mathscr{P}_{\preceq \nu_m}^{cx}\right)$  under mild restrictions on the distributions $\mu$ and $\nu$ in our statistical model: see Propositions \ref{prop: p-value, Type I error} and \ref{prop: type II error}. These will give Theorem~\ref{thm: consistency of the test} regarding the consistency of the proposed test, controlling type I and II errors simultaneously, and achieving vanishing error rates.

Our last contribution is about the computation of $\mathcal{W}_{2}\left(  \mu_n,\mathscr{P}_{\preceq \nu_m}^{cx}\right)$. We first convert the problem into a quadratic constrained optimization as in \cite{Alfonsi_2020, Gozlan_Juillet20}. To develop an efficient optimization scheme, we introduce an iterative method \emph{entropic Frank-Wolfe algorithm} in Section~\ref{sec:experiment}. It combines the popular Frank-Wolfe algorithm with entropic regularization. Note that it is different from the usual entropic optimal transport in the computational optimal transport community: our method does not regularize the objective function but does during the optimization procedure. We show its efficiency empirically on the basis of a synthetic data set.

Let us briefly compare some of our results with \cite{Alfonsi_2020}. Both \cite{Alfonsi_2020} and our paper have addressed problems related to the Wasserstein projection distance. But the purposes are different. In \cite{Alfonsi_2020}, the main goal is to sample an optimal martingale coupling to solve maximization problems with the martingale constraint, which typically aim at financial applications. The authors show the stability of the Wasserstein projection distances \emph{between empirical distributions} drawn from two probability measures \emph{which are already in convex order, i.e., $\mu \preceq \nu$}: see \cite[Propositions 3.1 and 4.3]{Alfonsi_2020}.

In the present paper, we prove in Theorem~\ref{thm: consistency of projected Wasserstein-1}, the stability of the Wasserstein projection distance for \emph{any probability measures}, namely without the convex order assumption, thus generalize the results of \cite{Alfonsi_2020} mentioned in the above paragraph. Our stability result is necessary for the application to convex order hypothesis problem. It enables us to obtain desirable properties of the proposed statistic, such as sharp consistency, type I error estimate, and balanced type II error control. A new concentration for the Wasserstein projection distance is obtained in the course.

Regarding the Wasserstein projection measure, not just the corresponding projection distance, it is challenging to show its stability in general dimensions; in the one dimensional case such a stability is obtained in \cite{Jourdain_2023}. Since its proof is tailored to one dimension, it seems not trivial to extend their result to general dimensions.

\subsection{Organization}
In Section \ref{sec:preplim}, some background of optimal transport and our main contributions will be provided. We also review a key technique, Wasserstein projection, and related important lemmas. Then, we will prove the Lipschitz stability of $\mathcal{W}_2(\mu, \mathscr{P}^{cx}_{\preceq \nu})$ on Wasserstein space in Section \ref{sec:stability}. Combining the stability result with the state of the art results of convergence rate of the Wasserstein distance of empirical distributions, we will provide the rate of convergence of $\mathcal{W}_2(\mu_n, \mathscr{P}^{cx}_{\preceq \nu_m})$ in Section \ref{sec : Convergence rate}. upper bounds of the $p$-value and type I/II errors, and the consistency of the proposed test in Section \ref{sec: p-value, errors}. In Sections \ref{sec:computation} and \ref{sec:experiment}, we will give an exact formulation for computing $\mathcal{W}_2(\mu_n, \mathscr{P}^{cx}_{\preceq \nu_m})$, an efficient algorithm and some numerical evidences based on the formulation. Lastly, we will finish this paper and mention some future direction and related interesting questions in Section \ref{sec: conclusion}. Proofs that are not provided in the main body of the paper are put in the appendix or supplements, these include \Cref{lem: concentration}, \Cref{prop: tail_probability}, Lemma \ref{lemma : barycentric characterization} and Theorem \ref{thm : computing projection of mu_n}.

%\subsection{Acknowledgment}

\subsection{Notation}
We define in the following some of the notation that will be used in the present paper.

\begin{itemize}
    \item $\mathcal{P}_p\left(  \mathbb{R}^{d}\right)$ is the set of probability measures over $\mathbb{R}^d$ with
finite $p$-th order moment.
    \item $\Pi(\mu, \nu)$ is the set of couplings, i.e. the set of joint distributions whose first and second marginals are $\mu$ and $\nu$, respectively.
    \item $\mu_n$, $\nu_m$ are empirical measures of $\mu, \nu$, respectively,  made by samples $X_1, \cdots, X_n \sim \mu$, $Y_1, \cdots, Y_m \sim \nu$,  drawn identically and independently (and, no particular dependence relation between  the sames $X_i$'s and $Y_j$'s is assumed).
    \item $(T)_{\#}(\mu)$ is a pushforward measure by a measurable map $T$, i.e. $(T)_{\#}(\mu)(B) := \mu \left( T^{-1}(B) \right)$ for any measurable subset $B$.
    %\item $\mathcal{T}_{2}\left(  \mu | \nu\right)$ is the 2-weak optimal transport square cost between $\mu$ and $\nu$, i.e.,  $T_2(\nu|\mu) = \inf_{\pi \in \Pi(\mu, \nu)} \int \left( \int y p_x(dy) - x \right)^2 \mu(dx).$
    \item $\mathcal{W}_{2}\left(  \mu,\nu\right)$ is the 2-Wasserstein distance between $\mu$ and $\nu$.
    \item $\mathcal{A}$ is the set of all convex functions.
    %\item $\mathcal{A}_k$ is the set of convex functions with at most $k$-th order growth: e.g. $k=1$ means linear growth, $k=2$ means quadratic growth. 
    \item $\mathscr{P}^{cx}_{\preceq \nu}$ is the set of probability measures
dominated by $\nu$ in convex order.
    \item $\mathscr{P}_{\mu \preceq}^{cx}$ is the set of probability measures which
dominate $\mu$ in convex order.
    \item $x \wedge y:= \min\{x,y\}$ and $x \vee y:= \max\{x,y\}$.
\end{itemize}

\section{Preliminaries}\label{sec:preplim}
In this section, we give a quick review of the backward and forward Wasserstein projection, the weak optimal transport and a few facts related to our statistical inference.

\subsection{Wasserstein convex order projection}\label{subsec: Wasserstein projection}
In \cite{Gozlan_Juillet20}, a characterization of martingale optimal transport in terms of the backward projection is studied. In \cite{yh_yl_stochastic_order}, duality theorems are developed for Wasserstein projections in general stochastic order, which includes
(increasing) convex order, concave order and many other stochastic orders usually seen in practice. The approach taken there is closely related to
probabilistic potential theory where a probability measure which dominates another probability measure with respect to some function class is called \emph{balayage}. In this section we give a quick review of the results that will be used later on. Recall that $\mathcal{A}$ is the set of all convex functions on $\mathbb{R}^d$.

\begin{theorem}
\label{thm:duality}\cite[Theorems 5.3, 6.1 and 6.2]{yh_yl_stochastic_order}
Let $\mu, \nu \in \mathcal{P}_2 \left( \mathbb{R}^d \right)$.
\begin{enumerate}
    \item[(i)] The duality for backward projection%
\[
    \mathcal{W}^2_{2}\left(  \mu,\mathscr{P}^{cx}_{\preceq \nu}\right) =\mathcal{D}_{2}\left(  \mu,\mathscr{P}^{cx}_{\preceq \nu}\right) :=\sup_{\varphi\in\mathcal{A}}\left\{  \int_{\mathbb{R}^d} Q_{2}\left(  \varphi\right)  d\mu-\int_{\mathbb{R}^d}\varphi d\nu\right\}
\]
    where $Q_{2}\left(  \varphi\right)  \left(  x\right)  =\inf_{y\in \mathbb{R}^d}\left\{\varphi\left(  y\right)  +\left\vert x-y\right\vert ^{2}\right\}$. Furthermore, the supremum is achieved by $\varphi_0$ which is $\nu$-integrable.
    \item[(ii)] The duality for forward projection%
\[
    \mathcal{W}^2_{2}\left(  \mathscr{P}_{\mu \preceq}^{cx},\nu\right) =\mathcal{D}_{2}\left(  \mathscr{P}_{\mu\preceq}^{cx},\nu\right):=\sup_{\varphi\in\mathcal{A}}\left\{  \int_{\mathbb{R}^d}\varphi d\mu-\int_{\mathbb{R}^d}Q_{\bar{2}}\left( \varphi\right)  d\nu\right\}
\]
where $Q_{\bar{2}}\left(  \varphi\right)  \left(  y\right)  =\sup_{x\in \mathbb{R}^d}\left\{
\varphi\left(  x\right)  -\left\vert x-y\right\vert ^{2}\right\}$. Similarly, the supremum is achieved by $\varphi_0$ which exhibits at most quadratic growth.
\end{enumerate}
\end{theorem}

\begin{remark}
The convex order cones $\mathscr{P}^{cx}_{\preceq \nu}$, $\mathscr{P}^{cx}_{\mu \preceq }$ and the function class $\mathcal{A}$ clearly depend on the underlying space, but for notational simplicity, we
do not explicitly specify it unless it is ambiguous to do so. %In addition, in most of the article we will work in settings where $X=Y=\mathbb{R}^{d}$, thus specification of the underlying space is not necessary.
\end{remark}

\begin{remark}\label{rmk:thm_duality}
A nice feature of the above duality is that the
specific testing convex function class $\mathcal{A}$ itself is
not essential. It could be replaced with other convex function class
$\mathcal{A}$ that defines the same convex order as $\mathcal{A}$. See \cite[Remark 4.9]{yh_yl_stochastic_order} for details.
\end{remark}

%When \emph{distance} is considered, we write%
%\[
%\mathcal{W}_{2}\left(  \mu,\mathscr{P}^{cx}_{\preceq \nu}\right)  =\left(\mathcal{T}_{2}\left(  \mu,\mathscr{P}^{cx}_{\preceq \nu}\right) \right)  ^{1/2},\text{ }\mathcal{W}_{2}\left(  \mathscr{P}_{\mu \preceq}^{cx},\nu\right)  =\left(  \mathcal{T}_{2}\left(  \mathscr{P}_{\mu \preceq}^{cx} ,\nu\right)  \right)  ^{1/2}.
%\]
%We will show that they do satisfy the triangle inequalities, thus it is justified to call them distances.

Next we consider a type of Brenier's theorem \cite{Brenier91} and Caffarelli’s contraction theorem \cite{Caffarelli_contraction} adapted to our context. We state \cite[Theorem 1.2]{Gozlan_Juillet20} and \cite[Theorems 7.4 and 7.6]{yh_yl_stochastic_order} as one lemma, in which the existence of transport maps and their contraction properties arising in martingale optimal transport problems are studied in the language of the Wasserstein projections onto cones.

\begin{lemma}\cite[Theorem 1.2]{Gozlan_Juillet20}\cite[Theorems 7.4 and 7.6]{yh_yl_stochastic_order}\label{thm : characterization of projected measure}
For any $\mu, \nu \in \mathcal{P}_2(\mathbb{R}^d)$, there exists a unique backward projection of $\mu$, denoted by $\overline{\mu}$, onto $\mathscr{P}_{\preceq \nu}^{cx}$. Furthermore, 
\begin{equation*}\label{eq: characterization of projection measure}
    \overline{\mu} = (\nabla \varphi)_{ \#} \mu,
\end{equation*}
where $\varphi$ is a proper lower semicontinuous convex function such that $D^2 \varphi \leq  Id$.

Similarly, there is a forward projection of $\nu$ onto $\mathscr{P}_{\mu \preceq}^{cx}$. If we further assume that $\nu$ is absolutely continuous, then the projection is unique, denoted by $\overline{\nu}$. Moreover
\[
    \overline{\nu} = (\nabla \phi)_{ \#} \nu
\]
by some proper lower semicontinuous convex function $\phi$ such that $D^2 \phi \geq  Id$
\end{lemma}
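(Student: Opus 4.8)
The plan is to treat existence, uniqueness, and the gradient characterization separately, exploiting the weak optimal transport reformulation $\mathcal{W}_2^2(\mu,\mathscr{P}^{cx}_{\preceq\nu})=\mathcal{T}_2(\nu|\mu)$ together with the duality of Theorem~\ref{thm:duality}.

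\emph{Backward projection.} For existence, every $\xi\in\mathscr{P}^{cx}_{\preceq\nu}$ satisfies $\int|x|^2\,d\xi\le\int|x|^2\,d\nu$ (since $|x|^2$ is convex), so the cone has uniformly bounded second moments and is tight; it is also $\mathcal{W}_2$-closed, and lower semicontinuity of $\xi\mapsto\mathcal{W}_2(\mu,\xi)$ yields a minimizer. To obtain uniqueness and the gradient form I pass to the weak transport side: writing $\pi(dx\,dy)=p_x(dy)\mu(dx)$ and $T(x)=\int y\,p_x(dy)$, one has $T_\#\mu\preceq\nu$ by Jensen and $\mathcal{T}_2(\nu|\mu)=\inf_\pi\int|T(x)-x|^2\,d\mu$. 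Since $|\cdot|^2$ is strictly convex and $T$ is linear in the kernel, averaging two distinct optimal maps strictly lowers the cost, so the optimal $T$, and hence $\overline\mu=T_\#\mu$, is $\mu$-a.e.\ unique. Now let $\varphi_0\in\mathcal{A}$ be the optimal dual potential of Theorem~\ref{thm:duality}(1), of at most quadratic growth; complementary slackness forces the infimum defining $Q_2(\varphi_0)(x)=\inf_y\{\varphi_0(y)+|x-y|^2\}$ to be attained at $y=T(x)$ for $\mu$-a.e.\ $x$, giving $x=T(x)+\tfrac12\nabla\varphi_0(T(x))$. Thus $T=\nabla\varphi$ with $\varphi:=h^\ast$ and $h:=\tfrac12|\cdot|^2+\tfrac12\varphi_0$, which is $1$-strongly convex; as $h$ is strongly convex, $h^\ast$ is finite and $C^1$ (so no absolute continuity of $\mu$ is needed), and Legendre duality gives $D^2\varphi=(D^2 h)^{-1}\le\mathrm{Id}$. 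This is exactly $\overline\mu=(\nabla\varphi)_\#\mu$ with $0\le D^2\varphi\le\mathrm{Id}$.

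\emph{Forward projection.} I would run the mirror-image argument using Theorem~\ref{thm:duality}(2). For $\eta\succeq\mu$ and a $\mathcal{W}_2$-optimal coupling of $\nu$ with $\eta$, the inequality $|x-y|^2\ge\varphi(x)-Q_{\bar 2}(\varphi)(y)$ together with $\int\varphi\,d\mu\le\int\varphi\,d\eta$ reproduces weak duality against $\sup_{\varphi}\{\int\varphi\,d\mu-\int Q_{\bar 2}(\varphi)\,d\nu\}$, and strong duality is the theorem. At the optimum, the supremum in $Q_{\bar 2}(\varphi_0)(y)=\sup_x\{\varphi_0(x)-|x-y|^2\}$ is attained at the point $x$ coupled to $y$, so $y=x-\tfrac12\nabla\varphi_0(x)$ and the map $\nu\to\overline\nu$ is $G=(\nabla g)^{-1}$ with $g:=\tfrac12|\cdot|^2-\tfrac12\varphi_0$. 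Using $0\le D^2\varphi_0\le 2\,\mathrm{Id}$, $g$ is convex with $D^2 g\le\mathrm{Id}$, hence $\phi:=g^\ast$ is convex with $D^2\phi\ge\mathrm{Id}$ and $G=\nabla\phi$, giving $\overline\nu=(\nabla\phi)_\#\nu$. Unlike the backward case, $g$ is merely convex (not strongly convex), so $g^\ast$ is strongly convex but need not be $C^1$; this is precisely why $\nu$ must be assumed absolutely continuous, which guarantees $\nabla\phi$ is $\nu$-a.e.\ single-valued and the projection unique.

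\emph{Main obstacle.} The delicate point is the rigorous complementary-slackness step: justifying that the optimal dual potential is differentiable $\mu$- (resp.\ $\nu$-) almost everywhere, that the infimum in $Q_2$ (resp.\ supremum in $Q_{\bar 2}$) is attained at a single point there, and that the resulting selection is measurable and coincides with the optimal barycentric map. Tied to this is extracting the quantitative Hessian bound $D^2\varphi_0\le 2\,\mathrm{Id}$ in the forward case from the at-most-quadratic growth of $\varphi_0$, which is what renders $g$ convex and $\phi$ well defined. Once $h$ (resp.\ $g$) is identified, the contraction bound $D^2\varphi\le\mathrm{Id}$ (resp.\ expansion bound $D^2\phi\ge\mathrm{Id}$) follows softly from Legendre duality, and uniqueness from the strict convexity of the barycentric objective.
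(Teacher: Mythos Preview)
The paper does not give its own proof of this lemma: it is quoted verbatim from \cite[Theorem 1.2]{Gozlan_Juillet20} and \cite[Theorems 7.4 and 7.6]{yh_yl_stochastic_order}, so there is nothing in the present paper to compare your argument against. That said, your sketch is essentially the argument of those references: existence by compactness of the cone, uniqueness of the barycentric map by strict convexity of $|\cdot|^2$, and the gradient representation by reading off the first-order condition in the duality of Theorem~\ref{thm:duality} and Legendre-transforming $\tfrac12|\cdot|^2\pm\tfrac12\varphi_0$.

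One point deserves tightening. In the forward case you need $g=\tfrac12|\cdot|^2-\tfrac12\varphi_0$ to be convex, i.e.\ $D^2\varphi_0\le 2\,\mathrm{Id}$, and you suggest extracting this ``from the at-most-quadratic growth of $\varphi_0$.'' Quadratic growth alone does not give a pointwise Hessian bound. The correct route, used in the cited papers, is to observe that the dual value $\int\varphi\,d\mu-\int Q_{\bar 2}(\varphi)\,d\nu$ is unchanged if $\varphi$ is replaced by the $c$-concavification $(Q_{\bar 2}\varphi)^c(x)=\inf_y\{Q_{\bar 2}(\varphi)(y)+|x-y|^2\}$; any function of this form automatically has $\tfrac12|\cdot|^2-\tfrac12\varphi$ convex, so one may take the optimal $\varphi_0$ to satisfy $D^2\varphi_0\le 2\,\mathrm{Id}$ from the outset. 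With that in hand, your Legendre-duality step goes through, and the need for absolute continuity of $\nu$ is exactly as you diagnose: $g$ is merely convex, so $\nabla\phi=\nabla g^\ast$ is only single-valued off a Lebesgue-null set.
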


\begin{remark}
We emphasize that the characterization and the uniqueness of backward projected measure $\overline{\mu}$ does not require the absolute continuity of $\mu$; while to discuss a unique forward projection $\overline{\nu}$, it requires the absolute continuity of $\nu$. This is because, intuitively, the singularity of $\mu$ can be preserved via the backward projection map $\mu \mapsto \overline{\mu}$. But the forward projection map $\nu \mapsto \overline{\nu}$ needs $\overline{\nu}$ less singular than $\nu$: it hinders the existence of a transport map. In fact, the existence of a transport map in the backward projection is crucially used in Theorem~\ref{thm : computing projection of mu_n}.
\end{remark}

The next theorem is \emph{backward-forward swap} property which is first proved by using primal formulation in \cite[Corollary 4.4]{Alfonsi_2020}, and also obtained by duality in \cite[Theorem 8.3]{yh_yl_stochastic_order}. The
property states that the backward and forward convex order projection costs are equal. This is not trivial because the backward and forward convex order cones possess distinct geometric properties: the backward cone is geodesically convex though the forward one is not. We refer \cite[Section 8.1]{yh_yl_stochastic_order} to readers in which examples are provided. In particular, it is one of the key properties for this work especially in Section~\ref{sec:stability}.

\begin{theorem}{\cite[Corollary 4.4]{Alfonsi_2020}\cite[Theorem 8.3]{yh_yl_stochastic_order}}\label{thm:swap_prop}
For any $\mu, \nu \in \mathcal{P}_2(\mathbb{R}^d)$, it holds that
\begin{equation}\label{eq : forward and backward are equal}
   \mathcal{W}_{2}\left(  \mu,\mathscr{P}_{\preceq \nu}^{cx}\right) 
   %= \mathcal{W}_{2}(\mu, \overline{\mu}) 
   = \mathcal{W}_{2}\left( \mathscr{P}_{\mu \preceq}^{cx}, \nu\right). 
   %=  \mathcal{W}_{2}(\nu, \overline{\nu})
\end{equation}  
\end{theorem}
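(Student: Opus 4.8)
The plan is to derive \eqref{eq : forward and backward are equal} from the dual formulations of Theorem~\ref{thm:duality}: after squaring both sides it suffices to show that the two dual values
\[
\mathcal{D}_{2}\bigl(\mu,\mathscr{P}^{cx}_{\preceq\nu}\bigr)=\sup_{\varphi\in\mathcal{A}}\Bigl\{\int Q_{2}(\varphi)\,d\mu-\int\varphi\,d\nu\Bigr\}
\quad\text{and}\quad
\mathcal{D}_{2}\bigl(\mathscr{P}^{cx}_{\mu\preceq},\nu\bigr)=\sup_{\psi\in\mathcal{A}}\Bigl\{\int\psi\,d\mu-\int Q_{\bar{2}}(\psi)\,d\nu\Bigr\}
\]
coincide. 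I would organize the argument around four elementary facts about the operators $Q_{2}$ and $Q_{\bar{2}}$: (i) $Q_{2}(\varphi)\le\varphi$ pointwise, and $Q_{2}$ maps convex functions to convex functions (it is the infimal convolution $\varphi\,\square\,|\cdot|^{2}$ of two convex functions); (ii) $Q_{\bar{2}}\circ Q_{2}\le\operatorname{id}$, since $Q_{2}(\varphi)(x)-|x-y|^{2}\le\varphi(y)$ for all $x,y$; (iii) $Q_{2}\circ Q_{\bar{2}}\ge\operatorname{id}$, since $Q_{\bar{2}}(\psi)(y)+|x-y|^{2}\ge\psi(x)$ for all $x,y$; and, the one nontrivial point, (iv) $Q_{\bar{2}}$ also maps convex functions to convex functions. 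For (iv) I would write a convex $\psi$ as the supremum of its affine minorants $\psi=\sup_{(a,b)}\{\langle a,\cdot\rangle-b\}$, interchange the two suprema in $Q_{\bar{2}}(\psi)(y)=\sup_{z}\{\psi(z)-|z-y|^{2}\}$, and use $\sup_{z}\{\langle a,z\rangle-b-|z-y|^{2}\}=\langle a,y\rangle+\tfrac14|a|^{2}-b$ to obtain $Q_{\bar{2}}(\psi)(y)=\sup_{(a,b)}\{\langle a,y\rangle+(\tfrac14|a|^{2}-b)\}$, again a supremum of affine functions, hence convex and lower semicontinuous.

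Granting these facts, one inequality is immediate: given $\varphi\in\mathcal{A}$, set $\psi:=Q_{2}(\varphi)$, which lies in $\mathcal{A}$ by (i); by (ii) we have $Q_{\bar{2}}(\psi)\le\varphi$, so
\[
\int\psi\,d\mu-\int Q_{\bar{2}}(\psi)\,d\nu\ \ge\ \int Q_{2}(\varphi)\,d\mu-\int\varphi\,d\nu ,
\]
and taking the supremum over $\varphi$ gives $\mathcal{D}_{2}(\mathscr{P}^{cx}_{\mu\preceq},\nu)\ge\mathcal{D}_{2}(\mu,\mathscr{P}^{cx}_{\preceq\nu})$. For the reverse inequality, given $\psi\in\mathcal{A}$ (with finite objective, otherwise there is nothing to check), set $\varphi:=Q_{\bar{2}}(\psi)$, which lies in $\mathcal{A}$ by (iv); by (iii), $Q_{2}(\varphi)\ge\psi$, so
\[
\int Q_{2}(\varphi)\,d\mu-\int\varphi\,d\nu\ \ge\ \int\psi\,d\mu-\int Q_{\bar{2}}(\psi)\,d\nu ,
\]
and taking the supremum over $\psi$ yields the opposite inequality. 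Combining the two gives $\mathcal{D}_{2}(\mu,\mathscr{P}^{cx}_{\preceq\nu})=\mathcal{D}_{2}(\mathscr{P}^{cx}_{\mu\preceq},\nu)$, and \eqref{eq : forward and backward are equal} follows after taking square roots.

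The only real obstacle is bookkeeping of the admissible classes: I need the substituted potentials $Q_{2}(\varphi)$ and $Q_{\bar{2}}(\psi)$ to be not merely convex but also of at most quadratic growth and integrable against $\mu,\nu\in\mathcal{P}_{2}(\mathbb{R}^{d})$, so that every integral above is well defined and no $\infty-\infty$ arises. This is precisely what the refinement in Theorem~\ref{thm:duality} — that the dual suprema are attained at potentials of at most quadratic growth — provides: it lets me restrict both suprema to a class of potentials that is stable under $Q_{2}$ and $Q_{\bar{2}}$ and on which the manipulations in (i)--(iv) are harmless. (A purely primal route — turning a near-optimal competitor $\xi\preceq\nu$ for the backward problem into a competitor $\eta\succeq\mu$ for the forward problem of no larger Wasserstein cost, and conversely — is also conceivable, but it requires gluing a transport plan with a martingale coupling in a carefully chosen order, which seems more delicate than the dual argument above.)
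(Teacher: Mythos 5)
The paper does not prove Theorem~\ref{thm:swap_prop} at all: it is imported verbatim from \cite[Theorem 8.3]{yh_yl_stochastic_order}, so there is no internal argument to compare against. Your proposal is a legitimate self-contained derivation of the swap identity from the duality statements of Theorem~\ref{thm:duality} (themselves imported from the same reference), and its skeleton is correct: facts (i)--(iii) are immediate from the definitions \eqref{eq_Qc}--\eqref{eq_Qcbar}, your computation establishing (iv) via affine minorants ($\sup_{z}\{\langle a,z\rangle-b-|z-y|^{2}\}=\langle a,y\rangle+\tfrac14|a|^{2}-b$) is right, and the two substitutions $\psi:=Q_{2}(\varphi)$ and $\varphi:=Q_{\bar 2}(\psi)$ do yield the two inequalities between the dual values. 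This is in the spirit of how the swap is proved in the cited source, i.e.\ as a Legendre-type involution statement at the level of the dual potentials rather than by manipulating couplings.

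The one point where your write-up is imprecise is the final integrability paragraph: the class of convex functions of at most quadratic growth is \emph{not} stable under $Q_{\bar 2}$. For instance $\psi=|\cdot|^{2}$ gives $Q_{\bar 2}(\psi)(y)=\sup_{x}\{2\langle x,y\rangle-|y|^{2}\}=+\infty$ for every $y\neq 0$, so "restrict both suprema to a class stable under $Q_{2}$ and $Q_{\bar 2}$" cannot be taken literally. What saves the argument is what you already half-say: any $\psi$ for which $\int Q_{\bar 2}(\psi)\,d\nu=+\infty$ contributes $-\infty$ to the forward supremum and is discarded; for the remaining $\psi$ the function $Q_{\bar 2}(\psi)$ is finite $\nu$-a.e., but it may still take the value $+\infty$ outside a convex set, so it need not belong to $\mathcal{A}$ as the paper defines it (real-valued convex functions on $\mathbb{R}^{d}$). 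To make $\varphi:=Q_{\bar 2}(\psi)$ admissible one should either allow extended-real-valued convex test functions (permitted by the flexibility of the defining class noted in Remark~\ref{rmk:thm_duality}) or truncate to $\varphi_{N}:=\sup$ over affine minorants with slope $|a|\le N$ and pass to the limit $N\to\infty$ by monotone convergence. This is routine bookkeeping, not a conceptual gap, but it should replace the "stable class" sentence.
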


\begin{remark}\label{rem:Wp-swap}
Under suitable integrability conditions, the backward-forward swap property holds for a large class of cost functions including $|\cdot|^p$ for $1\leq p < \infty$. With appropriate modifications, all the results below can be extended to $\mathcal{W}_p$ for $1\leq p < \infty$: see \cite{Alfonsi_2020,yh_yl_stochastic_order} for more details.
\end{remark}

\subsection{Weak optimal transport and martingale constraint}\label{subsec: weakOT}
One variant of optimal transport problem which emerges from mathematics finance \cite{hobson2012robust, beiglbock2013model, galichon2014stochastic, dolinsky2014martingale, MR3127880, MR3161649, acciaio2016model, Beiglb_Juillet16, henry2016explicit, campi2017change, hou2018robust,  MR4152642, cheridito2021martingale} is \emph{martingale optimal transport problem}, which is probably the most important case of \emph{weak optimal transport problem} introduced in \cite{GOZLAN20173327, Alibert_2019}. They aim at developing robust finance, or finding a model-free upper (resp. lower) bound for the price of a given option which maximizes (resp. minimizes) the expected profit (resp. cost). Many aspects of martingale optimal transport are studied, regarding basic structural properties, duality, and computations for various necessities from different fields, especially finance and economics~\cite{MR3639595, completedualityMOT17, structureMOT19, IrreducibleMOT19, computationalMOT19, MR4171389, Gozlan_Juillet20, bruckerhoff2022instability, backhoff2022stability, yh_yl_stochastic_order, choné2024weakoptimaltransportunnormalized}.

\begin{comment}
For $\pi \in \Pi(\mu, \nu)$, one can rewrite
\[
   \pi(dxdy)=p_x(dy)\mu(dx) 
\]
where $p_x(dy)=p(dy|x)$ is a probability kernel. Let $\theta : \mathbb{R}^d \times \mathcal{P}(\mathbb{R}^d) \to [0,\infty]$. The weak optimal transport cost $\mathcal{T}_\theta(\nu|\mu)$ is defined as
\[
    \mathcal{T}_\theta(\nu|\mu):= \inf_{\pi \in \Pi(\mu, \nu)} \int \theta(x, p_x)\mu(dx).
\]
If $\theta(x,p_x) = \int c(x,y)p_x(dy)$, then $\mathcal{T}_\theta(\nu|\mu)$ recovers the usual optimal transport.

The martingale optimal transport problem is defined as
    \[
        \min_{\pi \in \mathcal{M}(\mu, \nu)} \int_{\mathbb{R}^d \times \mathbb{R}^d} c(x,y) d\pi(x,y) 
    \]
where $\mathcal{M}(\mu, \nu) \subset \Pi(\mu, \nu)$ is the subset of couplings satisfying  $\mathbb{E}[ Y | X]= X$ $\mu$-a.e whenever $(X,Y)\sim \pi \in \mathcal{M}(\mu, \nu)$. In words, the coupling $\pi$ generates a martingale between $\mu$ and $\nu$. Notice that $\mathcal{M}(\mu, \nu)$ could be empty unlike $\Pi(\mu, \nu)$. It is proved in \cite{StrassenMOT65} that $\mathcal{M}(\mu, \nu)$ is non-empty if and only if $\mu \preceq \nu$. If one chooses
\[
    \theta(x, p_x) = 
    \begin{cases}
    \int c(x,y)p_x(dy)    &\text{ if $\int y p_x(dy)=x$,}\\
    \infty    &\text{ otherwise,}
    \end{cases}
\]
then, the martingale optimal transport can be understood as a special case of the weak optimal transport.
\end{comment}

Another instance of weak optimal transport is induced by the barycentric cost. In this case the weak optimal transport takes the form
\[
    T_2(\nu|\mu) := \inf_{\pi \in \Pi(\mu, \nu)} \int \left( \int y p_x(dy) - x \right)^2 \mu(dx).
\]
Notice that $T_2(\nu|\mu)=0$ if and only if $\mu \preceq \nu$ if and only if there is a martingale coupling between $\mu$ and $\nu$. For any $\mu, \nu \in \mathcal{P}_2(\mathbb{R}^d)$, it is proved in \cite[Proposition 1]{Gozlan_Juillet20} that
\[
    \mathcal{T}_2(\nu|\mu) = \mathcal{W}^2_{2}\left(\mu,\mathscr{P}^{cx}_{\preceq \nu}\right).
\]    
Hence,  $\mathcal{W}_{2}\left(\mu,\mathscr{P}^{cx}_{\preceq \nu}\right)=0$ if and only if $\mu \preceq \nu$. In this sense,  the \emph{backward} Wasserstein projection is equivalent to the weak optimal transport with barycentric cost.

\subsection{Dimensions of sets and measures}
In the following we briefly introduce upper and lower Wasserstein dimensions of measures. Roughly speaking, smaller Wasserstein dimensions of measure imply that it is concentrated on some low dimensional object. More details may be found \cite{JW_FB_sample_rates} and references therein.

\begin{definition}
The $\varepsilon$-covering number of a set $S$, denoted $\mathcal{N}%
_{\varepsilon}\left(  S\right)  $, is the minimum number of closed balls whose
union cover $S.$ The $\varepsilon$-dimension of $S$ is the quantity%
\[
d_{\varepsilon}\left(  S\right)  =\frac{\log\mathcal{N}_{\varepsilon}\left(
S\right)  }{-\log\mathcal{\varepsilon}}.
\]

Given a measure space $\left(  X,\mu\right)  ,$ the $\left(  \varepsilon
,\tau\right)  $-covering number of $X$ is defined as%
\[
\mathcal{N}_{\varepsilon}\left(  \mu,\tau\right)  =\inf\left\{  \mathcal{N}%
_{\varepsilon}\left(  S\right)  :\mu\left(  S\right)  \geq1-\tau\right\}  .
\]
The $\left(  \varepsilon,\tau\right)  $-dimension is the quantity%
\[
d_{\varepsilon}\left(  \mu,\tau\right)  =\frac{\log\mathcal{N}_{\varepsilon
}\left(  \mu,\tau\right)  }{-\log\mathcal{\varepsilon}}.
\]
\end{definition}

\begin{definition} The upper and lower Wasserstein dimensions are respectively,
\begin{align*}
    d^*_p(\mu) &:= \inf \left\{ s \in (2p, \infty) : \limsup_{\varepsilon \to 0} d_{\varepsilon}(\mu,\varepsilon^{\frac{sp}{s - 2p}}) \leq s \right\},\\
    d_*(\mu) &:= \lim_{\tau \to 0} \liminf_{\varepsilon \to 0} d_{\varepsilon}\left(  \mu,\tau\right).
\end{align*}
\end{definition}

\begin{remark}
In \cite{JW_FB_sample_rates}, the authors compare Wasserstein dimensions to the Minkowski and Hausdorff dimensions of measures, and get inequalities among them: see \cite[Proposition 2]{JW_FB_sample_rates}.
\end{remark}

\subsection{Consistent test}
We end this subsection with some definitions and terminologies about nonparametric hypothesis testing. Given a metric space $(\mathcal{F}, d)$, the model for the
distribution of samples consists of probability laws $\mathbb{P}_f$ indexed by $f \in \mathcal{F}$. A \emph{statistical hypothesis $\boldsymbol{H_0}$} is a subset of $\mathcal{F}$, and a \emph{statistical test} for $\boldsymbol{H_0}$ is a decision rule from samples to $\{0, 1\}$ where $0$ and $1$ denote \emph{accept} and \emph{reject} $\boldsymbol{H_0}$, respectively. In the following since we will have two sets of samples $\{X_1, \dots, X_n\}$ and $\{Y_1, \dots, Y_m\}$, double index $n,m$ is used to denote the dependency of samples.

We introduce a consistent test \cite[Definition 6.2.1]{gine2021mathematical}, which is the fundamental property required for hypothesis tests. In order to be consistent, the \emph{power} of a test increases to one as the number of samples grows to infinity.

\begin{definition}[Consistent test]
Consider the following hypothesis testing problem:
\[
    f \in \boldsymbol{H_0} \text{ vs } f \in \boldsymbol{H_A}.
\]
Let $\Psi_{n,m}$ be a function over the set of samples defined as
\[
    \Psi_{n,m} : \{X_1, \dots, X_n\} \times \{Y_1, \dots, Y_m\} \longrightarrow \{0, 1\},
\]
which is a statistical test for $\boldsymbol{H_0}$. $\Psi_{n,m}$ is said to be  \emph{consistent} if
\[
  \sup_{f \in \boldsymbol{H_0}} \mathbb{E}_{f} \Psi_{n,m} + \sup_{f \in 
 \boldsymbol{H_A}} \mathbb{E}_{f}\left[ 1 - \Psi_{n,m} \right] \leq \alpha_{n,m}
\]
for $\alpha_{n,m} \to 0$ as $n,m \to \infty$. 
\end{definition}

\section{Stability of the Wasserstein projection distances}\label{sec:stability}
In practical situations, we cannot access both of the backward projection $\mathcal{W}_{2}\left(
\mu,\mathscr{P}^{cx}_{\preceq \nu}\right)$ and the forward projection
$\mathcal{W}_{2}(  \mathscr{P}_{\mu \preceq}^{cx},\nu )$. Instead, we are given empirical distributions $\mu_{n}$, $\nu_m$ based on i.i.d. samples, and plug them in to
replace $\mu$, $\nu$. In other words, we can only access the approximation, or the estimator in statistics context, $\mathcal{W}_{2}( \mu_{n},\mathscr{P}^{cx}_{\preceq \nu_m})$ and $\mathcal{W}_{2}(  \mathscr{P}_{\mu_n \preceq}^{cx},\nu_m)$. One of the most fundamentally requisite properties for an estimator is \emph{consistency}. Namely, the hope is that if one accesses infinitely many samples, then the estimator will approach  the true quantity. This property is equivalent to \emph{stability} of functions in mathematics context. In our setting, if one can understand stability of the Wasserstein projections, then consistency of the estimator, and the control of the errors resulted from this approximation, could be quantitatively understood too. This section is devoted to answering this stability question, and proving the consistency result.

To understand the nature of the problem, we note that the stability of the Wasserstein projection involves more or less perturbing the convex order cones, $\mathscr{P}_{\mu \preceq}^{cx}$ and $\mathscr{P}^{cx}_{\preceq \nu}$. A quick thought seems to indicate e.g., that
$\mathcal{W}_{2}\left(  \mu,\mathscr{P}^{cx}_{\preceq \nu}\right)  $ should be stable to perturbation of $\mu$ because of the stability of the Wasserstein
distance, and that if $\mathscr{P}^{cx}_{\preceq \nu}$ is also stable with respect to
$\nu$ in a suitable sense, then the desired stability readily follows.
Although quite tempting, it is not straightforward to implement  this
intuition. It is mainly because that martingale property is fragile, as perturbing either one of the two marginals of a martingale pair can easily
breaks the martingale property. Also, consequently, the convex order cone is not stable as perturbing the vertex of the
convex order cone could produce larger-than-expected change of the cone. This
has already underlied the difficulty of showing the stability of martingale
optimal transport. In fact, the stability of martingale transport is expected
to hold only in one dimension \cite{backhoff2022stability}, and in higher
dimensions, with Euclidean transport cost, examples have been constructed
showing the instability of martingale optimal transport
\cite{bruckerhoff2022instability}. These facts indicate that the convex order cones $\mathscr{P}_{\mu \preceq}^{cx}$ and $\mathscr{P}^{cx}_{\preceq \nu}$ do
not possess the properties that readily lend themselves to the resolution of
the stability problem we are facing.

As a result, proving the stability of $\mathcal{W}_{2}(  \mu
,\mathscr{P}^{cx}_{\preceq \nu})  $ with respect to the vertex $\nu$ of the
convex order cone $\mathscr{P}^{cx}_{\preceq \nu}$ (or $\mathcal{W}_{2}(
\mathscr{P}_{\mu \preceq}^{cx},\nu)  $ with respect to the vertex $\mu$ of the
cone $\mathscr{P}_{\mu \preceq}^{cx}$) would not be trivial. Adding more challenge to the problem, a qualitative stability is not enough for our statistical analysis, which requires much finer quantitative estimates. On the other hand, the stability of interest is the weakest object: we want not the stability of optimal (martingale) coupling but the stability of the optimal value $\mathcal{W}_{2}(  \mu, \mathscr{P}^{cx}_{\preceq \nu})$. Despite of the apparent difficulty, the resolution is simple as we will see below, which is based on the understanding of the Wasserstein geometry of the convex order cone.

As we note in Section~\ref{sec: main results}, the stability of the Wasserstein projection measure in one dimension is studied in \cite{Jourdain_2023}, which is stronger than the stability of the Wasserstein projection distance. However, its proof relies on the quantile functions and the lattice structure of convex order, which hold only in one dimension. Due to these reasons, it seems non-trivial to extend their result to higher dimensions, for which we leave such stronger stability as a future work.

In the following, we will distinguish vertex and non-vertex marginals. In
backward Wasserstein projection $\mathcal{W}_{2}\left(  \mu, \mathscr{P}^{cx}_{\preceq \nu} \right)$, we refer to $\nu$ as vertex marginal and $\mu$
as non-vertex marginal. For forward Wasserstein projection\ $\mathcal{W}_{2} (  \mathscr{P}_{\mu \preceq}^{cx},\nu )  ,$\ $\mu$ is vertex marginal and $\nu$ is non-vertex marginal.

\subsection{Lipschtiz continuity with respect to non-vertex marginals}
To obtain the desired stability, we start with the following Lipschtiz
continuity of the Wasserstein projection with respect to non-vertex marginals.

The following lemma follows from the simple observation of metric spaces: if $x, y \in X$, a metric space, and $C \subset X$, then the metric distance should satisfy
$|d(x, C) - d(y, C)| \le d(x, y)$; to see this, notice that $d(x, C) \leq d(x, y) + d(y, C)$, and $d(y, C) \leq d(x, y) + d(x, C)$.

\begin{lemma}
\label{lem:cont_nvtex}
For any $\mu$, $\nu$, $\xi \in \mathcal{P}_2\left(\mathbb{R}^{d}\right)$, it holds for Wasserstein backward and forward projection that,%
\begin{align}\label{eq: cont_nvtex}
    &\left\vert \mathcal{W}_{2}\left(  \mu,\mathscr{P}^{cx}_{\preceq \nu}\right) -\mathcal{W}_{2}\left(  \xi,\mathscr{P}^{cx}_{\preceq \nu}\right)  \right\vert \leq \mathcal{W}_{2}\left(  \mu,\xi\right),\\ \nonumber
    & \left\vert \mathcal{W}_{2}\left(  \mathscr{P}_{\mu \preceq}^{cx},\nu\right)
-\mathcal{W}_{2}\left(  \mathscr{P}_{\mu \preceq}^{cx},\xi\right)  \right\vert
\leq \mathcal{W}_{2}\left(  \nu,\xi\right).
\end{align}
\end{lemma}

\subsection{Lipschtiz continuity with respect to vertex marginals}
The nontrivial stability with respect to the vertex-marginals, can in fact be proved by a simple method of using \eqref{eq: cont_nvtex} and  the geometric relation \eqref{eq : forward and backward are equal} between backward and forward Wasserstein projection.

\begin{lemma}\label{lem:cont_vtex}
For any $\mu$, $\nu$, $\xi\in \mathcal{P}_2\left(  \mathbb{R}^{d}\right)$, it holds for Wasserstein backward and forward projection that,
\begin{align}\label{eq: cont_vtex}
  &  \left\vert \mathcal{W}_{2}\left(  \mu,\mathscr{P}^{cx}_{\preceq \nu}\right)
-\mathcal{W}_{2}\left(  \mu,\mathscr{P}_{\preceq \xi}^{cx}\right)  \right\vert
\leq \mathcal{W}_{2}\left(  \nu,\xi\right),\\ \nonumber
& \left\vert \mathcal{W}_{2}\left(  \mathscr{P}_{\mu \preceq}^{cx},\nu\right) -\mathcal{W}_{2}\left(  \mathscr{P}_{\xi\leq }^{cx},\nu\right)  \right\vert \leq \mathcal{W}_{2}\left(  \mu,\xi\right) .
\end{align}
% Similarly for Wasserstein forward projection,
% \[
%     \left\vert \mathcal{W}_{2}\left(  \mathscr{P}_{\mu \preceq}^{cx},\nu\right) -\mathcal{W}_{2}\left(  \mathscr{P}_{\xi\leq }^{cx},\nu\right)  \right\vert \leq \mathcal{W}_{2}\left(  \mu,\xi\right) .
% \]
\end{lemma}

\begin{proof}
%We first prove the Lipschitz continuity for backward projection,
For the first line in \eqref{eq: cont_vtex}, that is, the Lipschitz continuity for backward projection,
% i.e.%
% \begin{equation}
% \mathcal{W}_{2}\left(  \mu,\mathscr{P}_{\preceq \xi}^{cx}\right)
% -\mathcal{W}_{2}\left(  \xi,\nu\right)  \leq \mathcal{W}_{2}\left(
% \mu,\mathscr{P}^{cx}_{\preceq \nu}\right)  \leq \mathcal{W}_{2}\left(
% \mu,\mathscr{P}_{\preceq \xi}^{cx}\right)  +\mathcal{W}_{2}\left(  \xi
% ,\nu\right). \label{lm:triangle_vtex1}
% \end{equation}
% Notice that the convex order cone $\mathscr{P}^{cx}_{\preceq \nu}$ appearing in the middle of $\left(\ref{lm:triangle_vtex1}\right)  $ (i.e. $\mathcal{W}_{2}\left(  \mu
% ,\mathscr{P}^{cx}_{\preceq \nu}\right)$) has a vertex different from those on the LHS and RHS (i.e. $\mathcal{W}_{2}\left(  \mu,\mathscr{P}_{\preceq \xi
% }^{cx}\right)  $). 
%We proceed as below,
observe that
\begin{align*}
    \mathcal{W}_{2}\left(  \mu,\mathscr{P}^{cx}_{\preceq \nu}\right)   &=\mathcal{W}_{2}\left(  \mathscr{P}_{\mu \preceq}^{cx},\nu\right)  \text{
by \eqref{eq : forward and backward are equal}}\\
    &  \leq \mathcal{W}_{2}\left(  \mathscr{P}_{\mu \preceq}^{cx},\xi\right) +\mathcal{W}_{2}\left(  \xi,\nu\right)  \text{ by \eqref{eq: cont_nvtex}}\\
    &  =\mathcal{W}_{2}\left(  \mu,\mathscr{P}_{\preceq \xi}^{cx}\right) +\mathcal{W}_{2}\left(  \xi,\nu\right)  \text{ by \eqref{eq : forward and backward are equal}.}
\end{align*}
This gives the inequality $\mathcal{W}_{2}\left(  \mu,\mathscr{P}^{cx}_{\preceq \nu}\right)
-\mathcal{W}_{2}\left(  \mu,\mathscr{P}_{\preceq \xi}^{cx}\right)
\leq \mathcal{W}_{2}\left(  \nu,\xi\right)$. Since  the same inequality holds with exchanging $\nu$ and $\xi$, this completes the proof of the Lipschitz continuity for backward projection.
% that exactly the same method gives $\mathcal{W}_{2}\left(  \mu,\mathscr{P}_{\preceq \xi}^{cx}\right)
% - \mathcal{W}_{2}\left(  \mu,\mathscr{P}^{cx}_{\preceq \nu}\right)
% \leq \mathcal{W}_{2}\left(  \nu,\xi\right)$, completing the proof of the Lipschitz continuity for backward projection. } 
%the second inequality of \eqref{lm:triangle_vtex1}. 
% The first inequality of \eqref{lm:triangle_vtex1} follows
% by applying the above inequality to $\mathcal{W}_{2}\left(  \mu,\mathscr{P}%
% _{\preceq \xi}^{cx}\right)  $ with $\nu$ being the insertion marginal$,$%
% \[
% \mathcal{W}_{2}\left(  \mu,\mathscr{P}_{\preceq \xi}^{cx}\right)
% \leq \mathcal{W}_{2}\left(  \mu,\mathscr{P}^{cx}_{\preceq \nu}\right)
% +\mathcal{W}_{2}\left(  \nu,\xi\right)  .
% \]

The Lipschitz continuity of the forward projection can be proved with the same method, and we skip it.
% is equivalent to,
% \[
%     -\mathcal{W}_{2}\left(  \mu,\xi\right)  +\mathcal{W}_{2}\left(  \mathscr{P}_{\xi\leq }^{cx},\nu\right)  \leq \mathcal{W}_{2}\left(
%     \mathscr{P}_{\mu \preceq}^{cx},\nu\right)  \leq \mathcal{W}_{2}\left(  \mu,\xi\right)  + \mathcal{W}_{2}\left(  \mathscr{P}_{\xi\leq }^{cx}, \nu\right).
% \]
% To prove it, proceeding the previous step in a similar way implies
% \begin{align*}
%     \mathcal{W}_{2}\left(  \mathscr{P}_{\mu \preceq}^{cx},\nu\right)   & =\mathcal{W}_{2}\left(  \mu,\mathscr{P}^{cx}_{\preceq \nu}\right) \text{
% by \eqref{eq : forward and backward are equal}}\\
%     &  \leq \mathcal{W}_{2}\left(  \mu,\xi\right)  +\mathcal{W}_{2}\left(\xi,\mathscr{P}^{cx}_{\preceq \nu}\right) \text{ by \eqref{eq: cont_nvtex}}\\
%     &  \leq \mathcal{W}_{2}\left(  \mu,\xi\right)  +\mathcal{W}_{2}\left(\mathscr{P}_{\xi\leq }^{cx},\nu\right)  \text{
% by \eqref{eq : forward and backward are equal}}.
% \end{align*}
% The other side follows similarly, The conclusion follows.
\end{proof}

\subsection{Quantitative stability}
Combining what we have so far, we obtain the stability of the Wasserstein projections with respect to both marginals, one of the main results in our paper.

\begin{theorem}[Quantitative stability]\label{thm: consistency of projected Wasserstein-1}
For any probability measures $\mu, \mu'$ and $\nu, \nu'$, 
\begin{align}\label{eq : upper bound of W_2-1}
    \left|\mathcal{W}_{2}\left(  \mu',\mathscr{P}_{\preceq \nu'}^{cx}\right) - \mathcal{W}_{2}\left(
    \mu,\mathscr{P}^{cx}_{\preceq \nu}\right)\right| &\leq    \mathcal{W}_{2}\left( \mu, \mu' \right)  + \mathcal{W}_{2}\left(  \nu,\nu'\right).
\end{align}
\end{theorem}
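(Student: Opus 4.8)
The plan is to combine the two Lipschitz continuity lemmas that have already been established, namely Lemma~\ref{lem:cont_nvtex} (continuity with respect to the non-vertex marginal) and Lemma~\ref{lem:cont_vtex} (continuity with respect to the vertex marginal), via a single intermediate term. The natural choice of intermediate object is $\mathcal{W}_{2}\left(\mu',\mathscr{P}^{cx}_{\preceq \nu}\right)$: this quantity differs from $\mathcal{W}_{2}\left(\mu,\mathscr{P}^{cx}_{\preceq \nu}\right)$ only in the non-vertex marginal, and differs from $\mathcal{W}_{2}\left(\mu',\mathscr{P}^{cx}_{\preceq \nu'}\right)$ only in the vertex marginal. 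So the first step is simply to write
\begin{align*}
    \left|\mathcal{W}_{2}\left(\mu',\mathscr{P}^{cx}_{\preceq \nu'}\right) - \mathcal{W}_{2}\left(\mu,\mathscr{P}^{cx}_{\preceq \nu}\right)\right|
    &\leq \left|\mathcal{W}_{2}\left(\mu',\mathscr{P}^{cx}_{\preceq \nu'}\right) - \mathcal{W}_{2}\left(\mu',\mathscr{P}^{cx}_{\preceq \nu}\right)\right| \\
    &\qquad + \left|\mathcal{W}_{2}\left(\mu',\mathscr{P}^{cx}_{\preceq \nu}\right) - \mathcal{W}_{2}\left(\mu,\mathscr{P}^{cx}_{\preceq \nu}\right)\right|
\end{align*}
by the triangle inequality for the absolute value on $\mathbb{R}$.

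The second step is to bound each of the two terms on the right-hand side. The first term has a fixed non-vertex marginal $\mu'$ and varying vertex marginals $\nu',\nu$, so Lemma~\ref{lem:cont_vtex} (equation~\eqref{eq: cont_vtex}) applies directly and bounds it by $\mathcal{W}_{2}\left(\nu,\nu'\right)$. The second term has a fixed vertex marginal $\nu$ and varying non-vertex marginals $\mu',\mu$, so Lemma~\ref{lem:cont_nvtex} (equation~\eqref{eq: cont_nvtex}) applies directly and bounds it by $\mathcal{W}_{2}\left(\mu,\mu'\right)$. Adding the two bounds yields exactly \eqref{eq : upper bound of W_2-1}, which completes the proof.

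There is essentially no obstacle here; the entire content of the theorem has been front-loaded into Lemmas~\ref{lem:cont_nvtex} and~\ref{lem:cont_vtex}, and in particular into the backward–forward swap property \eqref{eq : forward and backward are equal} of Theorem~\ref{thm:swap_prop}, which was the genuinely nontrivial ingredient used to prove Lemma~\ref{lem:cont_vtex}. The only thing to be mildly careful about is to route the intermediate term through the correct marginal so that each lemma is invoked with the appropriate marginal held fixed; using $\mathcal{W}_{2}\left(\mu,\mathscr{P}^{cx}_{\preceq \nu'}\right)$ as the intermediate term instead would work equally well by symmetry. The forward-projection version of the statement, if desired, follows identically using the forward halves of the two lemmas.
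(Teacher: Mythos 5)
Your proposal is correct and is essentially the paper's own argument: the paper also splits via the triangle inequality through a single intermediate term (it happens to use $\mathcal{W}_{2}\left(\mu,\mathscr{P}_{\preceq \nu'}^{cx}\right)$, the symmetric choice you mention) and then invokes \eqref{eq: cont_nvtex} and \eqref{eq: cont_vtex} on the two resulting differences. No gaps.
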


\begin{proof}
Observe that
    \begin{align*}
   & \left|\mathcal{W}_{2}\left(  \mu',\mathscr{P}_{\preceq \nu'}^{cx}\right) - \mathcal{W}_{2}\left(
    \mu,\mathscr{P}^{cx}_{\preceq \nu}\right)\right| \\
   &\leq  \left|\mathcal{W}_{2}\left(  \mu',\mathscr{P}_{\preceq \nu'}^{cx}\right) - \mathcal{W}_{2}\left(  \mu,\mathscr{P}_{\preceq \nu'}^{cx}\right) + \mathcal{W}_{2}\left(  \mu,\mathscr{P}_{\preceq \nu'}^{cx}\right)  - \mathcal{W}_{2}\left(
    \mu,\mathscr{P}^{cx}_{\preceq \nu}\right)\right| \\
    & \leq  \left|\mathcal{W}_{2}\left(  \mu',\mathscr{P}_{\preceq \nu'}^{cx}\right) - \mathcal{W}_{2}\left(  \mu,\mathscr{P}_{\preceq \nu'}^{cx}\right)\right| + \left| \mathcal{W}_{2}\left(  \mu,\mathscr{P}_{\preceq \nu'}^{cx}\right)  - \mathcal{W}_{2}\left(
    \mu,\mathscr{P}^{cx}_{\preceq \nu}\right)\right|.
\end{align*}
Applying \eqref{eq: cont_nvtex} and \eqref{eq: cont_vtex} for the first term and the second term in the above last line, respectively, the conclusion follows.
\end{proof}

\begin{remark}[$\mathcal{W}_p$ case]\label{rem:Stable-Wp}
As mentioned in Remark~\ref{rem:Wp-swap},
the backward-forward swapping 
property \eqref{eq : forward and backward are equal} of Theorem~\ref{thm:swap_prop} holds for $\mathcal{W}_p$ with appropriate modification. Then, Lemma~\ref{lem:cont_vtex} and Theorem~\ref{thm: consistency of projected Wasserstein-1} can be extended to $\mathcal{W}_p$ for $1\leq p <\infty$ with exactly the same proof. In what follows, we will simply focus on the $\mathcal{W}_2$ case since it is sufficient for our statistical application. 
\end{remark}

\section{Convergence rate of $\mathcal{W}_{2}\left(  \mu_n,\mathscr{P}_{\preceq \nu_m}^{cx}\right)$}\label{sec : Convergence rate}
In this section, we tackle the question of whether the empirical projection distance converges to the true projection distance, and if it does at what rate is the convergence? The latter question is related to the rate of convergence of the empirical measure in the classical Wasserstein distance.

Understanding the rate of convergence of empirical measures in Wasserstein distance has been an active research field due to the applicability of Wasserstein geometry in various scientific areas. In his seminal paper \cite{dudley69} Dudley paved the way of understanding this problem, but the proof works for only $\mathcal{W}_1$. Since then, there have been numerous papers to obtain the rate of convergence in Wasserstein distance beyond $\mathcal{W}_1$: see \cite{MR2280433, MR2861675, Dereich_Scheutzow_Schottstedt2013, MR3189084, NF_AG_rate_Wasserstein, MR4359822, merigot2021non, MR4255123, MR4441130}. Lower bounds established in \cite{JW_FB_sample_rates} indicate that in general, the rate of convergence of the usual empirical distribution suffers the curse of dimensionality, which says that the rate of convergence is $O\left( n^{-1/d} \right)$. This rate is inevitable unless a lower dimensional structure is assumed.

In this paper we particularly focus on two families of distributions, one satisfying \emph{log-Sobolev inequality} and the other having \emph{bounded support}, and utilize the results from \cite{NF_AG_rate_Wasserstein, JW_FB_sample_rates}. In \cite{NF_AG_rate_Wasserstein}, the authors prove the upper bound of the convergence rate of Wasserstein distance of empirical measures with a higher finite moment assumption, and concentration inequalities. The authors of \cite{JW_FB_sample_rates} prove the rate of convergence in $p$-Wasserstein distance of empirical measures and a sub-gaussian type concentration inequality with the assumption of bounded support. In particular, if the support of $\mu$ has an intrinsic lower dimension, \cite[Theorem 1]{JW_FB_sample_rates} provides the faster convergence rate of Wasserstein distance of empirical measures, relying on the \emph{Wasserstein dimension}. %In \cite{MR4036051}, the author proves several versions of concentration inequalities. Among them, we will leverage the sub-gaussian type concentration inequality with the assumption that measures satisfy the \emph{log-Sobolev inequality}.

Recall that a probability measure $\mu$ on $\mathbb{R}^d$ is said to satisfy the log-Sobolev inequality with constant $\kappa>0$ if for any smooth function $f$,
\begin{equation}\label{eq: log-sobolev definition}
    \operatorname{Ent}_\mu(f^2) \leq \kappa \int_{\mathbb{R}^d} \big|\nabla f(x)\big|^2\,d\mu(x),
\end{equation}
where $\operatorname{Ent}_\mu(f^2) = \int_{\mathbb{R}^d} f^2(x)\log \left( \frac{f^2(x)}{\int_{\mathbb{R}^d}f^2(x)\,d\mu(x)} \right) d\mu(x)$ is the entropy functional.

In the following, we will derive desirable concentrations for Wasserstein projection. In the case of  Wasserstein distance, sharp concentrations for log-Sobolev and bounded-support measures are respectively developed in \cite{lei2020convergence} and \cite{JW_FB_sample_rates}; see also references therein. For Wasserstein projections, the following concentration inequalities are new and they are expected to play equally important roles when statistical inference is involved.

\begin{lemma}[Concentration inequality]\label{lem: concentration}
Let $\mu, \nu \in \mathcal{P}_2 \left( \mathbb{R}^d \right)$.
\begin{enumerate}
    \item[(i)] If $\mu$ and $\nu$ satisfy log-Sobolev inequality \eqref{eq: log-sobolev definition} with some constant $\kappa>0$, then for $t\geq 0$,
\begin{equation}\label{eq: concentrate_log_Sob}
    \mathbb{P}\left(  \left\vert \mathcal{W}_{2}\left(  \mu_n,\mathscr{P}_{\preceq \nu_m}^{\text{cx}}\right)  -\mathbb{E} \left[ \mathcal{W}_{2}\left(  \mu_n,\mathscr{P}_{\preceq \nu_m}^{\text{cx}}\right)\right] \right\vert \geq t\right)  \leq  2\exp\left(  -\frac{\left(  m\wedge n\right)  t^{2}}{2\kappa}\right).
\end{equation}
    \item[(ii)] If $\mu$ and $\nu$ are supported with bounded supports of diameter $D > 0$, then for $t \geq 0$,
\begin{equation}\label{eq: concentrate_bd_supp}
    \mathbb{P}\left(  \left \vert \mathcal{W}_{2}\left(  \mu_{n},\mathscr{P}_{\preceq\nu_{m}}^{\text{cx}}\right)  -\mathbb{E}\left[  \mathcal{W}_{2}\left(\mu_{n},\mathscr{P}_{\preceq\nu_{m}}^{\text{cx}}\right)  \right] \right\vert \geq t\right)  \leq 2\exp\left(  -\frac{\left(  m\wedge n\right)  ^{2}t^{4}}{2\left(  m+n\right)  D^{4}}\right)  .
\end{equation}
\end{enumerate}
\end{lemma}
The proof of this lemma is involved and is given in the \Cref{app: sec : Convergence rate}.

\begin{remark}
By the backward-forward swap property \eqref{eq : forward and backward are equal}, the forward Wasserstein projection
$\mathcal{W}_{2}\left( \mathscr{P}_{\mu_{n}\preceq}^{\text{cx}},\nu
_{m}\right)$ enjoys the same concentration inequalities.
\end{remark}

The next is an upper bound of the expectation of the Wasserstein distance of the empirical measures with finite higher moments and bounded support, respectively, developed in \cite[Theorems 1]{NF_AG_rate_Wasserstein} and \cite[Proposition 5]{JW_FB_sample_rates}. Although \cite[Theorems 1]{NF_AG_rate_Wasserstein} does not describe the exact constant, it is not too hard to obtain it following their proof step by step. Here we state their result for $p=2$ only.

\begin{lemma}[Upper bound of the expected empirical Wasserstein distance]\cite[Theorems 1]{NF_AG_rate_Wasserstein}\cite[Proposition 5]{JW_FB_sample_rates} \label{thm : upper bound of expected distance}
\begin{enumerate}
    \item[(i)] Let $\mu$ be a probability measure on $\mathbb{R}^{d}$ with a finite $5$-th moment, $M_5(\mu):= \int_{\mathbb{R}^{d}}\Vert x\Vert^{5}_5 d\mu(x)$, and let $\mu_{n}$ be an empirical measure for $\mu$. Then, for all $n \geq 1$,
\begin{equation}\label{eq: convergence rate of FG15}
    \begin{aligned}
    \mathbb{E}[\mathcal{W}_2(\mu,\mu_n)] \leq F(\mu, n ,d) := 20 d M^{\frac{1}{5}}_5(\mu) \times
    \left\{ 
    \begin{array}{ll}
    n^{-\frac{1}{4}} & \textrm{if $d=1, 2, 3$,}\\
    n^{-\frac{1}{4}}\sqrt{\log (1 + n)} & \textrm{if $d=4$,}\\
    n^{-\frac{1}{d}} & \textrm{if $d \geq 5$.}
    \end{array} 
    \right.
    \end{aligned}
\end{equation}
    \item[(ii)] Let $\mu$ be a probability measure on $\mathbb{R}^{d}$ with a bounded support with diameter $D$. If $k > d^*_2(\mu) \vee 4$, then there exists constant $C=C(k) > 0$ such that
\begin{equation}\label{eq: convergence rate of WB19}
    \mathbb{E}[ \mathcal{W}_2(\mu, \mu_n)] \leq G(\mu, n, k):= \left( D^2  3^{\frac{12k}{k - 4} + 1}\left( \frac{1}{3^{\frac{k}{2} - 2} - 1} + 3 \right) n^{- \frac{2}{k}} + D^2 C^{\frac{k}{2}} n^{- \frac{1}{2}} \right)^{\frac{1}{2}}
\end{equation}
where $d^*_p(\mu)$ is the upper p-Wasserstein dimension.
\end{enumerate}
\end{lemma}

\begin{remark}
In general, if a probability measure satisfies the log-Sobolev inequality, then it has finite moments: indeed, it is known \cite[Introduction]{ledoux1999concentration} that if a probability measure $\mu$ satisfies the log-Sobolev inequality, then $\mu$ necessarily satisfies the stronger property of exponential integrability: $\int e^{\alpha \Vert x \Vert^2} d\mu(x)<\infty$ for some $\alpha>0$.  Hence, the precondition of Lemma \ref{thm : upper bound of expected distance} (i) is automatically satisfied if \eqref{eq: log-sobolev definition} holds for some $\kappa > 0$. On the other hand, it is possible to give concrete sufficient criteria for a measure to satisfy the log-Sobolev inequality, for example by way of the Holley-Stroock theorem \cite[p. 200]{ledoux1999concentration}.
\end{remark}

\begin{remark}
Note that in \eqref{eq: convergence rate of WB19} the constant of $n^{-\frac{2}{k}}$ decreases as $k$ increases as long as $k > 4$ though the constant $C(k)$ depends on $k$ exponentially. If $d^*_2(\mu) \leq 4$, then one can choose $k > 4$ arbitrarily to minimize the RHS of \eqref{eq: convergence rate of WB19}.

If the support of $\mu$ is a regular set of dimension $d$ and $\mu \ll \mathcal{H}^d$, the d-dimensional Hausdorff measure, then $d^*_p(\mu) = d$ for any $p \in [1, \infty)$. For example, if $\mu$ is absolutely continuous with respect to Lebesgue measure, then $d^*_p(\mu) = d$. See \cite[Propositions 8 and 9]{JW_FB_sample_rates} for more details.    
\end{remark}

The following proposition describes the tail probability of $\mathcal{W}_{2}\left(  \mu_n,\mathscr{P}_{\preceq \nu_m}^{cx}\right)$. It is valid when $t$ vanishes not too fast in terms of $n$ and $m$. Let $M > 0$ be fixed such that $M_5(\mu), M_5(\nu) \leq M$. Recall $F(\mu, n, d)$ and $G(\mu, n ,k)$ from \eqref{eq: convergence rate of FG15} and \eqref{eq: convergence rate of WB19}, respectively.

\begin{proposition}[Tail probability]\label{prop: tail_probability}
Let $\mu, \nu \in \mathcal{P}_2 \left( \mathbb{R}^d \right)$.
\begin{enumerate}
    \item[(i)] Assume that $\mu$ and $\nu$ satisfy the log-Sobolev inequality \eqref{eq: log-sobolev definition} with a constant $\kappa > 0$ and $M_5(\mu), M_5(\nu) \leq M$. Suppose $t$ satisfies 
\begin{equation}\label{eq : sample bound with respect to t}
    t > 4 \left( F(\mu, n, d) \vee F(\nu, m, d) \right).
\end{equation}
Then, 
\begin{equation*}\label{eq: p-value log sobolev}
    \mathbb{P}\left( \vert \mathcal{W}_{2}\left(   \mu_n,\mathscr{P}_{\preceq \nu_m}^{cx}\right) - \mathcal{W}_{2}\left(  \mu,\mathscr{P}_{\preceq \nu}^{cx}\right) \vert \geq t  \right)
    \leq  2\exp\left(  -\frac{\left(  m\wedge n\right)  t^{2}}{8\kappa}\right).
\end{equation*}

    \item[(ii)] Assume that $\mu$ and $\nu$ have bounded supports with diameter at most $D$ and $k > d^*_2(\mu) \vee d^*_2(\nu) \vee 4$. Suppose $t$ satisfies 
\begin{equation}\label{eq : sample bound with respect to t bounded support}
    t > 4 \left( G(\mu, n ,k) \vee G(\nu, m ,k) \right).
\end{equation}
Then,  
\begin{equation*}\label{eq: p-value bounded support}
    \mathbb{P}\left(  \vert \mathcal{W}_{2}\left(  \mu_n,\mathscr{P}_{\preceq \nu_m}^{cx}\right) - \mathcal{W}_{2}\left(  \mu,\mathscr{P}_{\preceq \nu}^{cx}\right) \vert \geq t   \right) \leq  2\exp\left(  -\frac{ \left(  m\wedge n\right)^{2}t^{4}}{32 \left(  m+n\right)  D^{4}}\right) 
\end{equation*}
\end{enumerate}

\end{proposition}
The proof of this proposition is involved and is given in the \Cref{app: sec : Convergence rate}.

Now, we are ready to describe 
the convergence rate of $\mathcal{W}_{2}\left(  \mu_n,\mathscr{P}_{\preceq \nu_m}^{cx}\right)$, which is straightforward by \Cref{prop: tail_probability}. We omit the proof.

\begin{theorem}[Convergence rate]\label{thm : Convergence rate}
Let $\mu, \nu \in \mathcal{P}_2 \left( \mathbb{R}^d \right)$.
\begin{enumerate}
    \item[(i)] Assume that $\mu$ and $\nu$ satisfy the log-Sobolev inequality \eqref{eq: log-sobolev definition} with some constant $\kappa > 0$ and $M_5(\mu), M_5(\nu) \leq M$. Then, 
\begin{equation}\label{eq: convergence rate under moment log-Sobolev}
\begin{aligned}
    \left| \mathcal{W}_{2}\left(  \mu_n,\mathscr{P}_{\preceq \nu_m}^{cx}\right) -  \mathcal{W}_2(\mu, \mathscr{P}^{cx}_{\preceq \nu}) \right| < 4 \left( F(\mu, n, d) \vee F(\nu, m, d) \right)
\end{aligned}   
\end{equation}
with probability $1 - 2\exp\left(  -\frac{\left(  m\wedge n\right)  t^{2}}{8\kappa}\right)$.
    \item[(ii)] Assume that $\mu$ and $\nu$ have bounded supports with diameter at most $D$. If $k > d^*_2(\mu) \vee d^*_2(\nu) \vee 4$, then 
\begin{equation}\label{eq: convergence rate under bounded support}
\begin{aligned}
    \left| \mathcal{W}_{2}\left(  \mu_n,\mathscr{P}_{\preceq \nu_m}^{cx}\right) - \mathcal{W}_{2}\left(  \mu,\mathscr{P}_{\preceq \nu}^{cx}\right) \right| < 4 \left( G(\mu, n ,k) \vee G(\nu, m ,k) \right)
\end{aligned}   
\end{equation}
with probability $1 - 2\exp\left(  -\frac{ \left(  m\wedge n\right)^{2}t^{4}}{32 \left(  m+n\right)  D^{4}}\right)$.
\end{enumerate}
\end{theorem}

%\begin{proof}
%See Appendix \ref{proof: Convergence rate}.
%\end{proof}

%{\yh $\leftarrow$ [Given the complexity of the formulars above in (5.6) and (5.7), 'saying $n, m$ sufficiently large' there looks not so natural. Can we remove it or simplify the formulas (5.6) and (5.7)?]}
%{\color{blue} I removed $n,m$ sufficiently large. {\yh $\leftarrow$ \bf [It seems like the words are still there in the theorem....]} Simplifying the formulae is not easy... I think those are optimal at this moment. $\leftarrow$ {\yh \bf ok..}}

\begin{remark}
\eqref{eq: convergence rate under moment log-Sobolev} says that the rate of convergence for log-Sobolev case is roughly 
\[
    \left| \mathcal{W}_{2}\left(  \mu_n,\mathscr{P}_{\preceq \nu_m}^{cx}\right) -  \mathcal{W}_2(\mu, \mathscr{P}^{cx}_{\preceq \nu}) \right| \leq O((n \wedge m))^{-\frac{1}{d}}.
\]
Similarly, \eqref{eq: convergence rate under bounded support} says that the rate of convergence for the bounded case is roughly 
\[
    \left| \mathcal{W}_{2}\left(  \mu_n,\mathscr{P}_{\preceq \nu_m}^{cx}\right) -  \mathcal{W}_2(\mu, \mathscr{P}^{cx}_{\preceq \nu}) \right| \leq O((n \wedge m))^{-\frac{1}{k}}.
\]

There are pros and cons of \eqref{eq: convergence rate under moment log-Sobolev} and \eqref{eq: convergence rate under bounded support}. Although \eqref{eq: convergence rate under bounded support} has a better(smaller) dependency of the dimension than \eqref{eq: convergence rate under moment log-Sobolev} does, generally it is a hard task to estimate the upper Wassertein dimension of a measure, $d^*_p(\mu)$, in reality due to its fractal spirit. Notice that if $\mu$ has a bounded support, then it automatically satisfies log-Sobolev inequality with a constant proportional to the diameter. Hence, if $k$ is unknown, \eqref{eq: convergence rate under moment log-Sobolev} works for the bounded support case.

About the estimation of the intrinsic dimension, there is a recent paper \cite{JMLR:v23:21-1483} in which the authors suggest a new method to estimate it based on \cite{JW_FB_sample_rates}. We also refer the readers to \cite{camastra2016intrinsic} for the history about the intrinsic dimension estimation problem.
\end{remark}

%\begin{proof}
%See Appendix \ref{proof: p-value, errors}.
%\end{proof}

%{\yh $\leftarrow$ [I didn't check the formulas for $t, C_1, C_2$ above. Please make sure they are correct, and be consistent with the theorem 2.4]}

\section{\emph{p}-value, type I/II errors and consistency}\label{sec: p-value, errors}

In this section, we consider upper bounds of the $p$-value and type I/II errors of $\mathcal{W}_{2}\left(  \mu_n,\mathscr{P}_{\preceq \nu_m}^{cx}\right)$. Given an estimator $T$, the $p$-value for (a one-sided right-tail) test-statistic distribution is $\mathbb{P}\left( T \geq t \,|\, \boldsymbol{H_0} \right)$. 
% , which is the probability of obtaining results at least as extreme as the observed results of a statistical hypothesis test, assuming that the null hypothesis is correct.
The $p$-value can be computed %given the realization $t$ 
given the exact form of the distribution of $T$. However, in our case we only know the tail probability of $\mathcal{W}_{2}\left(  \mu_n,\mathscr{P}_{\preceq \nu_m}^{cx}\right)$.
%For this reason, upper bounds of the $p$-value and type I/II errors are controlled under some restriction.

%To control $p$-value and  type I/II errors quantitatively, %and uniformly, we need to specify the null hypothesis, i.e. 
%{\yh we need to restrict} ourselves to a certain family of the null hypotheses. 

Thanks to the fact that \Cref{prop: tail_probability} only depends on either log-Sobolev constant and $5$-th moment or the upper Wasserstein dimension and the diameter of the support, we can control $p$-value and type I/II errors uniformly over  two proper families of probability measures, which are big enough to cover most practical situations. A lower bound of the critical value $t(\alpha)$ for \eqref{eqn:test-new-t} can be derived too in the same way. %We skip the proofs of both propositions due to their unambiguity.

Before we start, we define
\[
       \mathcal{P}_{\preceq}:=\left\{ (\rho_1, \rho_2) \in \mathcal{P}_2(\mathbb{R}^d)\times \mathcal{P}_2(\mathbb{R}^d) : \mathcal{W}_{2}\left(  \rho_1,\mathscr{P}_{\preceq \rho_2}^{cx}\right) = 0 \right\},
\]
the set of $(\rho_1, \rho_2)$ in convex order. In a similar fashion, for $r > 0$ define
\[
    \mathcal{P}^c_{\preceq}(r):=\left\{ (\rho_1, \rho_2)  \in \mathcal{P}_2(\mathbb{R}^d) \times \mathcal{P}_2(\mathbb{R}^d) : \mathcal{W}_{2}\left(  \rho_1,\mathscr{P}_{\preceq \rho_2}^{cx}\right)  >  r \right\},
\]
the set of $(\rho_1, \rho_2)$ not in convex order, with a strict separation $r>0$.

%{\color{red}
%Jakwang : I changed the set as Young-heon's suggestion. But I still think that the previous version $\rho \in \mathcal{P}(\mathbb{R}^d \times \mathbb{R}^d)$ is more relevant in our setting. These definitions mean that $X_i$'s and $Y_j$'s are independent, but we do not assume it.

%{\yh $\leftarrow$ \bf [I think this new version is more correct, as the convex order is only for the marginals....]}
%}

In order to show that the proposed test \eqref{eqn:test-new-t} is consistent, we consider
%need to define
two families which rise from the log-Sobolev inequality case and the bounded support case: Define 
\begin{align*}
    \mathcal{P}_{\kappa, M} := \{
     (\rho_1, \rho_2)  \in \mathcal{P}_2(\mathbb{R}^d) \times \mathcal{P}_2(\mathbb{R}^d)
      : \ \ &\hbox{$M_5(\rho_1) \vee M_5(\rho_2) \leq M$,}\\
    &\hbox{and $\rho_1$ and $\rho_2$ satisfy \eqref{eq: log-sobolev definition} with $\kappa$}\}, 
\end{align*}
for fixed $\kappa, M>0$, and
\begin{align*}
   \mathcal{Q}_{k, D} := \{ 
    (\rho_1, \rho_2)  \in \mathcal{P}_2(\mathbb{R}^d) \times \mathcal{P}_2(\mathbb{R}^d)
      : \ \ 
   &\text{diam}(\spt(\rho_1)) \vee  \text{diam} (\spt(\rho_2)) \leq D\\
   &\text{ and } k > d^*_2(\rho_1) \vee d^*_2(\rho_2) \vee 4\}
\end{align*}
for fixed $k, D > 0$. In the following, $\mathcal{P}_{\preceq}$ will be restricted to either $\mathcal{P}_{\kappa, M}$ or $\mathcal{Q}_{k, D}$ to control type I and II errors uniformly. For notational simplicity, let us introduce
\begin{equation*}\label{eq: restriction 1}
    \mathcal{P}_{\preceq, \kappa, M}:= \mathcal{P}_{\kappa, M} \cap \mathcal{P}_{\preceq}, \quad  \mathcal{Q}_{\preceq, k, D}:= \mathcal{Q}_{k, D} \cap \mathcal{P}_{\preceq}
\end{equation*}
and similarly
\begin{equation*}\label{eq: restriction 2}
    \mathcal{P}^c_{\preceq, \kappa, M}(r):= \mathcal{P}_{\kappa, M} \cap \mathcal{P}^c_{\preceq}(r), \quad \mathcal{Q}^c_{\preceq, k, D}(r):= \mathcal{Q}_{k, D} \cap \mathcal{P}^c_{\preceq}(r).
\end{equation*}
Now the following proposition is a direct consequence of \Cref{prop: tail_probability}, and we omit its proof.

\begin{proposition}[$p$-value, Type I error]\label{prop: p-value, Type I error}
Let $\mu, \nu \in \mathcal{P}_2 \left( \mathbb{R}^d \right)$.
\begin{enumerate}
    \item[(i)] Assume that $\mu$ and $\nu$ satisfy the log-Sobolev inequality \eqref{eq: log-sobolev definition} with a constant $\kappa > 0$ and $M_5(\mu), M_5(\nu) \leq M$. Let $\boldsymbol{H_0} : \mathcal{P}_{\preceq, \kappa, M}$. If $t$ satisfies \eqref{eq : sample bound with respect to t}, then
\[
     \sup_{\mathcal{P}_{\preceq, \kappa, M}} \mathbb{P}\left( \mathcal{W}_{2}\left(  \mu_n,\mathscr{P}_{\preceq \nu_m}^{cx}\right) \geq t  \right) \leq 2\exp\left(  -\frac{\left(  m\wedge n\right)  t^{2}}{8\kappa}\right).
\]
Define
\begin{equation}\label{eq: critical t log-sobolev}
    t_1(\alpha) :=  \left\{ 4 \left( F(\mu, n, d) \vee F(\nu, m, d) \right) \right\} \vee \left( \frac{8 \kappa \log \frac{2}{\alpha}}{n \wedge m} \right)^{\frac{1}{2}}.
\end{equation}
If $t \geq t_1(\alpha)$, then
\[
     \sup_{\mathcal{P}_{\preceq, \kappa, M}} \mathbb{P}\left( \mathcal{W}_{2}\left(  \mu_n,\mathscr{P}_{\preceq \nu_m}^{cx}\right) \geq t  \right) \leq \alpha.
\]

    \item[(ii)] Assume that $\mu$ and $\nu$ have bounded supports with diameter at most $D$ and $d^*_2(\mu) \vee d^*_2(\nu) \vee 4 < k$. Let $\boldsymbol{H_0} : \mathcal{Q}_{\preceq, k, D}$. If $t$ satisfies \eqref{eq : sample bound with respect to t bounded support}, then
\[
     \sup_{\mathcal{Q}_{\preceq, k, D}}\mathbb{P}\left( \mathcal{W}_{2}\left(  \mu_n,\mathscr{P}_{\preceq \nu_m}^{cx}\right) \geq t \right) \leq  2\exp\left(  -\frac{ \left(  m\wedge n\right)^{2}t^{4}}{32 \left(  m+n\right)  D^{4}}\right) .
\] 
Define
\begin{equation}\label{eq: critical t bounded support}
    t_2(\alpha) :=  \left\{ 4 \left( G(\mu, n ,k) \vee G(\nu, m ,k) \right) \right\} \vee \left( \frac{32 (n+m) D^4 \log \frac{2}{\alpha}}{n \wedge m} \right)^{\frac{1}{4}}
\end{equation}
If $t \geq t_2(\alpha)$, then
\[
     \sup_{\mathcal{Q}_{\preceq, k, D}}\mathbb{P}\left( \mathcal{W}_{2}\left(  \mu_n,\mathscr{P}_{\preceq \nu_m}^{cx}\right) \geq t  \right) \leq \alpha.
\]
\end{enumerate}
\end{proposition}

Based on the derivation of a lower bound of the critical value $t(\alpha)$, one can also compute type II error, which is the probability of failing to reject $\boldsymbol{H_0}$ under $\boldsymbol{H_A}$. In our setting, the type II error is given by
\[
    \sup_{\mathcal{P}^c_{\preceq}(r)}\mathbb{P}\left( \mathcal{W}_{2}\left(  \mu_n,\mathscr{P}_{\preceq \nu_m}^{cx}\right) < t  \right).
\]
Before estimating this, it is worthwhile to justify the necessity of the consideration of $\mathcal{P}^c_{\preceq}(r)$ with a positive $r$. Assume that $\boldsymbol{H_A} : \mathcal{P}^c_{\preceq}(0)$, i.e., no strict separation between $\boldsymbol{H_0}$ and $\boldsymbol{H_A}$. Observe that
\begin{align*}
    &\sup_{\mathcal{P}^c_{\preceq}(0)}\mathbb{P}\left( \mathcal{W}_{2}\left(  \mu_n,\mathscr{P}_{\preceq \nu_m}^{cx}\right) < t  \right)\\
    &= \sup_{\mathcal{P}^c_{\preceq}(0)}\mathbb{P}\left(  \mathcal{W}_{2}\left(  \mu_n,\mathscr{P}_{\preceq \nu_m}^{cx}\right) -\mathcal{W}_{2}\left(
    \mu,\mathscr{P}^{cx}_{\preceq \nu}\right)  < t - \mathcal{W}_{2}\left(
    \mu,\mathscr{P}^{cx}_{\preceq \nu}\right) \right).
    % \\
    % &\leq  \sup_{\mathcal{P}^c_{\preceq}(0)}\mathbb{P}\left( \vert \mathcal{W}_{2}\left(  \mu_n,\mathscr{P}_{\preceq \nu_m}^{cx}\right) -\mathcal{W}_{2}\left(
    % \mu,\mathscr{P}^{cx}_{\preceq \nu}\right) \vert > \vert \mathcal{W}_{2}\left(
    % \mu,\mathscr{P}^{cx}_{\preceq \nu}\right) - t  \vert \right). \hbox{\yh $\leftarrow$ [How do we get this??]}
\end{align*}
The problem is that, while we know that the value of 
$\mathcal{W}_{2}\left(  \mu_n,\mathscr{P}_{\preceq \nu_m}^{cx}\right)$ is close to that of $\mathcal{W}_{2}\left(\mu,\mathscr{P}^{cx}_{\preceq \nu}\right)$, we do not know the latter %the value of $\mathcal{W}_{2}\left(\mu,\mathscr{P}^{cx}_{\preceq \nu}\right)$ 
(we merely know that it is strictly positive under the alternative though). Thus, we cannot uniformly bound the Type II error over $\mathcal{P}^c_{\preceq}(0)$.

To overcome this issue, we need to impose some separation (with respect to the projection distance) between the null and the alternative hypotheses. Accordingly, let us consider the \emph{strict alternative} 
\[
    \boldsymbol{H_A}(\alpha): \mathcal{P}^c_{\preceq}(2t(\alpha))
\]
where $t(\alpha)$ is either $t_1(\alpha)$ or $t_2(\alpha)$ given in Proposition~\ref{prop: p-value, Type I error}. It establishes that there is, in some sense, $t(\alpha)$-separation between $\mathcal{P}_{\preceq}$ and $\mathcal{P}^c_{\preceq}(2t(\alpha))$. With the separation, type I/II errors can be controlled simultaneously.

Continuing the previous calculation, it follows that for any $t \leq t(\alpha)$
\begin{align*}
    &\sup_{\mathcal{P}^c_{\preceq}(2t(\alpha))}\mathbb{P}\left(\mathcal{W}_{2}\left(\mu_{n},\mathscr{P}_{\preceq\nu_{m}}^{cx}\right) < t \right)\\
    &\leq \sup_{\mathcal{P}^c_{\preceq}(2t(\alpha))}\mathbb{P}\left(\mathcal{W}_{2}\left(\mu_{n},\mathscr{P}_{\preceq\nu_{m}}^{cx}\right) < t(\alpha)  \right)\\
    &\leq \sup_{\mathcal{P}^c_{\preceq}(2t(\alpha))}\mathbb{P}\left( \mathcal{W}_{2}\left(\mu_{n},\mathscr{P}_{\preceq\nu_{m}}^{cx}\right) - \mathcal{W}_{2}\left(
    \mu,\mathscr{P}^{cx}_{\preceq \nu}\right) < t(\alpha) - \mathcal{W}_{2}\left(
    \mu,\mathscr{P}^{cx}_{\preceq \nu}\right)  \right)\\
    &= \sup_{\mathcal{P}^c_{\preceq}(2t(\alpha))}\mathbb{P}\left( \vert \mathcal{W}_{2}\left(\mu_{n},\mathscr{P}_{\preceq\nu_{m}}^{cx}\right) - \mathcal{W}_{2}\left(
    \mu,\mathscr{P}^{cx}_{\preceq \nu}\right) \vert  > \vert t(\alpha) - \mathcal{W}_{2}\left(
    \mu,\mathscr{P}^{cx}_{\preceq \nu}\right)  \vert \right)\\
    &\leq  \sup_{\mathcal{P}^c_{\preceq}(2t(\alpha))}\mathbb{P}\left( \vert \mathcal{W}_{2}\left(\mu_{n},\mathscr{P}_{\preceq\nu_{m}}^{cx}\right) - \mathcal{W}_{2}\left(
    \mu,\mathscr{P}^{cx}_{\preceq \nu}\right) \vert  >  t(\alpha) \right)
\end{align*}
where the last inequality follows from the definition of $\mathcal{P}^c_{\preceq}(2t(\alpha))$. Now applying \Cref{prop: tail_probability}, a routine calculation gives us an upper bound of the type II error for $\mathcal{W}_{2}\left(\mu_{n},\mathscr{P}_{\preceq\nu_{m}}^{cx}\right)$: %; the proof for this upper bound is identical to that of Proposition~\ref{prop: p-value, Type I error} and is therefore omitted.

\begin{proposition}[Type II error]\label{prop: type II error}
Let $\mu, \nu \in \mathcal{P}_2 \left( \mathbb{R}^d \right)$.
\begin{enumerate}
    \item[(i)] Assume that $\mu$ and $\nu$ satisfy the log-Sobolev inequality \eqref{eq: log-sobolev definition} with a constant $\kappa > 0$ and $M_5(\mu), M_5(\nu) \leq M$ and $t_1(\alpha)$ is given as \eqref{eq: critical t log-sobolev}. Let $\boldsymbol{H_A}(\alpha) :  \mathcal{P}^c_{\preceq, \kappa, M}(2t_1(\alpha))$. If $t \leq t_1(\alpha)$, then
\[
    \sup_{\mathcal{P}^c_{\preceq, \kappa, M}(2t_1(\alpha))}\mathbb{P}\left(\mathcal{W}_{2}\left(\mu_{n},\mathscr{P}_{\preceq\nu_{m}}^{cx}\right)< t \right) \leq \alpha.
\]

    \item[(ii)] Assume that $\mu$ and $\nu$ have bounded supports with diameter at most $D$ and $d^*_2(\mu) \vee d^*_2(\nu) \vee 4 < k$ and $t_2(\alpha)$ is given as \eqref{eq: critical t bounded support}. Let $\boldsymbol{H_A}(\alpha) : \mathcal{Q}^c_{\preceq, k, D}(2t_2(\alpha))$. If $t \leq t_2(\alpha)$, then
\[
    \sup_{\mathcal{Q}^c_{\preceq, k, D}(2t_2(\alpha))}\mathbb{P}\left(\mathcal{W}_{2}\left(\mu_{n},\mathscr{P}_{\preceq\nu_{m}}^{cx}\right)< t \right) \leq \alpha.
\]
\end{enumerate}     
\end{proposition}

% Combining Propositions~\ref{prop: p-value, Type I error} and \ref{prop: type II error}, the consistency of \eqref{eqn:test-new-t} can be derived. Since $t(\alpha)$ is proportional to $\log \frac{1}{\alpha}$, as long as $\alpha= o\left(\exp(-(n \wedge m)) \right)$ the test $\Phi(t(\alpha_{n,m}))$ is consistent. Again, the proof is routine so we omit it.

Since $t(\alpha)$ is proportional to $\log \frac{1}{\alpha}$,  a routine calculation combining Propositions~\ref{prop: p-value, Type I error} and \ref{prop: type II error} will give  the consistency of \eqref{eqn:test-new-t}, as long as $\alpha= o\left(\exp(-(n \wedge m)) \right)$:

\begin{theorem}[Consistency]\label{thm: consistency of the test}
Let $\alpha_{n,m}:= \exp(-(n\wedge m)^s )$ for some $0 < s < 1$.
\begin{enumerate}
    \item[(i)] Assume that $\mu$ and $\nu$ satisfy the log-Sobolev inequality \eqref{eq: log-sobolev definition} with a constant $\kappa > 0$ and $M_5(\mu), M_5(\nu) \leq M$ and $t_1(\alpha)$ is given as \eqref{eq: critical t log-sobolev}.  %and
%\[
%    t_1(\alpha_{n,m}) :=  \left(80 \sqrt{d} M \left( \frac{\left(\log (n \vee m)  \right)^{2 \mathds{1}_{d=4}}}{n \wedge m} \right)^{\frac{1}{d \vee 4}} \right) \vee \left( \frac{32 \kappa \log \frac{4}{\alpha_{n,m}}}{n \wedge m} \right)^{\frac{1}{2}}.
%\]
Consider
\[
    \boldsymbol{H_0} : \mathcal{P}_{\preceq, \kappa, M} \text{ vs } \boldsymbol{H_A}(\alpha_{n,m}) : \mathcal{P}^c_{\preceq, \kappa, M}(2t_1(\alpha_{n,m}/2)).
\]
Then, 
\begin{align*}  
 \alpha_{n,m}
    \geq & \  \sup_{\mathcal{P}_{\preceq, \kappa, M}} \mathbb{P}\left( \mathcal{W}_{2}\left(  \mu_n,\mathscr{P}_{\preceq \nu_m}^{cx}\right) \geq t_1(\alpha_{n,m}/2)  \right)\\
   &  \quad  + \sup_{\mathcal{P}^c_{\preceq, \kappa, M}(2t_1(\alpha_{n,m}/2))}\mathbb{P}\left(\mathcal{W}_{2}\left(\mu_{n},\mathscr{P}_{\preceq\nu_{m}}^{cx}\right)< t_1(\alpha_{n,m}/2)   \right). 
\end{align*}
    \item[(ii)] Assume that $\mu$ and $\nu$ have bounded supports with diameter at most $D$, $d^*_2(\mu) \vee d^*_2(\nu) \vee 4 < k$ and $t_2(\alpha)$ is given as \eqref{eq: critical t bounded support}. %Let $\alpha_{n,m}:= \exp(-(n\wedge m)^s )$ for some $0 < s < 1$, and
%\[
%    t(\alpha_{n,m}) := \left( \frac{\sqrt{8} D \sqrt{C_1(k) \vee C_2(k)}}{(n \wedge m)^{\frac{1}{k}}} \right) \vee \left( \frac{32 D^4 \log \frac{4}{\alpha_{n,m}}}{n \wedge m} \right)^{\frac{1}{4}}
%\]
%where $C_1$ and $C_2$ are defined as \eqref{eq: bounded support constants}. 
Consider
\[
    \boldsymbol{H_0} : \mathcal{Q}_{\preceq, k, D} \text{ vs } \boldsymbol{H_A}(\alpha_{n,m}) : \mathcal{Q}^c_{\preceq, k, D}(2t_2(\alpha_{n,m}/2)).
\]
Then,
\begin{align*}
 \alpha_{n,m} 
   \geq & \  \sup_{\mathcal{Q}_{\preceq, k, D}} \mathbb{P}\left( \mathcal{W}_{2}\left(  \mu_n,\mathscr{P}_{\preceq \nu_m}^{cx}\right) \geq t_2(\alpha_{n,m}/2)  \right)\\
   & \quad + \sup_{\mathcal{Q}^c_{\preceq, k, D}(2t_2(\alpha_{n,m}/2))}\mathbb{P}\left(\mathcal{W}_{2}\left(\mu_{n},\mathscr{P}_{\preceq\nu_{m}}^{cx}\right)< t_2(\alpha_{n,m}/2)   \right) . 
\end{align*}
\end{enumerate}
\end{theorem}

\begin{remark}\label{rm:seperation-rate}
Theorem \ref{thm: consistency of the test} shows the consistency of \eqref{eqn:test-new-t}, i.e., our proposed test controls type I and II errors simultaneously and they become small as long as the number of samples grows and the significance level vanishes not too fast. Notice that this result relies on the choice of the separation rate $r_{n,m} = 2t(\alpha_{n,m})$. But this choice would not be optimal: if one can find smaller $r_{n,m}$ satisfying the consistency, then it would be better because such choice guarantees to regulate type I and II errors simultaneously over a larger family of distributions. For this reason, statisticians want to find a minimum separation rate. The \emph{minimax optimal separation distance} $\{r^*_{n,m} \}$ for a test (\cite[Definition 6.2.1]{gine2021mathematical}) is the minimum $r_{n,m}$ supporting a test to be consistent. Our choice $2t(\alpha_{n,m})$ automatically provides an upper bound of the minimax separation distance while its minimality is not confirmed. We leave the accomplishment of the minimax separation distance for future work.
\end{remark}

%
%There are two scenarios assumed to derive the $p$-value of $\mathcal{W}_2(\mu_n, \mathscr{P}^{cx}_{\preceq \nu_m})$: finite $5$-th moments and the log-Sobolev inequality, and bounded supports. Since the derivation of $p$-value under each of scenario is similar, we only pay attention to the first one. We want to emphasize that the second scenario is meaningful only if the intrinsic dimension is larger than $4$: otherwise, \eqref{eq: convergence rate under bounded support} is vacuous.

%\begin{lemma}\cite[Theorem 1]{Dereich_Scheutzow_Schottstedt2013}\label{thm : Dereich_Scheutzow_Schottstedt2013}
%Let $\mu$ be a measure on $\mathbb{R}^{d}$ with $d \geq 3$, and let $\mu_{n}$ be an empirical measure for $\mu$. %Also, let $p\in[1,d/2)$, $q>dp/(d-p)$ and $M_q:= \int_{\mathbb{R}^{d}}\Vert x\Vert^{q}d\mu(x)$ be the $q$-th moment of $\mu$. Then, there exists an explicit constant $\kappa_{p,q,d}$ depending only on $p,q$ and $d$, such that 
%\[
%    \mathbb{E}[W_{p}(\mu,\mu_n)] \leq\kappa_{p,q,d} M_q(\mu)^{1/q}n^{-1/d}.
%\]
%\end{lemma}

\section{Discussion on computation}\label{sec:computation}
A next natural question is how to compute the quantity $\mathcal{W}_{2}\left(  \mu_n,\mathscr{P}_{\preceq \nu_m}^{cx}\right)$. In this section we describe the set-up originally given in \cite[Equation (1.1)]{Alfonsi_2020}. In the next section we will discuss a numerical algorithm (See Section \ref{sec:entropic-Frank-Wolfe}) and experiments.

In \cite{Alfonsi_2020}, the authors provide the optimization problem to compute $\mathcal{W}_{2}\left(  \mu_n,\mathscr{P}_{\preceq \nu_m}^{cx}\right)$, stated as \cite[Equation (1.1)]{Alfonsi_2020}.
In \cite[Theorem 4.1]{Gozlan_Juillet20}, the authors consider a semi-discrete case in which $\nu_m$ is a discrete measure concentrated on the set of vertices of a simplex, and suggest how to compute $\mathcal{W}_{2}\left( \mu, \mathscr{P}_{\preceq \nu_m}^{cx} \right)$ for arbitrary $\mu$ with a compact support. The projection of $\mu$ onto $\mathscr{P}_{\preceq \nu_m}^{cx}$, denoted by $\overline{\mu}$, is the push-forward measure obtained by the composition of a translation and a projection map onto the convex hull of $\spt(\nu_m)$.

For the sake of completeness of the present paper, we provide our own proof of the computation of  $\mathcal{W}_{2}\left(  \mu_n,\mathscr{P}_{\preceq \nu_m}^{cx}\right)$ based on the idea of \cite[Theorem 4.1]{Gozlan_Juillet20}; in Lemma~\ref{lemma : barycentric characterization} and Theorem~\ref{thm : computing projection of mu_n} below. Although \cite{Alfonsi_2020} already resolves this question, it would be helpful for people in statistics or econometrics communities who might be not familiar with the Wasserstein projection or the martingale optimal transport to provide detailed explanation regarding the derivation of the formula to compute $\mathcal{W}_{2}\left(  \mu_n,\mathscr{P}_{\preceq \nu_m}^{cx}\right)$.

A key observation of the second proof of \cite[Theorem 4.1]{Gozlan_Juillet20} is that $\overline{\mu}$ should have the same barycenter of $\nu_m$, which is a simple consequence of convex order. For the very special case that $\mu_n$ and $\nu_m$ are supported in the set of vertices of a simplex this simple condition is sufficient, but obviously not enough for convex order in general cases; consider $\mu=\frac{1}{2}\delta_{e_1} + \frac{1}{2}\delta_{-e_1}$ and $\nu=\frac{1}{2}\delta_{e_2} + \frac{1}{2}\delta_{-e_2}$ for the standard basis vectors $e_1, e_2$ of $\R^2$, which have the same barycenter, but not in convex order. Therefore, it is insufficient for our case.

However, we can still use the idea of it. The hint is that for $\eta \in \mathscr{P}^{cx}_{\preceq \nu_m}$, every point of $\spt(\eta)$ should be a barycenter point of the subset of $\spt(\nu_m)$. We thus get the following lemma (proven in the \Cref{app: sec:computation}), which gives an effective parametrization of the (backward) convex order cone $\mathscr{P}^{cx}_{\preceq \nu_m}$ by the $m$-simplex.

\begin{lemma}\label{lemma : barycentric characterization}
For $m \in \mathbb{N}$, let $\Delta_m$ be the $m$-simplex. Assume that $\nu_m$ is a distribution over $m$-points, $\{y_1, \dots, y_m\}$. Let $T : \Delta_m \longrightarrow \text{conv}(\spt(\nu_m))$ such that $T(\boldsymbol{\alpha}) = \sum_{j=1}^m \alpha_j y_j$ for each $\boldsymbol{\alpha} := (\alpha_1, \dots, \alpha_m)^T \in \Delta_m$. Then,
\begin{equation}\label{eq : barycenter condition}
    \mathscr{P}^{cx}_{\preceq \nu_m} = \left\{ (T)_{\#}(\omega) : \omega \in \mathcal{P}(\Delta_m) \text{ s.t. }\int_{\Delta_m} \alpha_j d\omega(\boldsymbol{\alpha}) = \nu_m(y_j) \text{ for all $1 \leq j \leq m$} \right\}.
\end{equation}
\end{lemma}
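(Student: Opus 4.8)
The plan is to establish the two inclusions in \eqref{eq : barycenter condition} separately, in each case via Strassen's theorem \cite{strassen1965existence}: $\xi \preceq \nu_m$ holds if and only if there is a martingale coupling $\pi \in \mathcal{M}(\xi, \nu_m)$.

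For the inclusion ``$\subseteq$'', I would start from $\xi \in \mathscr{P}^{cx}_{\preceq \nu_m}$, pick a martingale coupling $\pi \in \mathcal{M}(\xi, \nu_m)$, and disintegrate it as $\pi(dx\,dy) = p_x(dy)\,\xi(dx)$. Since the second marginal $\nu_m$ charges only $\{y_1, \dots, y_m\}$, for $\xi$-a.e.\ $x$ the kernel $p_x$ is supported on those points, so $p_x = \sum_{j=1}^m \alpha_j(x)\,\delta_{y_j}$ where $\alpha_j(x) := p_x(\{y_j\})$ and $\boldsymbol{\alpha}(x) := (\alpha_1(x), \dots, \alpha_m(x))^T \in \Delta_m$; measurability of $x \mapsto \boldsymbol{\alpha}(x)$ comes from the disintegration theorem on Polish spaces. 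The martingale constraint $\mathbb{E}[Y \mid X] = X$ reads $x = \int y\, p_x(dy) = \sum_j \alpha_j(x) y_j = T(\boldsymbol{\alpha}(x))$ for $\xi$-a.e.\ $x$. Setting $\omega := (\boldsymbol{\alpha})_{\#}\,\xi \in \mathcal{P}(\Delta_m)$, this gives $(T)_{\#}\,\omega = (T \circ \boldsymbol{\alpha})_{\#}\,\xi = (\id)_{\#}\,\xi = \xi$, and since the second marginal of $\pi$ is $\nu_m$,
\[
    \int_{\Delta_m} \alpha_j\, d\omega = \int p_x(\{y_j\})\, d\xi(x) = \pi(\mathbb{R}^d \times \{y_j\}) = \nu_m(y_j)
\]
for each $j$, so $\xi$ lies in the right-hand set of \eqref{eq : barycenter condition}.

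For the inclusion ``$\supseteq$'', I would take $\omega \in \mathcal{P}(\Delta_m)$ with $\int_{\Delta_m} \alpha_j\, d\omega = \nu_m(y_j)$ for all $j$ and set $\xi := (T)_{\#}\,\omega$; since $T$ is continuous and $\Delta_m$ compact, $\xi$ has compact support, hence $\xi \in \mathcal{P}_2(\mathbb{R}^d)$. Then I would exhibit the natural martingale coupling: let $\boldsymbol{\alpha} \sim \omega$ and, conditionally on $\boldsymbol{\alpha}$, draw $Y \sim \sum_j \alpha_j\,\delta_{y_j}$, and let $\pi$ be the law of $(X, Y)$ with $X := T(\boldsymbol{\alpha})$. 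Its first marginal is $(T)_{\#}\,\omega = \xi$, its second marginal assigns $\int \alpha_j\, d\omega = \nu_m(y_j)$ to $y_j$, and since $X$ is a deterministic function of $\boldsymbol{\alpha}$ while $\mathbb{E}[Y \mid \boldsymbol{\alpha}] = \sum_j \alpha_j y_j = T(\boldsymbol{\alpha}) = X$, the tower property yields
\[
    \mathbb{E}[Y \mid X] = \mathbb{E}\big[\, \mathbb{E}[Y \mid \boldsymbol{\alpha}] \mid X \,\big] = \mathbb{E}[X \mid X] = X.
\]
Hence $\pi \in \mathcal{M}(\xi, \nu_m)$, and Strassen's theorem gives $\xi \preceq \nu_m$, i.e.\ $\xi \in \mathscr{P}^{cx}_{\preceq \nu_m}$.

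I do not anticipate a serious obstacle: the only points requiring care are the measurable selection of the disintegration kernel (standard on Polish spaces) and the second-moment bookkeeping for $\xi$. The conceptual content --- already flagged in the discussion preceding this lemma --- is that one must record the full coordinate vector $\boldsymbol{\alpha} \in \Delta_m$, i.e.\ essentially the martingale kernel $p_x$, rather than just its image $T(\boldsymbol{\alpha}) \in \text{conv}(\spt(\nu_m))$: when the $y_j$ are not affinely independent, a point of the convex hull admits many representations as a convex combination, and only after lifting to $\Delta_m$ does the characterization of $\mathscr{P}^{cx}_{\preceq \nu_m}$ become clean.
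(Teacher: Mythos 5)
Your proposal is correct. The inclusion $\mathscr{P}^{cx}_{\preceq \nu_m}\subseteq\{(T)_{\#}\omega:\dots\}$ is handled exactly as in the paper: Strassen's theorem gives a martingale coupling, disintegration produces the kernel weights $\boldsymbol{\alpha}(x)=\bigl(p_x(\{y_1\}),\dots,p_x(\{y_m\})\bigr)$, the martingale condition yields $T(\boldsymbol{\alpha}(x))=x$ a.e.\ (hence $(T)_{\#}\omega=\xi$), and the second-marginal condition gives the barycentric constraint; you are in fact slightly more explicit than the paper about why $(T)_{\#}\omega$ recovers $\xi$. Where you genuinely diverge is the reverse inclusion. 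The paper proves it without any coupling: it tests $(T)_{\#}\omega$ against an arbitrary convex $\varphi$ and applies Jensen's inequality, $\varphi\bigl(\sum_j\alpha_j y_j\bigr)\leq\sum_j\alpha_j\varphi(y_j)$, then uses the constraint $\int\alpha_j\,d\omega=\nu_m(y_j)$ to land on $\int\varphi\,d\nu_m$ — a direct verification of the defining inequality of convex order. You instead build the explicit martingale coupling (draw $\boldsymbol{\alpha}\sim\omega$, set $X=T(\boldsymbol{\alpha})$, draw $Y\mid\boldsymbol{\alpha}\sim\sum_j\alpha_j\delta_{y_j}$) and invoke Strassen again; the tower-property step is valid since $\sigma(X)\subseteq\sigma(\boldsymbol{\alpha})$. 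Both arguments are sound. Your route is more symmetric (Strassen in both directions) and has the mild bonus of exhibiting the martingale coupling explicitly, which is the object the subsequent computational theorem manipulates; the paper's route is more elementary for that inclusion, needing only Jensen and no existence theorem.
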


The next theorem, proved in the \Cref{app: sec:computation}, gives the same result as \cite[Equation (1.1)]{Alfonsi_2020}, and describes how to use this parametrization to compute $\mathcal{W}_2(\mu_n , \mathscr{P}^{cx}_{\preceq \nu_m})$ and a (unique) projection of $\mu_n$ onto $\mathscr{P}^{cx}_{\preceq \nu_m}$. We do not claim novelty here.

\begin{theorem}\cite[Equation (1.1)]{Alfonsi_2020}\label{thm : computing projection of mu_n}
Suppose two empirical distributions
\[
    \mu_n = \sum_{i=1}^n \mu_n(x_i) \delta_{x_i}, \quad \nu_m = \sum_{j=1}^m \nu_m(y_j) \delta_{y_j}    
\]
are given. For each $x_i \in \spt(\mu_n)$ we write $\boldsymbol{\alpha}(x_i) := (\alpha_1(x_i), \dots, \alpha_m(x_i) )^T$. Consider the following constrained minimization problem for $\boldsymbol{\alpha}$: 

\begin{align}\label{eq: minimization for projection}
    & \min_{\{\boldsymbol{\alpha}(x_1), \dots, \boldsymbol{\alpha}(x_n) \}} \sum_{i=1}^n \mu_n(x_i) \left( \sum_{j=1}^m \alpha_j(x_i) y_j - x_i \right)^2  \\ \nonumber
    & \text{s.t. } \boldsymbol{\alpha}(x_i) \in \Delta_m \text{ for all $x_i \in \spt(\mu_n)$}, \quad \sum_{i=1}^n \alpha_j(x_i) \mu_n(x_i) = \nu_m(y_j) \text{ for all $y_j \in \spt(\nu_m)$}. 
\end{align}
There exists a unique minimizer $\{ \boldsymbol{\alpha}^*(x_1), \dots, \boldsymbol{\alpha}^*(x_n) \}$ of \eqref{eq: minimization for projection} which induces the projection $\overline{\mu}_n$  of $\mu_n$ onto $\mathscr{P}^{cx}_{\preceq \nu_m}$ given in this way: for any measurable $E \subseteq \text{conv}(\spt(\nu_m))$, 
\begin{equation}\label{eq: form of projection}
    \overline{\mu}_n(E) = \sum_{i=1}^n \mu_n(x_i) \mathds{1}_{E} \left( \sum_{j=1}^m \alpha^*_j(x_i) y_j \right).
\end{equation}
\end{theorem}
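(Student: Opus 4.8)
The plan is to recognize the constrained minimization \eqref{eq: minimization for projection} as the weak optimal transport cost $\mathcal{T}_2(\nu_m\,|\,\mu_n)=\inf_{\pi\in\Pi(\mu_n,\nu_m)}\int\bigl(\int y\,p_x(dy)-x\bigr)^2\mu_n(dx)$ written out in coordinates, and then to run the chain $\mathcal{T}_2(\nu_m\,|\,\mu_n)=\mathcal{W}_2^2(\mu_n,\mathscr{P}^{cx}_{\preceq\nu_m})$ of \cite[Proposition 1]{Gozlan_Juillet20} together with Lemmas~\ref{lemma : barycentric characterization} and \ref{thm : characterization of projected measure}. For finitely supported marginals a coupling $\pi\in\Pi(\mu_n,\nu_m)$ is a nonnegative matrix $(\pi_{ij})$ with row sums $\mu_n(x_i)$ and column sums $\nu_m(y_j)$; since each $\mu_n(x_i)>0$ (as $\mu_n$ is an empirical measure), setting $\alpha_j(x_i):=\pi_{ij}/\mu_n(x_i)$ gives a bijection between such couplings and families $\{\boldsymbol{\alpha}(x_1),\dots,\boldsymbol{\alpha}(x_n)\}$ feasible for \eqref{eq: minimization for projection}: the row-sum condition becomes $\boldsymbol{\alpha}(x_i)\in\Delta_m$ and the column-sum condition becomes $\sum_i\alpha_j(x_i)\mu_n(x_i)=\nu_m(y_j)$. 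Under this correspondence the disintegration is $p_{x_i}=\sum_j\alpha_j(x_i)\delta_{y_j}$, its barycenter is $\int y\,p_{x_i}(dy)=\sum_j\alpha_j(x_i)y_j$, and hence the objective of \eqref{eq: minimization for projection} equals $\int\bigl(\int y\,p_x(dy)-x\bigr)^2\mu_n(dx)$. Taking infima, the minimal value of \eqref{eq: minimization for projection} equals $\mathcal{T}_2(\nu_m\,|\,\mu_n)=\mathcal{W}_2^2(\mu_n,\mathscr{P}^{cx}_{\preceq\nu_m})$.

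Existence of a minimizer is then immediate: the feasible set of \eqref{eq: minimization for projection} is the intersection of the compact set $(\Delta_m)^n$ with finitely many affine equalities, hence compact, and it is nonempty because $\alpha_j(x_i)\equiv\nu_m(y_j)$ (the product coupling) is feasible; the objective is a continuous convex quadratic, so the infimum is attained at some $\{\boldsymbol{\alpha}^*(x_1),\dots,\boldsymbol{\alpha}^*(x_n)\}$.

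It remains to see that any such minimizer yields the backward projection through \eqref{eq: form of projection}. Put $T(\boldsymbol{\alpha})=\sum_j\alpha_j y_j$, $\Phi(x_i):=T(\boldsymbol{\alpha}^*(x_i))=\sum_j\alpha_j^*(x_i)y_j$, and $\overline{\mu}_n:=(\Phi)_{\#}\mu_n$ (which is precisely \eqref{eq: form of projection}). First, $\overline{\mu}_n\in\mathscr{P}^{cx}_{\preceq\nu_m}$: with $\omega:=(\boldsymbol{\alpha}^*)_{\#}\mu_n\in\mathcal{P}(\Delta_m)$, the column-sum constraint gives $\int_{\Delta_m}\alpha_j\,d\omega=\sum_i\alpha_j^*(x_i)\mu_n(x_i)=\nu_m(y_j)$, while $(T)_{\#}\omega=(T\circ\boldsymbol{\alpha}^*)_{\#}\mu_n=\overline{\mu}_n$, so Lemma~\ref{lemma : barycentric characterization} applies. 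Second, $\overline{\mu}_n$ attains the projection distance: bounding $\mathcal{W}_2^2(\mu_n,\overline{\mu}_n)$ by the cost of the coupling $(\mathrm{id},\Phi)_{\#}\mu_n$,
\[
\mathcal{W}_2^2(\mu_n,\overline{\mu}_n)\;\le\;\int|x-\Phi(x)|^2\,d\mu_n(x)\;=\;\sum_{i=1}^n\mu_n(x_i)\Big(\sum_{j=1}^m\alpha_j^*(x_i)y_j-x_i\Big)^2\;=\;\mathcal{W}_2^2(\mu_n,\mathscr{P}^{cx}_{\preceq\nu_m}),
\]
the last equality being the minimal-value identity from the first paragraph; since $\overline{\mu}_n$ is admissible in the infimum defining $\mathcal{W}_2(\mu_n,\mathscr{P}^{cx}_{\preceq\nu_m})$ the reverse inequality also holds, so $\overline{\mu}_n$ realizes that infimum. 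By the uniqueness of the backward projection in Lemma~\ref{thm : characterization of projected measure}, $\overline{\mu}_n$ is therefore the backward projection of $\mu_n$ onto $\mathscr{P}^{cx}_{\preceq\nu_m}$, and \eqref{eq: form of projection} holds; in particular the measure $\overline{\mu}_n$ is the same for every minimizer, even though $\Phi$ a priori is not.

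Regarding the main difficulty: the transport-theoretic steps above are light once the dictionary of the first paragraph is in place, and the real substance is borrowed from Lemmas~\ref{lemma : barycentric characterization} and \ref{thm : characterization of projected measure}. The one genuinely delicate point is uniqueness of the minimizer $\{\boldsymbol{\alpha}^*(x_i)\}$ itself: because $T$ fails to be injective on $\Delta_m$ as soon as $\{y_1,\dots,y_m\}$ is affinely dependent (e.g. $m>d+1$), the objective of \eqref{eq: minimization for projection} is convex but not strictly convex, and distinct minimizers can coexist. Uniqueness of the $\boldsymbol{\alpha}^*(x_i)$ is recovered exactly when $\spt(\nu_m)$ is affinely independent, for then $T$ is injective, $T^{-1}$ is affine on $\text{conv}(\spt(\nu_m))$, and rewriting \eqref{eq: minimization for projection} in the variables $z_i=\Phi(x_i)$ exhibits it as a strictly convex quadratic over an affine feasible set; absent that hypothesis the uniqueness assertion should be read as pertaining to the induced measure $\overline{\mu}_n$, with the formula \eqref{eq: form of projection} valid for any choice of minimizer. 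One must also take care not to conflate \eqref{eq: minimization for projection} with an ordinary discrete optimal transport or barycenter problem: the quantity minimized is the squared martingale defect $\bigl(\int y\,p_x(dy)-x\bigr)^2$ over \emph{all} couplings, not the transport cost $|x-y|^2$ over martingale couplings.
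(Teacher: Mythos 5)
Your proof is correct and lands on the same essential skeleton as the paper's: identify the feasible set of \eqref{eq: minimization for projection} with the cone $\mathscr{P}^{cx}_{\preceq \nu_m}$ via Lemma~\ref{lemma : barycentric characterization}, identify the objective with the relevant transport cost, and invoke uniqueness of the backward projection from Lemma~\ref{thm : characterization of projected measure}. The one substantive difference of route is how you get the value identity: you pass through $\mathcal{T}_2(\nu_m\,|\,\mu_n)=\mathcal{W}_2^2(\mu_n,\mathscr{P}^{cx}_{\preceq\nu_m})$ (the Gozlan--Juillet identity recalled in Section~\ref{subsec:preplim}) and then check that the candidate $\overline{\mu}_n=(\Phi)_{\#}\mu_n$ lies in the cone and attains that value, whereas the paper instead uses the \emph{existence} of the $1$-Lipschitz projection map from Lemma~\ref{thm : characterization of projected measure} to argue that the optimizers of the two problems coincide. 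Your version is a little cleaner on one point the paper glosses over: for a non-optimal feasible $\boldsymbol{\alpha}$, the objective of \eqref{eq: minimization for projection} is only an upper bound for $\mathcal{W}_2^2(\mu_n,(T)_{\#}\omega)$ (it is the cost of the particular coupling $(\mathrm{id},\Phi)_{\#}\mu_n$), so the two objectives agree only at optimality; your argument via the $\mathcal{T}_2$ identity does not need to pretend otherwise, and it only uses the uniqueness half of Lemma~\ref{thm : characterization of projected measure}, not the existence of the map.

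Your remark on uniqueness is not a quibble but a genuine correction. The paper's proof asserts that the objective of \eqref{eq: minimization for projection} is strictly convex in the variables $\{\boldsymbol{\alpha}(x_i)\}$ and concludes uniqueness of the minimizer; as you observe, this fails whenever $\boldsymbol{\alpha}\mapsto\sum_j\alpha_j y_j$ is non-injective on $\Delta_m$ (e.g.\ $m>d+1$), in which case distinct coefficient families can realize the same barycenters $\Phi(x_i)$ and the same objective value while still meeting the column-sum constraints. What is actually unique is the vector of barycenters $\bigl(\Phi(x_1),\dots,\Phi(x_n)\bigr)$ (the objective is strictly convex in those, over a convex image set) and hence the measure $\overline{\mu}_n$ in \eqref{eq: form of projection}; your reading of the theorem, with \eqref{eq: form of projection} valid for every minimizer, is the correct one, and your sufficient condition (affine independence of $\spt(\nu_m)$) for uniqueness of the coefficients themselves is right.
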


One can write \eqref{eq: minimization for projection} in terms of matrices. Let $x_i = (x_{i1}, \cdots, x_{id})$, $y_j = (y_{j1}, \cdots, y_{jd})$ and
\begin{align*}
    &\mathbf{A} =
    \left[ \begin{array}{ccc}
    \alpha_1(x_1) & \ldots & \alpha_m(x_1) \\
    \vdots & \ddots & \vdots \\
    \alpha_1(x_n) & \ldots & \alpha_m(x_m)
    \end{array} \right], \quad 
    \mathbf{Y} =
    \left[ \begin{array}{ccc}
    y_{11} & \ldots & y_{1d} \\
    \vdots & \ddots & \vdots \\
    y_{m1} & \ldots & y_{md}
    \end{array} \right], \quad
    \mathbf{X} =
    \left[ \begin{array}{ccc}
    x_{11} & \ldots & x_{1d} \\
    \vdots & \ddots & \vdots \\
    x_{n1} & \ldots & x_{nd}
    \end{array} \right]\\
    &\boldsymbol{\mu_n}=\left[ \begin{array}{ccc}
    \mu_n(x_1) \\
    \vdots \\
    \mu_n(x_n)
    \end{array} \right], \quad    
    \boldsymbol{\nu_m}=\left[ \begin{array}{ccc}
    \nu_m(y_1) \\
    \vdots \\
    \nu_m(y_m)
    \end{array} \right], \quad
    \boldsymbol{1_k}=\left[ \begin{array}{ccc}
    1 \\
    \vdots \\
    1
    \end{array} \right].
\end{align*}
Then, \eqref{eq: minimization for projection} can be written as
\begin{equation}
\begin{aligned}\label{eq: minimization for projection matrix form}
    &\min_{\mathbf{A}} \text{trace}\left( (\mathbf{A}\mathbf{Y} - \mathbf{X})^T \text{diag}(\boldsymbol{\mu_n}) (\mathbf{A}\mathbf{Y} - \mathbf{X}) \right)\\
    &\text{s.t. } \mathbf{A}^T \boldsymbol{\mu_n} = \boldsymbol{\nu_n}, \quad \mathbf{A}\boldsymbol{1_m} = \boldsymbol{1_n}, \quad \mathbf{A} \geq 0. 
\end{aligned}    
\end{equation}
Here, $\text{diag}(\boldsymbol{\mu_n})$ denotes the diagonal matrix whose diagonal is $\boldsymbol{\mu_n}$ and $\mathbf{A} \geq 0$ means entrywise nonnegativity. Notice that this is a quadratic semidefinite program, hence convex. Since $\{ \mathbf{A}Y \  : \  \mathbf{A} \in \mathbb{R}^{m \times n}, \,  \mathbf{A}\boldsymbol{1_m} = \boldsymbol{1_n}, \mathbf{A} \geq 0 \}$ is indeed a convex hull of $\spt(\nu_m)$, one can regard this problem as a regression problem restricted to a convex hull. A folklore in computational geometry tells that a projection map onto a convex hull has no closed form in general, hence it is impossible to obtain the closed form of a solution of \eqref{eq: minimization for projection matrix form}. Furthermore, a convex projection in our case is more complicated than the usual one. In classical cases, the image of a convex projection always lies in the surface of a given convex hull: however in our case, due to the constraint of marginal measures, it maps some points into the inside of the convex hull.

\section{Algorithms and experiments}\label{sec:experiment}
In this section, a possibly efficient algorithm is proposed. The effectiveness of our approach is also empirically verified.

Let us compare the numerical experiments below with those considered in \cite{Alfonsi_2020}. There are three numerical examples in \cite{Alfonsi_2020}. The first two examples are about the Wasserstein projection distance in one or two dimensions, which are also relevant to our problem, and the last one is about the optimization over couplings with the marginal and the martingale constraints, which is of interest to finance community.

Focusing on the first two examples of
\cite{Alfonsi_2020}, in the first example, the Wasserstein projected distances between $\mu_n$ and several objects are concerned. Those objects include not only the projection $\mu_n$ onto $\mathscr{P}_{\preceq \nu_n}^{cx}$ but also the minimum of $\mu_n$ and $\nu_m$ with respect to convex order. Notice that the minimum (the maximum) of for any $\mu$ and $\nu$ with respect to convex order in one dimension is well defined; in fact $\mathcal{P}(\mathbb{R})$ with convex order exhibits the complete lattice structure (see \cite{Kertz_Rosler2000}). The second example shows the structure of martingale coupling in two dimension. It should be noted again that all the above examples assume $\mu \preceq \nu$ a priori, so do not include the case of $\mu \not \preceq \nu$.

In our numerical examples, we consider not only the case of $\boldsymbol{H}_0 : \mu \preceq \nu$ but also that of $\boldsymbol{H}_A : \mu \not \preceq \nu$. For statistical applications, the proposed statistic $\mathcal{W}_2\left( \mu_n, \mathscr{P}_{\preceq \nu_n}^{cx} \right)$ should perform differently under those different scenarios. In this sense, it should converge to $0$ fast as $n, m$ grow under $\boldsymbol{H}_0$, and stay away from $0$ sufficiently enough under $\boldsymbol{H}_A$.

First, we give a simple example that demonstrates how $\mathcal{W}_2\left( \mu_n, \mathscr{P}_{\preceq \nu_n}^{cx} \right)$ behaves as $n, m$ grow under the null hypothesis.

\begin{example}[{See \cite[Examples 5.2.1 and 5.2.2]{Alfonsi_2020}}]
On $\mathbb{R}^2$, let $N(0, I_2)$ be the standard gaussian distribution. Consider $\mu = \text{Unif }[0,1]^2$, the uniform distribution over $[0,1]^2$, and $\nu = \text{Unif }[0,1]^2 * N(0, I_2)$, the convolution of $\text{Unif }[0,1]^2$ and $N(0, I_2)$. 
%the 2-dim standard gaussian distribution.
It is straightforward that $\mu \preceq \nu$ since for $X \sim \mu$ and $Y \sim \nu$, the martingale condition $\mathbb{E}_{\pi}[Y | X] = X$ is achieved by $\pi(y|x) = N(x, I_2)$, the isotropic gaussian with mean $x$. Let $\mu_n$ and $\nu_n$ be the empirical distributions of $\mu$ and $\nu$ with independent $n$-samples, respectively. Also, both have the uniform weight over samples. In this experiment, we use CVXPY python code developed in \cite{diamond2016cvxpy, agrawal2018rewriting}.

Combining \Cref{thm: consistency of projected Wasserstein-1} with \Cref{thm : upper bound of expected distance}, it holds that with high probability
\[
    \mathcal{W}_2\left( \mu_n, \mathscr{P}_{\preceq \nu_n}^{cx} \right) \leq O\left( n^{-\frac{1}{4}} \right).
\]
%Hence, $\mathcal{W}_2\left( \mu_n, \mathscr{P}_{\preceq \nu_n}^{cx} \right)  \lesssim 0.3$ when $n=100$, and $\mathcal{W}_2\left( \mu_n, \mathscr{P}_{\preceq \nu_n}^{cx} \right)  \lesssim 0.17$ when $n=1000$. 
\Cref{projection} shows a numerical result which matches the theoretical expectation pretty well. The $x$-axis represents the sample size $n$, and for each $n$, we draw samples 5 times independently and compute their $\mathcal{W}_2\left( \mu_n, \mathscr{P}_{\preceq \nu_n}^{cx} \right)$.  
As we see $ \mathcal{W}_2\left( \mu_n, \mathscr{P}_{\preceq \nu_n}^{cx} \right)$ tends to decrease as $n$ increases though there is some fluctuation. Due to the concentration phenomenon, the fluctuation is expected to vanish as $n$ becomes larger.

\Cref{uniform+gaussian} shows the geometry of the Wasserstein projection when the sample size $n= 100$.
Red and blue dots represent $\mu_n$ and $\nu_n$, respectively, and the set of green dots denotes the projection $\overline{\mu}_n$ of $\mu_n$ onto $\mathscr{P}_{\preceq \nu_n}^{cx}$. Notice that since $\mu \preceq \nu$, $\mathcal{W}_2\left( \mu_n, \mathscr{P}_{\preceq \nu_n}^{cx} \right) \to 0$ as $n \to \infty$. Pictorially, as $n$ grows, one expects that red dots and  green ones become closer. The right picture of \Cref{uniform+gaussian} describes the discrepancy between $\mu_n$ and $\overline{\mu}_n$. %, the projection of $\mu_n$ onto $\mathscr{P}_{\preceq \nu_n}^{cx}$.
By \Cref{thm : computing projection of mu_n}, we know that there is a $1$-Lipschitz map $\Phi_n$, the optimal transport map, such that $\overline{\mu}_n = (\Phi_n)_{\#}(\mu_n)$. Since $\mathcal{W}_2\left( \mu_n, \mathscr{P}_{\preceq \nu_n}^{cx} \right) \to 0$ as $n \to \infty$, $\Phi_n$
%the corresponding map 
converges to the identity map as $n \to \infty$.
\begin{figure}[h]
		\centering    \includegraphics[width=0.5\linewidth]{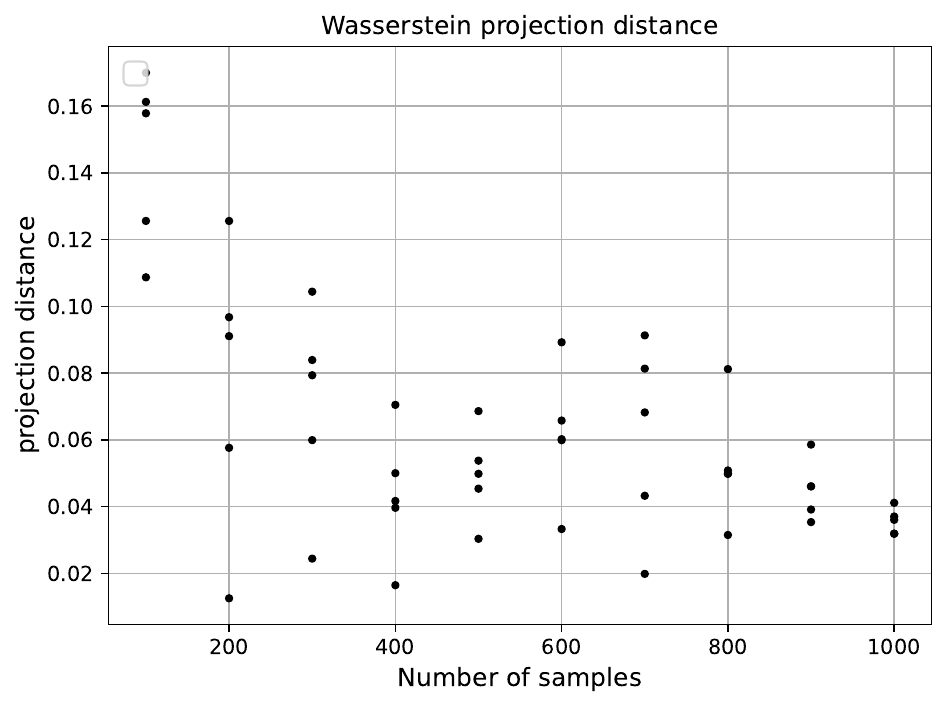}
		\centering    
		\caption{The plot of $\mathcal{W}_2\left( \mu_n, \mathscr{P}_{\preceq \nu_n}^{cx} \right)$.}
		\label{projection}
	\end{figure}
\begin{figure}[h]
		\centering    \includegraphics[width=0.8\linewidth]{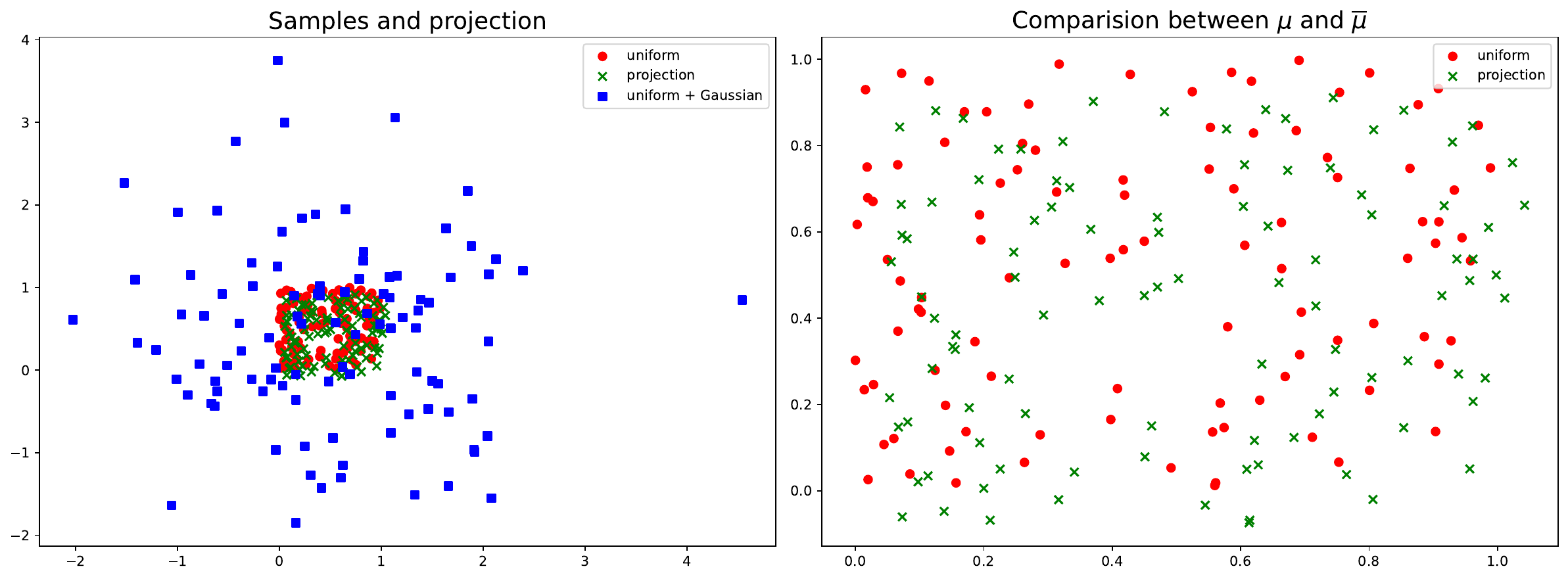}
		\centering    
		\caption{The geometry of the Wasserstein projection.}
		\label{uniform+gaussian}
	\end{figure}
\end{example}

In fact, CVXPY is not efficient for a large scale problem with high dimensional data. To pursue more efficient algorithms, we introduce gradient descent based scheme for the optimization problem \eqref{eq: minimization for projection matrix form}, equivalently, \eqref{eq: minimization for projection}, which is derived from the squared Wasserstein backward projection $\mathcal{W}^2_{2}\left(
\mu,\mathscr{P}_{\preceq\nu}^{\text{cx}}\right)$.

For simplicity we sample equal number of points from both marginals. Consider
$\mu_{n}=\frac{1}{n}\sum_{k=1}^{n}\delta_{X_{i}}$, $\nu_m=\frac{1}{n}\sum_{k=1}^{n}\delta_{Y_{i}}$ where $X_{1},...,X_{n}$ and $Y_{1},...,Y_{m}\in\mathbb{R}^{d}$ are i.i.d. samples drawn from $\mu$ and $\nu$, respectively. Then the optimization problem \eqref{eq: minimization for projection matrix form} is simplified to
\begin{equation}\label{eq:BkWd_opt}
    \min_{\substack{\pi\in\mathbb{R}^{n}\times\mathbb{R}^{n}\\\pi\mathbf{1=1,}\pi^{T}\mathbf{1=1}}}\mathcal{J}\left(  \pi\right)  \triangleq\frac{1}{n}\left\Vert \pi\mathbf{Y}-\mathbf{X}\right\Vert _{\text{F}}^{2},
\end{equation}
where $\mathbf{X}$ (resp. $\mathbf{Y}$) is the matrix formed by vertically
stacking $X_{1}^{T},...,X_{n}^{T}$ (resp. $Y_{1}^{T},...,Y_{m}^{T}$), 
$\left\Vert \cdot\right\Vert _{\text{F}}$ is the Frobenius norm, and $\mathbf{1}\in\mathbb{R}^{d}$ is the vector of all ones.

It should be noted that the number of variables and constraints of the optimization \eqref{eq:BkWd_opt} are both independent of the
space dimension, $d$. Increasing the dimension does not change the underlying search space, although it could affect the individual convergence performance.

The Frank-Wolfe algorithm \cite{frank1956algorithm} is well-known to solve the optimization \eqref{eq:BkWd_opt}. It is an iterative gradient method for constrained optimization. At each iterative step $\pi_{k}\in\mathbb{R}^{n}\times\mathbb{R}^{n}$, $k \geq 0$, the algorithm predicts the movement
of the next step by solving a linear programming problem, called \emph{oracle}. The iteration steps $\pi_{0},\pi_{1}, \dots ,\pi_{k}, \dots$ actually produce a \emph{gradient flow} along the backward convex order cone $\mathscr{P}_{\preceq\nu_m}^{\text{cx}}$. Below we discuss a regularized Frank-Wolfe algorithm.

\subsection{The entropic Frank-Wolfe algorithm}\label{sec:entropic-Frank-Wolfe}
It is worth mentioning that despite the name, the algorithm does not require regularization of the objective function, but instead uses regularization in the process of optimization, thus falls under different category from the classical entropic optimal transport method and, to clear any possible confusion, the regularized martingale optimal transport method proposed in \cite{Alfonsi_2020}.

Now we present a particular version of the Frank-Wolfe algorithm which equips it with an \emph{entropic optimal transport oracle}. We call it entropic Frank-Wolfe algorithm. It turns out to be useful for the
optimization we have at hand. At each iteration step $\pi_{k}\in\mathbb{R}^{n}\times\mathbb{R}^{n},$ it solves the linear programming problem whose
optimal solution is used to compute the movement of the next step:
\begin{equation}
\min\limits_{\pi\in\mathcal{E}}\nabla\mathcal{J}\left(  \pi_{k}\right)
\odot\pi, \label{eq:BkWd_opt_lin}%
\end{equation}
where $\odot$ is the Hadamard product (the element-wise product) of matrices,
\[
\nabla\mathcal{J}\left(  \pi_{k}\right)  =\left.  \nabla_{\pi}\left(  \frac
{1}{n}\left\Vert \pi\mathbf{Y}-\mathbf{X}\right\Vert _{\text{F}}^{2}\right)
\right\vert _{\pi_{k}}=\frac{2}{n}\left(  \pi_{k}\mathbf{Y}-\mathbf{X}\right)
\mathbf{Y}^{T}, 
\]
and%
\[
\mathcal{E}=\left\{  \pi\in\mathbb{R}^{n}\times\mathbb{R}^{n}:\pi
\mathbf{1=1,}\pi^{T}\mathbf{1=1}\right\}  .
\]
Various computational methods are available for solving the
linear programming problem \eqref{eq:BkWd_opt_lin}. For
example, the simplex method, the interior point method or the multi-scale linear
programming \cite{oberman2020solution}. An optimal solution of $\left(
\ref{eq:BkWd_opt_lin}\right)  $ would point us to the direction of the next
gradient descent movement. Since an accurate moving direction is not
necessary, we can thus replace \eqref{eq:BkWd_opt_lin} with
an optimization problem which produces an optimal solution close to that of \eqref{eq:BkWd_opt_lin}. Specifically, we fix $\varepsilon
_{k}>0$ and consider%
\begin{equation}
\min\limits_{\gamma\in\mathcal{E}}\left\{  \nabla\mathcal{J}\left(
\pi_{k}\right)  \odot\gamma+\varepsilon_{k}\text{KL}\left(  \gamma\left\vert \mathbf{1}  \mathbf{\otimes}
\mathbf{1}  \right.  \right)  \right\}  .\label{eq:BkWd_opt_reg}%
\end{equation}
%Here
%\[
%\mathcal{E}_{n}=\left\{  \gamma\in\mathbb{R}^{n}\times\mathbb{R}^{n} :\gamma\mathbf{1=}\frac{1}{n}\mathbf{1,}\gamma^{T}\mathbf{1=}\frac{1} {n}\mathbf{1}\right\} ,
%\]
KL$\left(  \cdot|\mathbf{\cdot}\right)  $ denotes the sum of element-wise
Kullback-Leibler divergences, and  $\varepsilon_{k}$ satisfies $\varepsilon
_{k}\rightarrow0$ as $k\rightarrow\infty.$ %Note if we set $\varepsilon_{k}=0,$ then we see that \eqref{eq:BkWd_opt_reg} recovers \eqref{eq:BkWd_opt_lin}. 
%\marginpar{\yh a sentence is removed here.}
A benefit of
solving \eqref{eq:BkWd_opt_reg} over \eqref{eq:BkWd_opt_lin} is that, due to the structure of the
constraints, it can be regarded as an entropic optimal transport with $\gamma$
being the probability coupling between $\frac{1}{n}\mathbf{1}$ and $\frac
{1}{n}\mathbf{1}$ (with a normalization). Many light-weight algorithms are available for entropic
optimal transport, e.g., the Sinkhorn algorithms or its variants
\cite{peyre2019computational}, \cite{altschuler2017near}. %Once an approximate solution $\gamma_{k+1}$ of $\left(  \ref{eq:BkWd_opt_reg}\right)  $ is obtained, we can then choose the next moving direction by setting
%\[
%\pi_{k+1}=n\gamma_{k+1}.
%\]
We call this Frank-Wolfe algorithm equipped with the oracle \eqref{eq:BkWd_opt_reg} the \emph{entropic Frank-Wolfe algorithm}.

\subsection{Convergence}
Convergence of the Frank-Wolfe algorithm in a general constrained optimization problem is well-known \cite{jaggi2013revisiting}. Fast convergence rate can be expected for our problem \eqref{eq:BkWd_opt} that has a certain special structure.  %This is the case where our problem \eqref{eq:BkWd_opt} stands.
Note the objective function is convex, although not uniformly. Define
\[
\mathcal{E}_{atom}=\left\{  E_{ij}\in\mathbb{R}^{n}\times\mathbb{R}%
^{n}:i,j=1,...,n\right\}  .
\]
where $E_{ij}\in\mathbb{R}^{n}\times\mathbb{R}^{n}$ has zero elements except the value $1$ at the $(i,j)$-th slot. Then the optimization is actually performed over the
convex hull of $\mathcal{E}_{atom}$, i.e., $\mathcal{E=}$ $\text{conv}\left(\mathcal{E}_{atom}\right)$. This enables us to approximate the movement of the next step by those from $\mathcal{E}_{atom}$; precisely, the oracle only has to optimize over $\mathcal{E}_{atom}$ rather than $\mathcal{E}$. Simplex
method and its variants are suitable for the task as an oracle over $\mathcal{E}_{atom}$. In this setting, (approximate) linear convergence is achievable \cite[the general convex case of Theorem 1]{lacoste2015global}. One obstacle to linear convergence is indeed the convexity of the problem at hand, and the usual regularization trick can be employed to fix the issue. For
other oracles, such as the \eqref{eq:BkWd_opt_reg}, the convergence rate has not been established, its study is out of the scope of the article and we will pursue this elsewhere.

The next example demonstrates the efficiency of the entropic Frank-Wolfe algorithm and the ability of the Wasseerstein projection as a way of sampling measures in convex order. Here $\boldsymbol{H_A}: \mu \npreceq \nu$ is taken into account.

\begin{example}
Consider two dimensional distributions%
\[
\mu\sim N\left(  \left(
\begin{array}
[c]{c}%
0\\
0
\end{array}
\right)  ,\left(
\begin{array}
[c]{cc}%
2 & -2\\
-2 & 3
\end{array}
\right)  \right)  ,\text{ }\nu\sim N\left(  \left(
\begin{array}
[c]{c}%
1\\
1
\end{array}
\right)  ,\left(
\begin{array}
[c]{cc}%
3 & -2\\
-2 & 4
\end{array}
\right)  \right) 
\]
where $\nu$ is obtained by diffusing $\mu$ and shifting by the vector $\left(
1,1\right)  .$ Clearly $\mu$ is not in convex order with $\nu$.
Now $10^{4}$ samples are drawn respectively from $\mu$ and $\nu.$ The
empirical measures are written as $\mu_{n}$ and $\nu_m$. We start the
entropic Frank-Wolfe algorithm \eqref{eq:BkWd_opt_reg} with the initial $\pi_{0}$ being a $10^{4}%
\times10^{4}$ matrix with each row equal to%
\[
\left(  10^{-4},...,10^{-4}\right)  \in\mathbb{R}^{10^{4}}.
\]
This corresponds to a Dirac measure on the cone $\mathscr{P}_{\preceq\nu_m%
}^{\text{cx}}$ concentrating on the barycenter of $\nu_m$. In around $10$
iterations, the algorithm reaches a minimal objective value close to $2$ and stabilizes
in further iterations, which indicates $\mu_{n}$ is not in the cone
$\mathscr{P}_{\preceq\nu_m}^{\text{cx}}$. \Cref{fig:gradflow} shows
some snapshots of the evolution of the probability along the cone.
\end{example}

\begin{figure}[h]
\centering
\includegraphics[width=\textwidth]{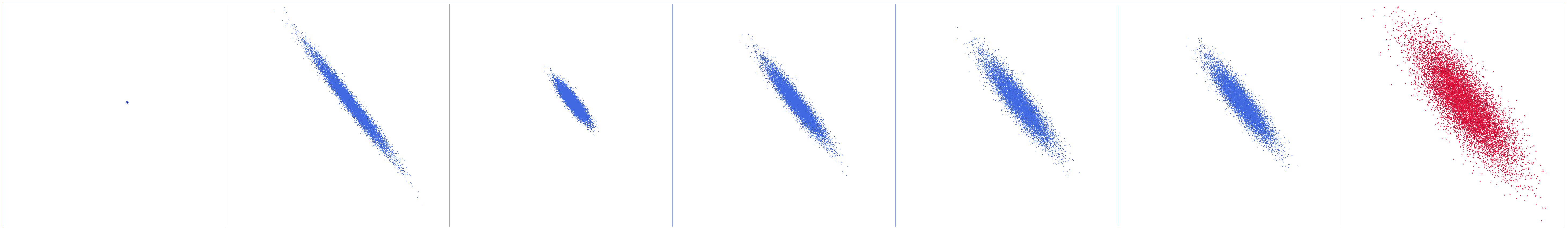} \caption{Left-most:
the initial distribution concentrating on the barycenter of $\nu_m.$
Right-most: the empirical distribution $\nu_m.$ Middle: slices of the gradient flow along the convex order cone $\mathscr{P}_{\preceq\nu_m}^{\text{cx}}$ (from left to right) generated by the entropic Frank-Wolfe algorithm, that is expected to converge to $\overline{\mu}_n$ the projection to $\mathscr{P}_{\preceq\nu_m}^{\text{cx}}$. Each square is $\left[  -9,9\right] \times \left[-9,9\right] .$ The shape of the distribution quickly becomes stable in a few iterations.}%
\label{fig:gradflow}%
\end{figure}

\section{Conclusions and future directions}\label{sec: conclusion}
In this paper, we have discussed the hypothesis testing problem for convex order of probability measures. We have established a new methodology for this problem, using the Wasserstein projection over the Wasserstein space. In full generality, we demonstrated that the Wasserstein projected distance is stable, hence its empirical version is consistent with the true one. We further obtained the rate of convergence of the projected Wasserstein distance of empirical measures under either the log-Sobolev inequality or bounded support assumptions.
%We also have discussed the smoothed Wasserstein projected distance, which potentially allows one to achieve a faster convergence rate. 
In addition to providing theoretical understanding, we also have proposed  a numerical scheme for the projection distance and provide some experiments based on this scheme. In what follows we continue to expand this discussion, while at the same time provide a few perspectives on future work.

First, it is of interest to analyze other stochastic orders and connect them with optimal transport view points. Often variants of convex order  are more frequently utilized in economics and risk management. For example, one can choose a different defining class $\mathcal{A}'$ which is the set of increasing convex functions, increasing with respect to lexicographic order $\leq$ on $\mathbb{R}^d$, to construct the increasing convex order. Another example is the subharmonic order for probability measures over a convex domain, which is induced by another defining class, namely, the set of all bounded continuous subharmonic functions. These are two typical other stochastic orders that are considered in various fields. It is valuable to answer similar questions on those orders.

Second, it would be of interest to sharpen the theory we developed here. For example, most constants appearing in our estimates are loose, which worsens convergence rate results. Also, the rate of convergence result without the log-Sobolev inequality or bounded supports assumptions is of course desirable. An appropriate next in this regard is to provide a central limit theorem for our estimator, which advances $p$-value and type I/II errors of it.

Third, it may be possible to apply entropic optimal transport methods in our setting. In spite of the curse of dimensionality of the rate of convergence in `vanilla' Wasserstein distances, the entropic optimal transport problems enjoy $n^{-\frac{1}{2}}$ rate of convergence. Furthermore, the central limit theorem holds for entropic optimal transport: see \cite{MR4041704, gonzalezsanz2022weak, gonzálezsanz2023weak, 10.1214/23-AAP1986}. 
From both theoretical and practical points of view, the wide computational utility of entropic optimal transport tools is enticing; yet, how exactly to apply this toolkit in our setting remains non-obvious.

Finally, it is of interest to understand computational difficulty precisely. At first glance \eqref{eq: minimization for projection} is nothing but the usual convex projection problem, but it is slightly different from it. Usual convex projection always has the property  that any projected point lies in the surface of the target convex hull. On the contrary, due to the marginal constraints in our setting, the relevant projection can reside in the interior of the convex hull. Does this imply a big difference? Furthermore, to the best our knowledge, the entropic Frank-Wolfe algorithm is introduced first in this paper. It employs a gradient descent on the convex order cone, which we want to understand further.

%%%%%%%%%%%%%%%%%%%%%%%%%%%%%%%%%%%%%%%%%%%%%%
%% Example with single Appendix:            %%
%%%%%%%%%%%%%%%%%%%%%%%%%%%%%%%%%%%%%%%%%%%%%%
%\begin{appendix}
%\section*{Title}\label{appn} %% if no title is needed, leave empty \section*{}.
%appendix should be provided in \verb|{appendix}| environment,
%before Acknowledgements.
%
%If there is only one appendix,
%then please refer to it in text as \ldots\ in the \hyperref[appn]{Appendix}.
%\end{appendix}
%%%%%%%%%%%%%%%%%%%%%%%%%%%%%%%%%%%%%%%%%%%%%%
%% Example with multiple Appendixes:        %%
%%%%%%%%%%%%%%%%%%%%%%%%%%%%%%%%%%%%%%%%%%%%%%
%\begin{appendix}
%\section{Title of the first appendix}\label{appA}
%If there are more than one appendix, then please refer to it
%as \ldots\ in Appendix \ref{appA}, Appendix \ref{appB}, etc.
%
%\section{Title of the second appendix}\label{appB}
%\subsection{First subsection of Appendix \protect\ref{appB}}
%
%Use the standard \LaTeX\ commands for headings in \verb|{appendix}|.
%Headings and other objects will be numbered automatically.
%\begin{equation}
%\mathcal{P}=(j_{k,1},j_{k,2},\dots,j_{k,m(k)}). \label{path}
%\end{equation}
%
%Sample of cross-reference to the formula (\ref{path}) in Appendix \ref{appB}.
%\end{appendix}

%%%%%%%%%%%%%%%%%%%%%%%%%%%%%%%%%%%%%%%%%%%%%%
%% Support information, if any,             %%
%% should be provided in the                %%
%% Acknowledgements section.                %%
%%%%%%%%%%%%%%%%%%%%%%%%%%%%%%%%%%%%%%%%%%%%%%
\begin{acks}[Acknowledgments]
 We would like to express our gratitude to Nathael Gozlan, Philippe Chone and Francis Kramarz for drawing our attention to the applications of the Wasserstein projection in economics. Additionally, we extend our gratitude to Aur\'elien Alfonsi and Benjamin Jourdain for highlighting the connection to their research \cite{Alfonsi_2020} as well as \cite{Jourdain_2023}. These references were not included in the earlier version of this paper. We would like to thank Chang-Jin Kim and Yanqin Fan for valuable comments about stochastic order in economics context and recommendation of related previous literature, and Sharvaj Kubal for introducing us to CVXPY. Lastly, we extend our sincere appreciation to Xiaohui Chen and Rami Tabri for their valuable suggestions on the manner of stating our hypotheses and for related references about nonparametric hypothesis testing; such a fruitful discussion was possible thanks to the Summer School on Optimal Transport, Stochastic Analysis and Applications to Machine Learning, held in 2024 at Korea Advanced Institute of Science and Technology (KAIST), jointly organized by Stochastic Analysis and Application Research Center (SAARC) at KAIST and Pacific Institute for the Mathematical Sciences (PIMS).
\end{acks}

%%%%%%%%%%%%%%%%%%%%%%%%%%%%%%%%%%%%%%%%%%%%%%
%% Funding information, if any,             %%
%% should be provided in the                %%
%% funding section.                         %%
%%%%%%%%%%%%%%%%%%%%%%%%%%%%%%%%%%%%%%%%%%%%%%
\begin{funding}
JK thanks to the PIMS postdoctoral fellowship  through the Kantorovich Initiative PIMS Research Network (PRN) as well as National Science Foundation grant NSF-DMS 2133244. YHK is partially supported  by the Natural Sciences and Engineering Research Council of Canada (NSERC), with Discovery Grant RGPIN-2019-03926. YHK and AW are partially supported by Exploration Grant NFRFE-2019-00944 from the New Frontiers in Research Fund (NFRF). AW acknowledges the support of the Burroughs Wellcome Fund grant 2019-014832. YHK is a member of the Kantorovich Initiative (KI) that is supported by PIMS Research Network (PRN) program. We thank PIMS for their generous support; report identifier PIMS-20240605-PRN01. Part of this work is completed during YHK’s visit at KAIST,  through the support from SAARC, and YR's visit at the University of British Columbia (UBC). YHK and YR thank KAIST's and UBC's hospitality and excellent environment, respectively.
\end{funding}

%%%%%%%%%%%%%%%%%%%%%%%%%%%%%%%%%%%%%%%%%%%%%%
%% Supplementary Material, including data   %%
%% sets and code, should be provided in     %%
%% {supplement} environment with title      %%
%% and short description. It cannot be      %%
%% available exclusively as external link.  %%
%% All Supplementary Material must be       %%
%% available to the reader on Project       %%
%% Euclid with the published article.       %%
%%%%%%%%%%%%%%%%%%%%%%%%%%%%%%%%%%%%%%%%%%%%%%
\bibliographystyle{imsart-number} 
\bibliography{references.bib}

\begin{thebibliography}{82}
% BibTex style file: imsart-number.bst, 2017-11-03
% Default style options (sort=1,type=number).
% Used options (sort=1,type=number).

\bibitem{acciaio2016model}
\begin{barticle}[author]
\bauthor{\bsnm{Acciaio},~\bfnm{Beatrice}\binits{B.}}, \bauthor{\bsnm{Beiglb{\"o}ck},~\bfnm{Mathias}\binits{M.}}, \bauthor{\bsnm{Penkner},~\bfnm{Friedrich}\binits{F.}} \AND \bauthor{\bsnm{Schachermayer},~\bfnm{Walter}\binits{W.}}
(\byear{2016}).
\btitle{A model-free version of the fundamental theorem of asset pricing and the super-replication theorem}.
\bjournal{Mathematical Finance}
\bvolume{26}
\bpages{233--251}.
\end{barticle}
\endbibitem

\bibitem{agrawal2018rewriting}
\begin{barticle}[author]
\bauthor{\bsnm{Agrawal},~\bfnm{Akshay}\binits{A.}}, \bauthor{\bsnm{Verschueren},~\bfnm{Robin}\binits{R.}}, \bauthor{\bsnm{Diamond},~\bfnm{Steven}\binits{S.}} \AND \bauthor{\bsnm{Boyd},~\bfnm{Stephen}\binits{S.}}
(\byear{2018}).
\btitle{A rewriting system for convex optimization problems}.
\bjournal{Journal of Control and Decision}
\bvolume{5}
\bpages{42--60}.
\end{barticle}
\endbibitem

\bibitem{Alfonsi_2020}
\begin{barticle}[author]
\bauthor{\bsnm{Alfonsi},~\bfnm{Aur\'elien}\binits{A.}}, \bauthor{\bsnm{Corbetta},~\bfnm{Jacopo}\binits{J.}} \AND \bauthor{\bsnm{Jourdain},~\bfnm{Benjamin}\binits{B.}}
(\byear{2020}).
\btitle{Sampling of probability measures in the convex order by {W}asserstein projection}.
\bjournal{Ann. Inst. Henri Poincar\'e{} Probab. Stat.}
\bvolume{56}
\bpages{1706--1729}.
\bdoi{10.1214/19-AIHP1014}
\bmrnumber{4116705}
\end{barticle}
\endbibitem

\bibitem{Alibert_2019}
\begin{barticle}[author]
\bauthor{\bsnm{Alibert},~\bfnm{J.~J.}\binits{J.~J.}}, \bauthor{\bsnm{Bouchitt\'{e}},~\bfnm{G.}\binits{G.}} \AND \bauthor{\bsnm{Champion},~\bfnm{T.}\binits{T.}}
(\byear{2019}).
\btitle{A new class of costs for optimal transport planning}.
\bjournal{European J. Appl. Math.}
\bvolume{30}
\bpages{1229--1263}.
\bdoi{10.1017/s0956792518000669}
\bmrnumber{4028478}
\end{barticle}
\endbibitem

\bibitem{altschuler2017near}
\begin{barticle}[author]
\bauthor{\bsnm{Altschuler},~\bfnm{Jason}\binits{J.}}, \bauthor{\bsnm{Niles-Weed},~\bfnm{Jonathan}\binits{J.}} \AND \bauthor{\bsnm{Rigollet},~\bfnm{Philippe}\binits{P.}}
(\byear{2017}).
\btitle{Near-linear time approximation algorithms for optimal transport via Sinkhorn iteration}.
\bjournal{Advances in neural information processing systems}
\bvolume{30}.
\end{barticle}
\endbibitem

\bibitem{MR4152642}
\begin{barticle}[author]
\bauthor{\bsnm{Backhoff-Veraguas},~\bfnm{Julio}\binits{J.}}, \bauthor{\bsnm{Beiglb\"{o}ck},~\bfnm{Mathias}\binits{M.}}, \bauthor{\bsnm{Huesmann},~\bfnm{Martin}\binits{M.}} \AND \bauthor{\bsnm{K\"{a}llblad},~\bfnm{Sigrid}\binits{S.}}
(\byear{2020}).
\btitle{Martingale {B}enamou-{B}renier: a probabilistic perspective}.
\bjournal{Ann. Probab.}
\bvolume{48}
\bpages{2258--2289}.
\bdoi{10.1214/20-AOP1422}
\bmrnumber{4152642}
\end{barticle}
\endbibitem

\bibitem{backhoff2022stability}
\begin{barticle}[author]
\bauthor{\bsnm{Backhoff-Veraguas},~\bfnm{J.}\binits{J.}} \AND \bauthor{\bsnm{Pammer},~\bfnm{G.}\binits{G.}}
(\byear{2022}).
\btitle{Stability of martingale optimal transport and weak optimal transport}.
\bjournal{Ann. Appl. Probab.}
\bvolume{32}
\bpages{721--752}.
\bdoi{10.1214/21-aap1694}
\bmrnumber{4386541}
\end{barticle}
\endbibitem

\bibitem{backhoff2022applications}
\begin{barticle}[author]
\bauthor{\bsnm{Backhoff-Veraguas},~\bfnm{Julio}\binits{J.}} \AND \bauthor{\bsnm{Pammer},~\bfnm{Gudmund}\binits{G.}}
(\byear{2022}).
\btitle{Applications of weak transport theory}.
\bjournal{Bernoulli}
\bvolume{28}
\bpages{370--394}.
\end{barticle}
\endbibitem

\bibitem{banz1978prices}
\begin{barticle}[author]
\bauthor{\bsnm{Banz},~\bfnm{Rolf~W}\binits{R.~W.}} \AND \bauthor{\bsnm{Miller},~\bfnm{Merton~H}\binits{M.~H.}}
(\byear{1978}).
\btitle{Prices for state-contingent claims: Some estimates and applications}.
\bjournal{journal of Business}
\bpages{653--672}.
\end{barticle}
\endbibitem

\bibitem{barrett2003consistent}
\begin{barticle}[author]
\bauthor{\bsnm{Barrett},~\bfnm{Garry~F}\binits{G.~F.}} \AND \bauthor{\bsnm{Donald},~\bfnm{Stephen~G}\binits{S.~G.}}
(\byear{2003}).
\btitle{Consistent tests for stochastic dominance}.
\bjournal{Econometrica}
\bvolume{71}
\bpages{71--104}.
\end{barticle}
\endbibitem

\bibitem{MR3639595}
\begin{barticle}[author]
\bauthor{\bsnm{Beiglb\"{o}ck},~\bfnm{Mathias}\binits{M.}}, \bauthor{\bsnm{Cox},~\bfnm{Alexander M.~G.}\binits{A.~M.~G.}} \AND \bauthor{\bsnm{Huesmann},~\bfnm{Martin}\binits{M.}}
(\byear{2017}).
\btitle{Optimal transport and {S}korokhod embedding}.
\bjournal{Invent. Math.}
\bvolume{208}
\bpages{327--400}.
\bdoi{10.1007/s00222-016-0692-2}
\bmrnumber{3639595}
\end{barticle}
\endbibitem

\bibitem{beiglbock2013model}
\begin{barticle}[author]
\bauthor{\bsnm{Beiglb{\"o}ck},~\bfnm{Mathias}\binits{M.}}, \bauthor{\bsnm{Henry-Labordere},~\bfnm{Pierre}\binits{P.}} \AND \bauthor{\bsnm{Penkner},~\bfnm{Friedrich}\binits{F.}}
(\byear{2013}).
\btitle{Model-independent bounds for option prices—a mass transport approach}.
\bjournal{Finance and Stochastics}
\bvolume{17}
\bpages{477--501}.
\end{barticle}
\endbibitem

\bibitem{Beiglb_Juillet16}
\begin{barticle}[author]
\bauthor{\bsnm{Beiglb\"{o}ck},~\bfnm{Mathias}\binits{M.}} \AND \bauthor{\bsnm{Juillet},~\bfnm{Nicolas}\binits{N.}}
(\byear{2016}).
\btitle{On a problem of optimal transport under marginal martingale constraints}.
\bjournal{Ann. Probab.}
\bvolume{44}
\bpages{42--106}.
\bdoi{10.1214/14-AOP966}
\bmrnumber{3456332}
\end{barticle}
\endbibitem

\bibitem{completedualityMOT17}
\begin{barticle}[author]
\bauthor{\bsnm{Beiglb\"{o}ck},~\bfnm{Mathias}\binits{M.}}, \bauthor{\bsnm{Nutz},~\bfnm{Marcel}\binits{M.}} \AND \bauthor{\bsnm{Touzi},~\bfnm{Nizar}\binits{N.}}
(\byear{2017}).
\btitle{Complete duality for martingale optimal transport on the line}.
\bjournal{Ann. Probab.}
\bvolume{45}
\bpages{3038--3074}.
\bdoi{10.1214/16-AOP1131}
\bmrnumber{3706738}
\end{barticle}
\endbibitem

\bibitem{MR4041704}
\begin{barticle}[author]
\bauthor{\bsnm{Bigot},~\bfnm{J\'{e}r\'{e}mie}\binits{J.}}, \bauthor{\bsnm{Cazelles},~\bfnm{Elsa}\binits{E.}} \AND \bauthor{\bsnm{Papadakis},~\bfnm{Nicolas}\binits{N.}}
(\byear{2019}).
\btitle{Central limit theorems for entropy-regularized optimal transport on finite spaces and statistical applications}.
\bjournal{Electron. J. Stat.}
\bvolume{13}
\bpages{5120--5150}.
\bdoi{10.1214/19-EJS1637}
\bmrnumber{4041704}
\end{barticle}
\endbibitem

\bibitem{blanchet2024empirical}
\begin{barticle}[author]
\bauthor{\bsnm{Blanchet},~\bfnm{Jose}\binits{J.}}, \bauthor{\bsnm{Wiesel},~\bfnm{Johannes}\binits{J.}}, \bauthor{\bsnm{Zhang},~\bfnm{Erica}\binits{E.}} \AND \bauthor{\bsnm{Zhang},~\bfnm{Zhenyuan}\binits{Z.}}
(\byear{2024}).
\btitle{Empirical martingale projections via the adapted Wasserstein distance}.
\bjournal{arXiv preprint arXiv:2401.12197}.
\end{barticle}
\endbibitem

\bibitem{JMLR:v23:21-1483}
\begin{barticle}[author]
\bauthor{\bsnm{Block},~\bfnm{Adam}\binits{A.}}, \bauthor{\bsnm{Jia},~\bfnm{Zeyu}\binits{Z.}}, \bauthor{\bsnm{Polyanskiy},~\bfnm{Yury}\binits{Y.}} \AND \bauthor{\bsnm{Rakhlin},~\bfnm{Alexander}\binits{A.}}
(\byear{2022}).
\btitle{Intrinsic Dimension Estimation Using Wasserstein Distance}.
\bjournal{Journal of Machine Learning Research}
\bvolume{23}
\bpages{1--37}.
\end{barticle}
\endbibitem

\bibitem{MR2861675}
\begin{barticle}[author]
\bauthor{\bsnm{Boissard},~\bfnm{Emmanuel}\binits{E.}}
(\byear{2011}).
\btitle{Simple bounds for convergence of empirical and occupation measures in 1-{W}asserstein distance}.
\bjournal{Electron. J. Probab.}
\bvolume{16}
\bpages{no. 83, 2296--2333}.
\bdoi{10.1214/EJP.v16-958}
\bmrnumber{2861675}
\end{barticle}
\endbibitem

\bibitem{MR3189084}
\begin{barticle}[author]
\bauthor{\bsnm{Boissard},~\bfnm{Emmanuel}\binits{E.}} \AND \bauthor{\bsnm{Le~Gouic},~\bfnm{Thibaut}\binits{T.}}
(\byear{2014}).
\btitle{On the mean speed of convergence of empirical and occupation measures in {W}asserstein distance}.
\bjournal{Ann. Inst. Henri Poincar\'{e} Probab. Stat.}
\bvolume{50}
\bpages{539--563}.
\bdoi{10.1214/12-AIHP517}
\bmrnumber{3189084}
\end{barticle}
\endbibitem

\bibitem{MR2280433}
\begin{barticle}[author]
\bauthor{\bsnm{Bolley},~\bfnm{Fran\c{c}ois}\binits{F.}}, \bauthor{\bsnm{Guillin},~\bfnm{Arnaud}\binits{A.}} \AND \bauthor{\bsnm{Villani},~\bfnm{C\'{e}dric}\binits{C.}}
(\byear{2007}).
\btitle{Quantitative concentration inequalities for empirical measures on non-compact spaces}.
\bjournal{Probab. Theory Related Fields}
\bvolume{137}
\bpages{541--593}.
\bdoi{10.1007/s00440-006-0004-7}
\bmrnumber{2280433}
\end{barticle}
\endbibitem

\bibitem{breeden1978prices}
\begin{barticle}[author]
\bauthor{\bsnm{Breeden},~\bfnm{Douglas~T}\binits{D.~T.}} \AND \bauthor{\bsnm{Litzenberger},~\bfnm{Robert~H}\binits{R.~H.}}
(\byear{1978}).
\btitle{Prices of state-contingent claims implicit in option prices}.
\bjournal{Journal of business}
\bpages{621--651}.
\end{barticle}
\endbibitem

\bibitem{Brenier91}
\begin{barticle}[author]
\bauthor{\bsnm{Brenier},~\bfnm{Yann}\binits{Y.}}
(\byear{1991}).
\btitle{Polar factorization and monotone rearrangement of vector-valued functions}.
\bjournal{Comm. Pure Appl. Math.}
\bvolume{44}
\bpages{375--417}.
\bdoi{10.1002/cpa.3160440402}
\bmrnumber{1100809}
\end{barticle}
\endbibitem

\bibitem{bruckerhoff2022instability}
\begin{barticle}[author]
\bauthor{\bsnm{Br{\"u}ckerhoff},~\bfnm{Martin}\binits{M.}} \AND \bauthor{\bsnm{Juillet},~\bfnm{Nicolas}\binits{N.}}
(\byear{2022}).
\btitle{Instability of martingale optimal transport in dimension $d\ge 2$}.
\bjournal{Electronic Communications in Probability}
\bvolume{27}
\bpages{1--10}.
\end{barticle}
\endbibitem

\bibitem{Caffarelli_contraction}
\begin{barticle}[author]
\bauthor{\bsnm{Caffarelli},~\bfnm{Luis~A.}\binits{L.~A.}}
(\byear{2000}).
\btitle{Monotonicity properties of optimal transportation and the {FKG} and related inequalities}.
\bjournal{Comm. Math. Phys.}
\bvolume{214}
\bpages{547--563}.
\bdoi{10.1007/s002200000257}
\bmrnumber{1800860}
\end{barticle}
\endbibitem

\bibitem{camastra2016intrinsic}
\begin{barticle}[author]
\bauthor{\bsnm{Camastra},~\bfnm{Francesco}\binits{F.}} \AND \bauthor{\bsnm{Staiano},~\bfnm{Antonino}\binits{A.}}
(\byear{2016}).
\btitle{Intrinsic dimension estimation: Advances and open problems}.
\bjournal{Information Sciences}
\bvolume{328}
\bpages{26--41}.
\end{barticle}
\endbibitem

\bibitem{campi2017change}
\begin{barticle}[author]
\bauthor{\bsnm{Campi},~\bfnm{Luciano}\binits{L.}}, \bauthor{\bsnm{Laachir},~\bfnm{Ismail}\binits{I.}} \AND \bauthor{\bsnm{Martini},~\bfnm{Claude}\binits{C.}}
(\byear{2017}).
\btitle{Change of numeraire in the two-marginals martingale transport problem}.
\bjournal{Finance and Stochastics}
\bvolume{21}
\bpages{471--486}.
\end{barticle}
\endbibitem

\bibitem{cheridito2021martingale}
\begin{barticle}[author]
\bauthor{\bsnm{Cheridito},~\bfnm{Patrick}\binits{P.}}, \bauthor{\bsnm{Kiiski},~\bfnm{Matti}\binits{M.}}, \bauthor{\bsnm{Pr{\"o}mel},~\bfnm{David~J}\binits{D.~J.}} \AND \bauthor{\bsnm{Soner},~\bfnm{H~Mete}\binits{H.~M.}}
(\byear{2021}).
\btitle{Martingale optimal transport duality}.
\bjournal{Mathematische Annalen}
\bvolume{379}
\bpages{1685--1712}.
\end{barticle}
\endbibitem

\bibitem{chone2021matching}
\begin{btechreport}[author]
\bauthor{\bsnm{Chon{\'e}},~\bfnm{Philippe}\binits{P.}}, \bauthor{\bsnm{Kramarz},~\bfnm{Francis}\binits{F.}} \betal{et~al.}
(\byear{2021}).
\btitle{Matching Workers' Skills and Firms' Technologies: From Bundling to Unbundling}
\btype{Technical Report},
\bpublisher{Center for Research in Economics and Statistics}.
\end{btechreport}
\endbibitem

\bibitem{choné2024weakoptimaltransportunnormalized}
\begin{bmisc}[author]
\bauthor{\bsnm{Choné},~\bfnm{Philippe}\binits{P.}}, \bauthor{\bsnm{Gozlan},~\bfnm{Nathael}\binits{N.}} \AND \bauthor{\bsnm{Kramarz},~\bfnm{Francis}\binits{F.}}
(\byear{2024}).
\btitle{Weak optimal transport with unnormalized kernels}.
\end{bmisc}
\endbibitem

\bibitem{davidson2000statistical}
\begin{barticle}[author]
\bauthor{\bsnm{Davidson},~\bfnm{Russell}\binits{R.}} \AND \bauthor{\bsnm{Duclos},~\bfnm{Jean-Yves}\binits{J.-Y.}}
(\byear{2000}).
\btitle{Statistical inference for stochastic dominance and for the measurement of poverty and inequality}.
\bjournal{Econometrica}
\bvolume{68}
\bpages{1435--1464}.
\end{barticle}
\endbibitem

\bibitem{Dereich_Scheutzow_Schottstedt2013}
\begin{barticle}[author]
\bauthor{\bsnm{Dereich},~\bfnm{Steffen}\binits{S.}}, \bauthor{\bsnm{Scheutzow},~\bfnm{Michael}\binits{M.}} \AND \bauthor{\bsnm{Schottstedt},~\bfnm{Reik}\binits{R.}}
(\byear{2013}).
\btitle{Constructive quantization: approximation by empirical measures}.
\bjournal{Ann. Inst. Henri Poincar\'{e} Probab. Stat.}
\bvolume{49}
\bpages{1183--1203}.
\bdoi{10.1214/12-AIHP489}
\bmrnumber{3127919}
\end{barticle}
\endbibitem

\bibitem{diamond2016cvxpy}
\begin{barticle}[author]
\bauthor{\bsnm{Diamond},~\bfnm{Steven}\binits{S.}} \AND \bauthor{\bsnm{Boyd},~\bfnm{Stephen}\binits{S.}}
(\byear{2016}).
\btitle{{CVXPY}: {A} {P}ython-embedded modeling language for convex optimization}.
\bjournal{Journal of Machine Learning Research}
\bvolume{17}
\bpages{1--5}.
\end{barticle}
\endbibitem

\bibitem{dolinsky2014martingale}
\begin{barticle}[author]
\bauthor{\bsnm{Dolinsky},~\bfnm{Yan}\binits{Y.}} \AND \bauthor{\bsnm{Soner},~\bfnm{H.~Mete}\binits{H.~M.}}
(\byear{2014}).
\btitle{Martingale optimal transport and robust hedging in continuous time}.
\bjournal{Probab. Theory Related Fields}
\bvolume{160}
\bpages{391--427}.
\bdoi{10.1007/s00440-013-0531-y}
\bmrnumber{3256817}
\end{barticle}
\endbibitem

\bibitem{dudley69}
\begin{barticle}[author]
\bauthor{\bsnm{Dudley},~\bfnm{R.~M.}\binits{R.~M.}}
(\byear{1969}).
\btitle{The Speed of Mean Glivenko-Cantelli Convergence}.
\bjournal{The Annals of Mathematical Statistics}
\bvolume{40}
\bpages{40--50}.
\end{barticle}
\endbibitem

\bibitem{ekern1980increasing}
\begin{barticle}[author]
\bauthor{\bsnm{Ekern},~\bfnm{Steinar}\binits{S.}}
(\byear{1980}).
\btitle{Increasing Nth degree risk}.
\bjournal{Economics Letters}
\bvolume{6}
\bpages{329--333}.
\end{barticle}
\endbibitem

\bibitem{figlewski2008estimating}
\begin{barticle}[author]
\bauthor{\bsnm{Figlewski},~\bfnm{Stephen}\binits{S.}}
(\byear{2008}).
\btitle{Estimating the implied risk neutral density}.
\end{barticle}
\endbibitem

\bibitem{figlewski2018risk}
\begin{barticle}[author]
\bauthor{\bsnm{Figlewski},~\bfnm{Stephen}\binits{S.}}
(\byear{2018}).
\btitle{Risk-neutral densities: A review}.
\bjournal{Annual Review of Financial Economics}
\bvolume{10}
\bpages{329--359}.
\end{barticle}
\endbibitem

\bibitem{foucault2017toxic}
\begin{barticle}[author]
\bauthor{\bsnm{Foucault},~\bfnm{Thierry}\binits{T.}}, \bauthor{\bsnm{Kozhan},~\bfnm{Roman}\binits{R.}} \AND \bauthor{\bsnm{Tham},~\bfnm{Wing~Wah}\binits{W.~W.}}
(\byear{2017}).
\btitle{Toxic arbitrage}.
\bjournal{The Review of Financial Studies}
\bvolume{30}
\bpages{1053--1094}.
\end{barticle}
\endbibitem

\bibitem{NF_AG_rate_Wasserstein}
\begin{barticle}[author]
\bauthor{\bsnm{Fournier},~\bfnm{Nicolas}\binits{N.}} \AND \bauthor{\bsnm{Guillin},~\bfnm{Arnaud}\binits{A.}}
(\byear{2015}).
\btitle{On the rate of convergence in {W}asserstein distance of the empirical measure}.
\bjournal{Probab. Theory Related Fields}
\bvolume{162}
\bpages{707--738}.
\bdoi{10.1007/s00440-014-0583-7}
\bmrnumber{3383341}
\end{barticle}
\endbibitem

\bibitem{frank1956algorithm}
\begin{barticle}[author]
\bauthor{\bsnm{Frank},~\bfnm{Marguerite}\binits{M.}} \AND \bauthor{\bsnm{Philip},~\bfnm{Wolfe}\binits{W.}}
(\byear{1956}).
\btitle{An algorithm for quadratic programming}.
\bjournal{Naval research logistics quarterly}
\bvolume{3}
\bpages{95--110}.
\end{barticle}
\endbibitem

\bibitem{galichon2014stochastic}
\begin{barticle}[author]
\bauthor{\bsnm{Galichon},~\bfnm{Alfred}\binits{A.}}, \bauthor{\bsnm{Henri-Labord{\`e}re},~\bfnm{Pierre}\binits{P.}} \AND \bauthor{\bsnm{Touzi},~\bfnm{Nizar}\binits{N.}}
(\byear{2014}).
\btitle{A stochastic control approach to No-Arbitrage bounds given marginals, with an application to Lookback options}.
\bjournal{Annals of Applied Probability}
\bvolume{24}
\bpages{312--336}.
\end{barticle}
\endbibitem

\bibitem{MR3161649}
\begin{barticle}[author]
\bauthor{\bsnm{Galichon},~\bfnm{A.}\binits{A.}}, \bauthor{\bsnm{Henry-Labord\`ere},~\bfnm{P.}\binits{P.}} \AND \bauthor{\bsnm{Touzi},~\bfnm{N.}\binits{N.}}
(\byear{2014}).
\btitle{A stochastic control approach to no-arbitrage bounds given marginals, with an application to lookback options}.
\bjournal{Ann. Appl. Probab.}
\bvolume{24}
\bpages{312--336}.
\bdoi{10.1214/13-AAP925}
\bmrnumber{3161649}
\end{barticle}
\endbibitem

\bibitem{structureMOT19}
\begin{barticle}[author]
\bauthor{\bsnm{Ghoussoub},~\bfnm{Nassif}\binits{N.}}, \bauthor{\bsnm{Kim},~\bfnm{Young-Heon}\binits{Y.-H.}} \AND \bauthor{\bsnm{Lim},~\bfnm{Tongseok}\binits{T.}}
(\byear{2019}).
\btitle{Structure of optimal martingale transport plans in general dimensions}.
\bjournal{Ann. Probab.}
\bvolume{47}
\bpages{109--164}.
\bdoi{10.1214/18-AOP1258}
\bmrnumber{3909967}
\end{barticle}
\endbibitem

\bibitem{gierjatowicz2023robust}
\begin{barticle}[author]
\bauthor{\bsnm{Gierjatowicz},~\bfnm{Patrick}\binits{P.}}, \bauthor{\bsnm{Sabate-Vidales},~\bfnm{Marc}\binits{M.}} \AND \bauthor{\bsnm{{\v{S}}i{\v{s}}ka},~\bfnm{David}\binits{D.}}
(\byear{2023}).
\btitle{Robust pricing and hedging via neural stochastic differential equations}.
\bjournal{Journal of Computational Finance}.
\end{barticle}
\endbibitem

\bibitem{gine2021mathematical}
\begin{bbook}[author]
\bauthor{\bsnm{Gin{\'e}},~\bfnm{Evarist}\binits{E.}} \AND \bauthor{\bsnm{Nickl},~\bfnm{Richard}\binits{R.}}
(\byear{2021}).
\btitle{Mathematical foundations of infinite-dimensional statistical models}.
\bpublisher{Cambridge university press}.
\end{bbook}
\endbibitem

\bibitem{gonzalezsanz2022weak}
\begin{bmisc}[author]
\bauthor{\bsnm{Gonzalez-Sanz},~\bfnm{Alberto}\binits{A.}}, \bauthor{\bsnm{Loubes},~\bfnm{Jean-Michel}\binits{J.-M.}} \AND \bauthor{\bsnm{Niles-Weed},~\bfnm{Jonathan}\binits{J.}}
(\byear{2022}).
\btitle{Weak limits of entropy regularized Optimal Transport; potentials, plans and divergences}.
\end{bmisc}
\endbibitem

\bibitem{gonzálezsanz2023weak}
\begin{bmisc}[author]
\bauthor{\bsnm{González-Sanz},~\bfnm{Alberto}\binits{A.}} \AND \bauthor{\bsnm{Hundrieser},~\bfnm{Shayan}\binits{S.}}
(\byear{2023}).
\btitle{Weak Limits for Empirical Entropic Optimal Transport: Beyond Smooth Costs}.
\end{bmisc}
\endbibitem

\bibitem{Gozlan_Juillet20}
\begin{barticle}[author]
\bauthor{\bsnm{Gozlan},~\bfnm{Nathael}\binits{N.}} \AND \bauthor{\bsnm{Juillet},~\bfnm{Nicolas}\binits{N.}}
(\byear{2020}).
\btitle{On a mixture of {B}renier and {S}trassen theorems}.
\bjournal{Proc. Lond. Math. Soc. (3)}
\bvolume{120}
\bpages{434--463}.
\bdoi{10.1112/plms.12302}
\bmrnumber{4008375}
\end{barticle}
\endbibitem

\bibitem{GOZLAN20173327}
\begin{barticle}[author]
\bauthor{\bsnm{Gozlan},~\bfnm{Nathael}\binits{N.}}, \bauthor{\bsnm{Roberto},~\bfnm{Cyril}\binits{C.}}, \bauthor{\bsnm{Samson},~\bfnm{Paul-Marie}\binits{P.-M.}} \AND \bauthor{\bsnm{Tetali},~\bfnm{Prasad}\binits{P.}}
(\byear{2017}).
\btitle{Kantorovich duality for general transport costs and applications}.
\bjournal{Journal of Functional Analysis}
\bvolume{273}
\bpages{3327-3405}.
\bdoi{https://doi.org/10.1016/j.jfa.2017.08.015}
\end{barticle}
\endbibitem

\bibitem{computationalMOT19}
\begin{barticle}[author]
\bauthor{\bsnm{Guo},~\bfnm{Gaoyue}\binits{G.}} \AND \bauthor{\bsnm{Ob\l~\'{o}j},~\bfnm{Jan}\binits{J.}}
(\byear{2019}).
\btitle{Computational methods for martingale optimal transport problems}.
\bjournal{Ann. Appl. Probab.}
\bvolume{29}
\bpages{3311--3347}.
\bdoi{10.1214/19-AAP1481}
\bmrnumber{4047982}
\end{barticle}
\endbibitem

\bibitem{henry2016explicit}
\begin{barticle}[author]
\bauthor{\bsnm{Henry-Labord{\`e}re},~\bfnm{Pierre}\binits{P.}} \AND \bauthor{\bsnm{Touzi},~\bfnm{Nizar}\binits{N.}}
(\byear{2016}).
\btitle{An explicit martingale version of the one-dimensional Brenier theorem}.
\bjournal{Finance and Stochastics}
\bvolume{20}
\bpages{635--668}.
\end{barticle}
\endbibitem

\bibitem{hobson2012robust}
\begin{barticle}[author]
\bauthor{\bsnm{Hobson},~\bfnm{David}\binits{D.}} \AND \bauthor{\bsnm{Neuberger},~\bfnm{Anthony}\binits{A.}}
(\byear{2012}).
\btitle{Robust bounds for forward start options}.
\bjournal{Mathematical Finance: An International Journal of Mathematics, Statistics and Financial Economics}
\bvolume{22}
\bpages{31--56}.
\end{barticle}
\endbibitem

\bibitem{hou2018robust}
\begin{barticle}[author]
\bauthor{\bsnm{Hou},~\bfnm{Zhaoxu}\binits{Z.}} \AND \bauthor{\bsnm{Ob{\l}{\'o}j},~\bfnm{Jan}\binits{J.}}
(\byear{2018}).
\btitle{Robust pricing--hedging dualities in continuous time}.
\bjournal{Finance and Stochastics}
\bvolume{22}
\bpages{511--567}.
\end{barticle}
\endbibitem

\bibitem{MR4255123}
\begin{barticle}[author]
\bauthor{\bsnm{H\"{u}tter},~\bfnm{Jan-Christian}\binits{J.-C.}} \AND \bauthor{\bsnm{Rigollet},~\bfnm{Philippe}\binits{P.}}
(\byear{2021}).
\btitle{Minimax estimation of smooth optimal transport maps}.
\bjournal{Ann. Statist.}
\bvolume{49}
\bpages{1166--1194}.
\bdoi{10.1214/20-aos1997}
\bmrnumber{4255123}
\end{barticle}
\endbibitem

\bibitem{jaggi2013revisiting}
\begin{binproceedings}[author]
\bauthor{\bsnm{Jaggi},~\bfnm{Martin}\binits{M.}}
(\byear{2013}).
\btitle{Revisiting Frank-Wolfe: Projection-free sparse convex optimization}.
In \bbooktitle{International conference on machine learning}
\bpages{427--435}.
\bpublisher{PMLR}.
\end{binproceedings}
\endbibitem

\bibitem{MR4171389}
\begin{barticle}[author]
\bauthor{\bsnm{Jourdain},~\bfnm{B.}\binits{B.}} \AND \bauthor{\bsnm{Margheriti},~\bfnm{W.}\binits{W.}}
(\byear{2020}).
\btitle{A new family of one dimensional martingale couplings}.
\bjournal{Electron. J. Probab.}
\bvolume{25}
\bpages{Paper No. 136, 50}.
\bdoi{10.1214/20-ejp543}
\bmrnumber{4171389}
\end{barticle}
\endbibitem

\bibitem{Jourdain_2023}
\begin{barticle}[author]
\bauthor{\bsnm{Jourdain},~\bfnm{Benjamin}\binits{B.}}, \bauthor{\bsnm{Margheriti},~\bfnm{William}\binits{W.}} \AND \bauthor{\bsnm{Pammer},~\bfnm{Gudmund}\binits{G.}}
(\byear{2023}).
\btitle{Lipschitz continuity of the {W}asserstein projections in the convex order on the line}.
\bjournal{Electron. Commun. Probab.}
\bvolume{28}
\bpages{Paper No. 18, 13}.
\bdoi{10.1214/23-ecp525}
\bmrnumber{4596533}
\end{barticle}
\endbibitem

\bibitem{Kertz_Rosler2000}
\begin{bincollection}[author]
\bauthor{\bsnm{Kertz},~\bfnm{Robert~P.}\binits{R.~P.}} \AND \bauthor{\bsnm{R\"osler},~\bfnm{Uwe}\binits{U.}}
(\byear{2000}).
\btitle{Complete lattices of probability measures with applications to martingale theory}.
In \bbooktitle{Game theory, optimal stopping, probability and statistics}.
\bseries{IMS Lecture Notes Monogr. Ser.}
\bvolume{35}
\bpages{153--177}.
\bpublisher{Inst. Math. Statist., Beachwood, OH}.
\bdoi{10.1214/lnms/1215089751}
\bmrnumber{1833858}
\end{bincollection}
\endbibitem

\bibitem{yh_yl_stochastic_order}
\begin{barticle}[author]
\bauthor{\bsnm{Kim},~\bfnm{Young-Heon}\binits{Y.-H.}} \AND \bauthor{\bsnm{Ruan},~\bfnm{Yuan~Long}\binits{Y.~L.}}
(\byear{2023}).
\btitle{Backward and forward Wasserstein projections in stochastic order}.
\bjournal{Journal of Functional Analysis}
\bpages{110201}.
\bdoi{https://doi.org/10.1016/j.jfa.2023.110201}
\end{barticle}
\endbibitem

\bibitem{lacoste2015global}
\begin{barticle}[author]
\bauthor{\bsnm{Lacoste-Julien},~\bfnm{Simon}\binits{S.}} \AND \bauthor{\bsnm{Jaggi},~\bfnm{Martin}\binits{M.}}
(\byear{2015}).
\btitle{On the global linear convergence of Frank-Wolfe optimization variants}.
\bjournal{Advances in neural information processing systems}
\bvolume{28}.
\end{barticle}
\endbibitem

\bibitem{ledoux1999concentration}
\begin{barticle}[author]
\bauthor{\bsnm{Ledoux},~\bfnm{Michel}\binits{M.}}
(\byear{1999}).
\btitle{Concentration of measure and logarithmic {S}obolev inequalities}.
\bjournal{Séminaire de probabilités de Strasbourg}
\bvolume{33}
\bpages{120-216}.
\end{barticle}
\endbibitem

\bibitem{MR4359822}
\begin{barticle}[author]
\bauthor{\bsnm{Ledoux},~\bfnm{Michel}\binits{M.}} \AND \bauthor{\bsnm{Zhu},~\bfnm{Jie-Xiang}\binits{J.-X.}}
(\byear{2021}).
\btitle{On optimal matching of {G}aussian samples {III}}.
\bjournal{Probab. Math. Statist.}
\bvolume{41}
\bpages{237--265}.
\bdoi{10.37190/0208-4147.41.2.3}
\bmrnumber{4359822}
\end{barticle}
\endbibitem

\bibitem{lei2020convergence}
\begin{barticle}[author]
\bauthor{\bsnm{Lei},~\bfnm{Jing}\binits{J.}}
(\byear{2020}).
\btitle{Convergence and concentration of empirical measures under {W}asserstein distance in unbounded functional spaces}.
\bjournal{Bernoulli}
\bvolume{26}
\bpages{767--798}.
\bdoi{10.3150/19-BEJ1151}
\bmrnumber{4036051}
\end{barticle}
\endbibitem

\bibitem{levy2015stochastic}
\begin{bbook}[author]
\bauthor{\bsnm{Levy},~\bfnm{Haim}\binits{H.}}
(\byear{2016}).
\btitle{Stochastic dominance: investment decision making under uncertainty},
\bedition{Third} ed.
\bpublisher{Springer}.
\end{bbook}
\endbibitem

\bibitem{linton2005consistent}
\begin{barticle}[author]
\bauthor{\bsnm{Linton},~\bfnm{Oliver}\binits{O.}}, \bauthor{\bsnm{Maasoumi},~\bfnm{Esfandiar}\binits{E.}} \AND \bauthor{\bsnm{Whang},~\bfnm{Yoon-Jae}\binits{Y.-J.}}
(\byear{2005}).
\btitle{Consistent testing for stochastic dominance under general sampling schemes}.
\bjournal{The Review of Economic Studies}
\bvolume{72}
\bpages{735--765}.
\end{barticle}
\endbibitem

\bibitem{10.1214/23-AAP1986}
\begin{barticle}[author]
\bauthor{\bsnm{Manole},~\bfnm{Tudor}\binits{T.}} \AND \bauthor{\bsnm{Niles-Weed},~\bfnm{Jonathan}\binits{J.}}
(\byear{2024}).
\btitle{{Sharp convergence rates for empirical optimal transport with smooth costs}}.
\bjournal{The Annals of Applied Probability}
\bvolume{34}
\bpages{1108 -- 1135}.
\bdoi{10.1214/23-AAP1986}
\end{barticle}
\endbibitem

\bibitem{IrreducibleMOT19}
\begin{barticle}[author]
\bauthor{\bsnm{March},~\bfnm{Hadrien~De}\binits{H.~D.}} \AND \bauthor{\bsnm{Touzi},~\bfnm{Nizar}\binits{N.}}
(\byear{2019}).
\btitle{Irreducible Convex paving for decomposition of multidimensional martingale transport plans}.
\bjournal{The Annals of Probability}
\bvolume{47}
\bpages{pp. 1726--1774}.
\end{barticle}
\endbibitem

\bibitem{mcfadden1989testing}
\begin{bincollection}[author]
\bauthor{\bsnm{McFadden},~\bfnm{Daniel}\binits{D.}}
(\byear{1989}).
\btitle{Testing for stochastic dominance}.
In \bbooktitle{Studies in the economics of uncertainty: In honor of Josef Hadar}
\bpages{113--134}.
\bpublisher{Springer}.
\end{bincollection}
\endbibitem

\bibitem{menezes1980increasing}
\begin{barticle}[author]
\bauthor{\bsnm{Menezes},~\bfnm{Carmen}\binits{C.}}, \bauthor{\bsnm{Geiss},~\bfnm{Charles}\binits{C.}} \AND \bauthor{\bsnm{Tressler},~\bfnm{John}\binits{J.}}
(\byear{1980}).
\btitle{Increasing downside risk}.
\bjournal{The American Economic Review}
\bvolume{70}
\bpages{921--932}.
\end{barticle}
\endbibitem

\bibitem{merigot2021non}
\begin{barticle}[author]
\bauthor{\bsnm{M{\'e}rigot},~\bfnm{Quentin}\binits{Q.}}, \bauthor{\bsnm{Santambrogio},~\bfnm{Filippo}\binits{F.}} \AND \bauthor{\bsnm{Sarrazin},~\bfnm{Cl{\'e}ment}\binits{C.}}
(\byear{2021}).
\btitle{Non-asymptotic convergence bounds for Wasserstein approximation using point clouds}.
\bjournal{Advances in Neural Information Processing Systems}
\bvolume{34}
\bpages{12810--12821}.
\end{barticle}
\endbibitem

\bibitem{MR4441130}
\begin{barticle}[author]
\bauthor{\bsnm{Niles-Weed},~\bfnm{Jonathan}\binits{J.}} \AND \bauthor{\bsnm{Berthet},~\bfnm{Quentin}\binits{Q.}}
(\byear{2022}).
\btitle{Minimax estimation of smooth densities in {W}asserstein distance}.
\bjournal{Ann. Statist.}
\bvolume{50}
\bpages{1519--1540}.
\bdoi{10.1214/21-aos2161}
\bmrnumber{4441130}
\end{barticle}
\endbibitem

\bibitem{nordstrom2022workers}
\begin{btechreport}[author]
\bauthor{\bsnm{Nordstr{\"o}m~Skans},~\bfnm{Oskar}\binits{O.}}, \bauthor{\bsnm{Chon{\'e}},~\bfnm{Philippe}\binits{P.}} \AND \bauthor{\bsnm{Kramarz},~\bfnm{Francis}\binits{F.}}
(\byear{2022}).
\btitle{When workers' skills become unbundled: Some empirical consequences for sorting and wages}
\btype{Technical Report},
\bpublisher{Working Paper}.
\end{btechreport}
\endbibitem

\bibitem{oberman2020solution}
\begin{barticle}[author]
\bauthor{\bsnm{Oberman},~\bfnm{Adam~M}\binits{A.~M.}} \AND \bauthor{\bsnm{Ruan},~\bfnm{Yuanlong}\binits{Y.}}
(\byear{2020}).
\btitle{Solution of optimal transportation problems using a multigrid linear programming approach}.
\bjournal{Journal of Computational Mathematics}
\bvolume{38}
\bpages{933}.
\end{barticle}
\endbibitem

\bibitem{peyre2019computational}
\begin{barticle}[author]
\bauthor{\bsnm{Peyr{\'e}},~\bfnm{Gabriel}\binits{G.}}, \bauthor{\bsnm{Cuturi},~\bfnm{Marco}\binits{M.}} \betal{et~al.}
(\byear{2019}).
\btitle{Computational optimal transport: With applications to data science}.
\bjournal{Foundations and Trends{\textregistered} in Machine Learning}
\bvolume{11}
\bpages{355--607}.
\end{barticle}
\endbibitem

\bibitem{shaked2007stochastic}
\begin{bbook}[author]
\bauthor{\bsnm{Shaked},~\bfnm{Moshe}\binits{M.}} \AND \bauthor{\bsnm{Shanthikumar},~\bfnm{J~George}\binits{J.~G.}}
(\byear{2007}).
\btitle{Stochastic orders}.
\bpublisher{Springer}.
\end{bbook}
\endbibitem

\bibitem{shorrocks1987transfer}
\begin{barticle}[author]
\bauthor{\bsnm{Shorrocks},~\bfnm{Anthony~F}\binits{A.~F.}} \AND \bauthor{\bsnm{Foster},~\bfnm{James~E}\binits{J.~E.}}
(\byear{1987}).
\btitle{Transfer sensitive inequality measures}.
\bjournal{The Review of Economic Studies}
\bvolume{54}
\bpages{485--497}.
\end{barticle}
\endbibitem

\bibitem{strassen1965existence}
\begin{barticle}[author]
\bauthor{\bsnm{Strassen},~\bfnm{Volker}\binits{V.}}
(\byear{1965}).
\btitle{The existence of probability measures with given marginals}.
\bjournal{The Annals of Mathematical Statistics}
\bvolume{36}
\bpages{423--439}.
\end{barticle}
\endbibitem

\bibitem{MR3127880}
\begin{barticle}[author]
\bauthor{\bsnm{Tan},~\bfnm{Xiaolu}\binits{X.}} \AND \bauthor{\bsnm{Touzi},~\bfnm{Nizar}\binits{N.}}
(\byear{2013}).
\btitle{Optimal transportation under controlled stochastic dynamics}.
\bjournal{Ann. Probab.}
\bvolume{41}
\bpages{3201--3240}.
\bdoi{10.1214/12-AOP797}
\bmrnumber{3127880}
\end{barticle}
\endbibitem

\bibitem{JW_FB_sample_rates}
\begin{barticle}[author]
\bauthor{\bsnm{Weed},~\bfnm{Jonathan}\binits{J.}} \AND \bauthor{\bsnm{Bach},~\bfnm{Francis}\binits{F.}}
(\byear{2019}).
\btitle{Sharp asymptotic and finite-sample rates of convergence of empirical measures in {W}asserstein distance}.
\bjournal{Bernoulli}
\bvolume{25}
\bpages{2620--2648}.
\bdoi{10.3150/18-BEJ1065}
\bmrnumber{4003560}
\end{barticle}
\endbibitem

\bibitem{stochasticorder2018}
\begin{bbook}[author]
\bauthor{\bsnm{Whang},~\bfnm{Yoon-Jae}\binits{Y.-J.}}
(\byear{2018;2019;}).
\btitle{Econometric analysis of stochastic dominance concepts, methods, tools, and applications},
\bedition{1} ed.
\bpublisher{Cambridge University Press}, \baddress{New York}.
\end{bbook}
\endbibitem

\bibitem{whitmore1970third}
\begin{barticle}[author]
\bauthor{\bsnm{Whitmore},~\bfnm{George~A}\binits{G.~A.}}
(\byear{1970}).
\btitle{Third-degree stochastic dominance}.
\bjournal{The American Economic Review}
\bvolume{60}
\bpages{457--459}.
\end{barticle}
\endbibitem

\bibitem{whitmore1978stochastic}
\begin{bbook}[author]
\bauthor{\bsnm{Whitmore},~\bfnm{G.~A.}\binits{G.~A.}} \AND \bauthor{\bsnm{Findlay},~\bfnm{M.~C.}\binits{M.~C.}}
(\byear{1978}).
\btitle{Stochastic dominance: an approach to decision-making under risk}.
\bpublisher{Lexington Books}, \baddress{Lexington, Mass}.
\end{bbook}
\endbibitem

\end{thebibliography}

%\begin{supplement}\stitle{Proofs of section~\ref{sec : Convergence rate}}
%\sdescription{Proofs of section~\ref{sec : Convergence rate}}
\begin{appendix}
%\begin{supplement}
%\stitle{Algorithms and experiments}
%\sdescription{Algorithms and experiments}
%
%\end{supplement}

\section{Proofs of section~\ref{sec : Convergence rate}}\label{app: sec : Convergence rate}

\begin{proof}[Proof of \Cref{lem: concentration}]
\begin{enumerate}
    \item[(i)] Consider the random function
\begin{equation}\label{eq:Wproj_func}
    W\left(  X_{1},...,X_{n},Y_{1},...,Y_{m}\right)  =\mathcal{W}_{2}\left(\mu_n,\mathscr{P}_{\preceq \nu_m}^{\text{cx}}\right).
\end{equation}
Due to the the Lipschtiz stability of the Wasserstein projection,
\begin{align*}
    \left\vert W\left(  X_{1},...,X_{i}^{\prime},...,X_{n},Y_{1},...,Y_{m}\right) -W\left(  X_{1},...,X_{i},...,X_{n},Y_{1},...,Y_{m}\right)  \right\vert &\leq \frac{1}{m^{1/2}}\left\vert X_{i}^{\prime}-X_{i}\right\vert,\\
    \left\vert W\left(  X_{1},...,X_{n},Y_{1},...,Y_{j}^{\prime},...,Y_{m}\right) -W \left( X_{1},...,X_{n},Y_{1},...,Y_{j},...,Y_{m}\right)  \right\vert &\leq \frac{1}{n^{1/2}}\left\vert Y_{j}^{\prime}-Y_{j}\right\vert .
\end{align*}
Now the conclusion follows from the product measure reduction property of log-Sobolev inequality and the classical concentration inequality argument
\cite[Section 2.3]{ledoux1999concentration}.
    \item[(ii)] Since $\mu$ and $\nu$ have bounded supports, we can assume without loss of generality that random samples
$X_{1},...,X_{n}\sim\mu,$ $Y_{1},...,Y_{m}\sim\nu$ are all contained in a bounded domain $\Omega$. Recall the duality of the Wasserstein projection $\left(  \ref{eq:Wproj_func}%
\right)  ,$%
\[
    W^2\left(  X_{1},...,X_{n},Y_{1},...,Y_{m}\right)  =\sup_{\varphi \in\mathcal{A}_{cx}\cap C_{b,2}\left(  \Omega\right)  }\left\{  \int Q_{2}\left(  \varphi\right)  d\mu_{n}-\int\varphi d\nu_{m}\right\}  ,
\]
where
\[
    Q_{2}\left(  \varphi\right)  \left(  x\right)  =\inf_{y\in\Omega}\left\{ \varphi\left(  y\right)  +\left\vert x-y\right\vert ^{2}\right\} .
\]
Following the idea of \cite[Theorem 10.1]{yh_yl_stochastic_order} or, in view of the relation between weak optimal transport and backward projection
reviewed in the preliminaries, \cite[Theorem 6.1]{Gozlan_Juillet20}, we see that the optimal dual potential for $W^2$ is attained by some $\varphi_{0}\in\mathcal{A}_{cx}\cap C_{b,2}$ with Lipschitz constant $2D$. By adding a constant to all test functions in $\mathcal{A}_{cx}\cap
C_{b,2}$, we are not changing the value of the supremum. Thus we can assume that all test function $\varphi$ satisfies that
\[
    \inf_{y\in\Omega}\varphi\left(  y\right)  =0.
\]
Therefore,
\[
    W^2=\sup_{\substack{\varphi\in \mathcal{A}_{cx}\cap C_{b,2}\left(  \Omega\right)  \\0\leq\varphi \leq2D^{2}}}\left\{  \frac{1}{n}\sum_{i=1}^{n}Q_{2}\left(  \varphi \right)  \left(  X_{i}\right)  -\frac{1}{m}\sum_{j=1}^{m}\varphi\left(Y_{j}\right)  \right\}  .
\]
Note that for any $0\leq\varphi\leq2D^{2}$ and $0\leq Q_{2}\left(  \varphi\right) \leq \varphi \leq 2D^{2}$. Hence
\begin{align*}
    &  W^2\left(  X_{1},...,X_{n}^{\prime},Y_{1},...,Y_{m}\right)  -W^2\left( X_{1},...,X_{n},Y_{1},...,Y_{m}\right)\\   
    & \leq\sup_{\substack{\varphi^{\prime}\in\mathcal{A}_{cx}\cap C_{b,2}\left(  \Omega\right)  \\0\leq\varphi^{\prime}\leq2D^{2}}}\left\{  \frac{1}{n}\left[  \sum_{i=1}^{n-1}Q_{2}\left(  \varphi^{\prime}\right)  \left(  X_{i}\right)  +Q_{2}\left(  \varphi^{\prime}\right)  \left(X_{n}^{\prime}\right)  \right]  -\frac{1}{m}\sum_{j=1}^{m}\varphi^{\prime }\left(  Y_{j}\right)  \right\}\\
    &\quad - \sup_{\substack{\varphi\in\mathcal{A}_{cx}\cap C_{b,2}\left( \Omega\right)  \\0\leq\varphi\leq2D^{2}}} 
   {  \left\{ \frac{1}{n} \left[\sum_{i=1}^{n}Q_{2}\left(  \varphi\right)  \left(X_{i}\right)  \right] -\frac{1}{m}\sum_{j=1}^{m}\varphi\left(  Y_{j}\right) \right\}
    }
   % \left\{  \frac{1}{n}\left[\sum_{i=1}^{n-1}Q_{2}\left(  \varphi^{\prime}\right)  \left(  X_{i}\right) +Q_{2}\left(  \varphi^{\prime}\right)  \left( X_{n}^{\prime}\right)  \right] -\frac{1}{m}\sum_{j=1}^{m}\varphi\left(  Y_{j}\right) \right\}} 
    \\
    &    
\leq\sup_{\substack{\varphi^{\prime}\in\mathcal{A}_{cx}\cap C_{b,2}\left(  \Omega\right)  \\0\leq\varphi^{\prime}\leq2D^{2}}}
\Big\{   \frac{1}{n}\left[  \sum_{i=1}^{n-1}Q_{2}\left(  \varphi^{\prime}\right)  \left(  X_{i}\right)  +Q_{2}\left( \varphi^{\prime}\right)  \left(X_{n}^{\prime}\right)  \right]  -\frac{1}{m}\sum_{j=1}^{m}\varphi^{\prime}\left(  Y_{j}\right)
    \\
    &  \quad \qquad\qquad\qquad
     -\left[  \frac{1}{n}\sum_{i=1}^{n}Q_{2}\left(\varphi^{\prime}\right)  \left(  X_{i}\right)  -\frac{1}{m}\sum_{j=1}^{m}\varphi^{\prime}\left( Y_{j}\right)  \right]  
      \Big\}
      \\
  %  %\leq\sup_{\substack{\varphi^{\prime}\in\mathcal{A}_{cx}\cap C_{b,2}\left(  \Omega\right)  \\0\leq\varphi^{\prime}\leq2D^{2}}}  \left\{  \frac{1}{n}\left[  \sum_{i=1}^{n-1}Q_{2}\left(  \varphi^{\prime}\right)  \left(  X_{i}\right)  +Q_{2}\left( \varphi^{\prime}\right)  \left(X_{n}^{\prime}\right)  \right]  -\frac{1}{m}\sum_{j=1}^{m}\varphi^{\prime}\left(  Y_{j}\right)
    \\
   % &
    %\quad -\left[  \frac{1}{n}\sum_{i=1}^{n}Q_{2}\left(\varphi^{\prime}\right)  \left(  X_{i}\right)  -\frac{1}{m}\sum_{j=1}^{m}\varphi^{\prime}\left( Y_{j}\right)  \right]  \right\} \\
    &  =\frac{1}{n}\sup_{\substack{\varphi^{\prime}\in\mathcal{A}_{cx}\cap C_{b,2}\left(  \Omega\right) \\0\leq\varphi^{\prime}\leq2D^{2}}}\left\{  Q_{2}\left(  \varphi^{\prime}\right)  \left(  X_{n}^{\prime }\right)  -Q_{2}\left( \varphi^{\prime}\right)  \left(  X_{n}\right)
\right\}  \leq\frac{2D^{2}}{n}\leq\frac{2D^{2}}{m\wedge n}.
\end{align*}
Similarly 
\[
    W^2\left(  X_{1},...,X_{n},Y_{1},...,Y_{m}^{\prime}\right)  -W^2\left( X_{1},...,X_{n},Y_{1},...,Y_{m}\right)  \leq\frac{2D^{2}}{m}\leq\frac{2D^{2}}{m\wedge n}.
\]
Applying McDiarmid's inequality, it follows that
\begin{equation}\label{eq: squared concentrate_bd_supp}
    \mathbb{P}\left(  \left\vert \mathcal{W}_{2}^{2}\left(  \mu_{n},\mathscr{P}_{\preceq \nu_{m}}^{\text{cx}}\right)  -\mathbb{E}\left[ \mathcal{W}_{2}^{2}\left(  \mu_{n},\mathscr{P}_{\preceq\nu_{m}}^{\text{cx}}\right)  \right] \right \vert \geq t\right)  \leq 2\exp\left(  -\frac{\left(  m\wedge n\right)  ^{2}t^{2}}{2\left(  m+n\right) D^{4}}\right)  .
\end{equation}

Using the inequality
\[
    \left\vert \sqrt{a}-\sqrt{b}\right\vert \leq \sqrt{\left\vert a-b\right\vert },\text{ for all } a,b\geq 0,
\]
we obtain that for any $t\geq0,$%
\begin{align*}
    &  \mathbb{P}\left(  \left\vert \mathcal{W}_{2}\left(  \mu_{n},\mathscr{P}_{\preceq\nu_{m}}^{\text{cx}}\right)  -\mathcal{W}_{2}\left(  \mu,\mathscr{P}_{\preceq\nu}^{\text{cx}}\right)  \right\vert \geq t\right)\\
    &  \leq \mathbb{P}\left(  \left[  \left\vert \mathcal{W}_{2}^{2}\left( \mu_{n},\mathscr{P}_{\preceq\nu_{m}}^{\text{cx}}\right)  -\mathcal{W}_{2}^{2}\left(  \mu,\mathscr{P}_{\preceq\nu}^{\text{cx}}\right)  \right\vert \right]  ^{1/2} \geq t\right)\\
    &=\mathbb{P}\left(  \left\vert \mathcal{W}_{2}^{2}\left(  \mu_{n},\mathscr{P}_{\preceq\nu_{m}}^{\text{cx}}\right)  -\mathcal{W}_{2}^{2}\left( \mu,\mathscr{P}_{\preceq\nu}^{\text{cx}}\right)  \right\vert \geq t^{2}\right)  .
\end{align*}
Combining the above with \eqref{eq: squared concentrate_bd_supp}, the conclusion follows.
\end{enumerate}
\end{proof}

\begin{proof}[Proof of \Cref{prop: tail_probability}]
For any $t\geq 0$,
\begin{align*}
    & \mathbb{P}\left(  \left\vert \mathcal{W}_{2}\left(  \mu_{n},\mathscr{P}_{\preceq \nu_{m}}^{\text{cx}}\right)  -\mathcal{W}_{2}\left(  \mu,\mathscr{P}_{\preceq\nu}^{\text{cx}}\right)  \right\vert \geq t\right)\\
    &  \leq \mathbb{P}\left(  \left\vert \mathcal{W}_{2} \left(  \mu_{n},\mathscr{P}_{\preceq\nu_{m}}^{\text{cx}}\right)  -\mathbb{E}\left[\mathcal{W}_{2} \left( \mu_{n},\mathscr{P}_{\preceq\nu_{m}}^{\text{cx}}\right)  \right]  \right\vert \geq\frac{t}{2}\right)\\
    &\quad +\mathbb{P}\left(  \mathbb{E} \left\vert \mathcal{W}_{2} \left( \mu,\mathscr{P}_{\preceq\nu}^{\text{cx}}\right)  -\left[  \mathcal{W}_{2} \left(  \mu_{n},\mathscr{P}_{\preceq\nu_{m}}^{\text{cx}}\right)  \right]  \right\vert \geq\frac{t}{2}\right)  \\
    & \leq \mathbb{P}\left(  \left\vert \mathcal{W}_{2}\left(  \mu_{n},\mathscr{P}_{\preceq\nu_{m}}^{\text{cx}}\right)  -\mathbb{E}\left[\mathcal{W}_{2}\left( \mu_{n},\mathscr{P}_{\preceq\nu_{m}}^{\text{cx}}\right)  \right]  \right\vert \geq\frac{t}{2}\right)\\
    &\quad +\mathbb{P}\left(  \mathbb{E}\left[  \mathcal{W}_{2}\left(  \mu_{n},\mu\right)\right]  \geq\frac{t}{4}\right)  + \mathbb{P}\left(  \mathbb{E}\left[\mathcal{W}_{2}\left(  \nu_{m},\nu\right)  \right]  \geq\frac{t}{4}\right)
\end{align*}
where the last inequality follows from \Cref{thm: consistency of projected Wasserstein-1}.
%\marginpar{\yh I replaced the use of the previous one with \Cref{thm: consistency of projected Wasserstein-1}. }
Using \eqref{eq: convergence rate of FG15} and \eqref{eq: convergence rate of WB19}, if $t > 4(F(\mu, n ,d) \vee F(\nu, m, d)$ for log-Sobolev case and $t > 4(G(\mu, n ,k) \vee G(\nu, m, k)$ for bounded support case, 
\[
    \mathbb{P}\left(  \mathbb{E}\left[  \mathcal{W}_{2}\left(  \mu_{n},\mu\right)\right]  \geq\frac{t}{4}\right)=0, \quad \mathbb{P}\left(  \mathbb{E}\left[\mathcal{W}_{2}\left(  \nu_{m},\nu\right)  \right]  \geq\frac{t}{4}\right) = 0.
\]
Therefore, 
\begin{align*}
    &\mathbb{P}\left(  \left\vert \mathcal{W}_{2}\left(  \mu_{n},\mathscr{P}_{\preceq \nu_{m}}^{\text{cx}}\right)  -\mathcal{W}_{2}\left(  \mu,\mathscr{P}_{\preceq\nu}^{\text{cx}}\right)  \right\vert \geq t\right)\\
    &\leq \mathbb{P}\left(  \left\vert \mathcal{W}_{2}\left(  \mu_{n},\mathscr{P}_{\preceq\nu_{m}}^{\text{cx}}\right)  -\mathbb{E}\left[\mathcal{W}_{2}\left( \mu_{n},\mathscr{P}_{\preceq\nu_{m}}^{\text{cx}}\right)  \right]  \right\vert \geq\frac{t}{2}\right).
\end{align*}
The conclusion follow from \Cref{lem: concentration}.
\end{proof}

\section{Proofs of section~\ref{sec:computation}}\label{app: sec:computation}

\begin{proof}[Proof of \Cref{lemma : barycentric characterization}]
Let us first prove that the set in the righ-hand side of \eqref{eq : barycenter condition} is contained in $\mathscr{P}^{cx}_{\preceq \nu_m}$. Pick an arbitrary convex function $\varphi$ (that is, $\varphi \in \mathcal{A}$). For each $\omega \in \mathcal{P}(\Delta_m)$ let $\eta = (T)_{\#}(\omega)$.  It is straightforward that
\begin{align*}
    \int_{\mathbb{R}^d} \varphi(y) d\eta(y) 
    &= \int_{\Delta_m} \varphi(T(\boldsymbol{\alpha})) d\omega(\boldsymbol{\alpha})\\
    &= \int_{\Delta_m} \varphi\left(\sum_{j=1}^m \alpha_j y_j\right) d\omega(\boldsymbol{\alpha})\\
    &\leq \int_{\Delta_m} \sum_{j=1}^m \alpha_j \varphi( y_j) d\omega(\boldsymbol{\alpha})
\end{align*}
where the inequality follows by Jensen's inequality. Hence, it follows from the condition on the right in \eqref{eq : barycenter condition} that
\[
    \int_{\mathbb{R}^d} \varphi(y) d\eta(y) \leq \int_{\Delta_m} \sum_{j=1}^m \alpha_j \varphi( y_j) d\omega(\boldsymbol{\alpha}) = \sum_{j=1}^m \varphi( y_j) \int_{\Delta_m} \alpha_j  d\omega(\boldsymbol{\alpha})= \int_{\mathbb{R}^d} \varphi(y) d\nu_m(y).
\]

For the converse direction, let $\eta \in \mathscr{P}^{cx}_{\preceq \nu_m}$. Then, there exists a martingale coupling $\pi$ between $\eta$ and $\nu_m$. By the disinegration theorem,
\[
    d\pi(y, y') = dp(y' | y) d\eta(y)
\]
where $p(y' | y)$ is a probability distribution over $\spt(\nu_m)=\{y_1, \dots, y_m\}$. In other words,
\[
    p(y' | y) = \sum_{i=1}^m p(y_j |y) \delta_{y_j}.
\]
Furthermore, since $\pi$ is a martingale coupling between $\eta$ and $\nu_m$,
\[
    \int_{\mathbb{R}^d} y' d p(y' | y) = \sum_{i=1}^m y_j p(y_j |y)  = y
\]
$\eta$-almost surely, and for any $1 \leq j \leq m$,
\[
    \int_{\mathbb{R}^d} p(y_j |y) \delta_{y_j} d\eta(y) = \nu_{m}(y_j) \delta_{y_j}.
\]

Let's define $S : %\text{conv}(\spt(\nu_m)) 
\spt(\eta)
\longrightarrow  \Delta_m$ by $S(y):= (p(y_1 |y), \dots, p(y_m |y))^T$, in $\eta$-almost everywhere sense. Let $\omega:= (S)_{\#}(\eta)$, a probability distribution over $\Delta_m$. It is straightforward that for all $1 \leq j \leq m$,
\[
    \int_{\Delta_m} \alpha_j d\omega(\boldsymbol{\alpha}) =  \int_{\mathbb{R}^d} p(y_j |y) d\eta(y) = \nu_{m}(y_j),
\]
which verifies that $\omega$ satisfies \eqref{eq : barycenter condition}. This completes the proof.
\end{proof}

\begin{proof}[Proof of \Cref{thm : computing projection of mu_n}]
First, the objective function of \eqref{eq: minimization for projection} is quadratic while its feasible set is the set of $\boldsymbol{\alpha}: \{ x_1, \cdots, x_n\} \to \R^m$ satisfying the given constraints. So it is a compact set, e.g. with respect to the uniform topology for functions. Since the feasible set is also convex, due to the strict convexity of the objective function we have a unique minimizer. 
 We will show  below that the two problems \eqref{eq: minimization for projection} and the projection problem
 $\inf_{\eta \in \mathscr{P}^{cx}_{\preceq \nu_m}} \mathcal{W}^2_2 \left( \mu_n, \eta \right)$ have the same optimal solutions (which is unique as we just saw).

Notice from Lemma \ref{thm : characterization of projected measure} that there exists some $1$-Lipschitz map $\Phi: \R^d\to \R^d$, depending on $\mu_n$, such that
\[
    \overline{\mu}_n  = (\Phi)_{\#}(\mu_n),
\]
which achieves $\inf_{\eta \in \mathscr{P}^{cx}_{\preceq \nu_m}} \mathcal{W}^2_2 \left( \mu_n, \eta \right)$. A point is that  $\overline{\mu}_n$ is supported on the discrete set $\{\Phi(x_1), \cdots, \Phi(x_n) \}$ and 
\begin{align*}
    \overline{\mu}_n = \sum_{i=1}^n \mu_n (x_i) \delta_{\Phi(x_i)}.
\end{align*}
On the other hand, 
since $\overline{\mu}_n \in \mathscr{P}^{cx}_{\preceq \nu_m}$, by \eqref{eq : barycenter condition}, there exists some $\omega \in \mathcal{P}(\Delta_m)$ such that
\[
    \overline{\mu}_n = (T)_{\#}(\omega)
\]
where $T(\boldsymbol{\alpha}) = \sum_{j=1}^m \alpha_j y_j$, satisfying
\begin{align*}
   %\mathbb{E}_\omega \alpha_j = 
   \int_{\Delta_m} \alpha_j d \omega(\boldsymbol{\alpha}) = \sum_{i=1}^n \alpha_j(x_i) \mu_n(x_i) = \nu_m(y_j).
\end{align*}
We can choose for each $\Phi(x_i)$, the corresponding $\boldsymbol{\alpha}= \boldsymbol{\alpha} (x_i)$
with $\Phi(x_i ) = T(\boldsymbol{\alpha}(x_i)).$
Then, $\overline{\mu}_n$ is represented as 
\begin{align}\label{eqn:bar-mu-n-delta}
    \overline{\mu}_n = \sum_{i=1}^n \mu_n(x_i) \delta_{T(\boldsymbol{\alpha}(x_i))}.
\end{align}
Thus the optimization problem for finding the projection problem $\inf_{\eta \in \mathscr{P}^{cx}_{\preceq \nu_m}} \mathcal{W}^2_2 \left( \mu_n, \eta \right)$ is restricted to those of $\overline{\mu}_n$ represented as \eqref{eqn:bar-mu-n-delta} satisfying the conditions given in \eqref{eq : barycenter condition} which are nothing but the constraints of  \eqref{eq: minimization for projection}. 
In other words, the feasible set of \eqref{eq: minimization for projection} contains all the optimizers of $\inf_{\eta \in \mathscr{P}^{cx}_{\preceq \nu_m}} \mathcal{W}^2_2 \left( \mu_n, \eta \right)$. On the other hands, the feasible set of \eqref{eq: minimization for projection} is obviously contained in the feasible set of  $\inf_{\eta \in \mathscr{P}^{cx}_{\preceq \nu_m}} \mathcal{W}^2_2 \left( \mu_n, \eta \right)$, because of 
\eqref{eq : barycenter condition}. 
Now notice that the objective functions of these two problems are the same since  
$\mathcal{W}^2_2 \left( \mu_n, \overline{\mu}_n \right)$ is exactly
\[
    \mathcal{W}^2_2 \left( \mu_n, \overline{\mu}_n \right)
    %= \sum_{i=1}^n \mu_n(x_i) |x_i - \Phi(x_i)|^2
    =  \sum_{i=1}^n \mu_n(x_i) \left( \sum_{j=1}^m \alpha^*_j(x_i) y_j - x_i \right)^2.
\]
Therefore we conclude that the two problems  problem  \eqref{eq: minimization for projection} and  
$\inf_{\eta \in \mathscr{P}^{cx}_{\preceq \nu_m}} \mathcal{W}^2_2 \left( \mu_n, \eta \right)$ have exactly the same optimal solutions; which is unique as we already discussed above. This concludes the proof.
\end{proof}
\end{appendix}
%\end{supplement}

%\begin{supplement}
%\stitle{Algorithms and experiments}
%\sdescription{Algorithms and experiments}
%
%\end{supplement}

\end{document}